\pdfoutput=1
\documentclass[12pt]{article}
\usepackage[hmargin=1in,top=1in,bottom=1.25in]{geometry}

\usepackage{style/qstyle}
\usepackage[normalem]{ulem}

\renewcommand{\emph}[1]{\textbf{#1}}
\renewcommand{\mathbf}[1]{\textit{\textbf{#1}}}

\setcounter{tocdepth}{1}

\title{Existence and nonexistence of commutativity gadgets for entangled CSPs}

\author[1]{Eric Culf}
\author[2,3]{Josse van Dobben de Bruyn}
\author[4]{Matthijs Vernooij}
\author[5]{Peter Zeman}

\affil[1]{\small{Department of Applied Mathematics and Institute for Quantum Computing, University of Waterloo, Canada, \texttt{eculf@uwaterloo.ca}}}
\affil[2]{Department of Applied Mathematics and Computer Science, Technical University of Denmark, Denmark}
\affil[3]{Department of Applied Mathematics, Faculty of Mathematics and Physics, Charles University, Czech Republic, \texttt{josse.van-dobben-de-bruyn@matfyz.cuni.cz}}
\affil[4]{Delft Institute of Applied Mathematics, TU Delft, The Netherlands, \texttt{m.n.a.vernooij@tudelft.nl}}
\affil[5]{Department of Algebra, Faculty of Mathematics and Physics, Charles University, Czech Republic, \texttt{peter.zeman@matfyz.cuni.cz}}

\date{9 September 2025}

\hyphenation{archi-me-dean}

\begin{document}

\maketitle

\begin{abstract}    
    Commutativity gadgets allow NP-hardness proofs for classical constraint satisfaction problems (CSPs) to be carried over to undecidability proofs for the corresponding entangled CSPs.
    This has been done, for instance, for NP-complete boolean CSPs and 3-colouring in the work of Culf and Mastel.
    For many CSPs over larger alphabets, including $k$-colouring when $k \geq 4$, it is not known whether or not commutativity gadgets exist, or if the entangled CSP is decidable.
    In this paper, we study commutativity gadgets and prove the first known obstruction to their existence.
    We do this by extending the definition of the quantum automorphism group of a graph to the quantum endomorphism monoid of a CSP, and showing that a CSP with non-classical quantum endomorphism monoid does not admit a commutativity gadget.
    In particular, this shows that no commutativity gadget exists for $k$-colouring when $k \geq 4$. However, we construct a commutativity gadget for an alternate way of presenting $k$-colouring as a nonlocal game, the oracular setting.
    
    Furthermore, we prove an easy to check sufficient condition for the quantum endomorphism monoid to be non-classical, extending a result of Schmidt for the quantum automorphism group of a graph, and use this to give examples of CSPs that do not admit a commutativity gadget. We also show that existence of oracular commutativity gadgets is preserved under categorical powers of graphs; existence of commutativity gadgets and oracular commutativity gadgets is equivalent for graphs with no four-cycle; and that the odd cycles and the odd graphs have a commutative quantum endomorphism monoid, leaving open the possibility that they might admit a commutativity gadget.
\end{abstract}

\newpage

\setcounter{tocdepth}{2}
\tableofcontents

\newpage

\section{Introduction}

\emph{Constraint satisfaction problems (CSPs)} capture some of the most fundamental computational problems, including linear equations, graph colourings, and variants and generalizaitons of satisfiability.
They are specified by a fixed \emph{template}, which is a relational structure with finite relational signature.
A~\emph{signature} is a tuple of relational symbols each of which has an associated arity,
and a \emph{relational structure}~$A$ in that signature consists of an alphabet~$A$ and an interpretation of each relational symbol in the signature as relations over tuples of length given by the arity, say $R^A\subseteq A^2$ and $S^A\subseteq A^3$. A canonical example of a relational structure is a \emph{graph}, which can be presented as a relational structure with a single symmetric irreflexive relation of arity~$2$. An \emph{instance} of the CSP over $A$, denoted $\CSP(A)$, consists of a finite set of variables and a list of constraints about those variables given by relational symbols from the signature, \textit{e.g.}, $R(x_1,x_2),S(x_4,x_2,x_3),R(x_1,x_4)$.
The goal is to decide whether the variables can be assigned values from $A$ so that all the constraints are simultaneously satisfied.
An equivalent formalization of $\CSP(A)$, which is the one we use in this work, is as the membership problem for the class of all finite relational structures $X$ admitting a \emph{homomorphism} to $A$ --- a map $X\rightarrow A$ that preserves all the relations.

The strongest results are known for finite-alphabet CSPs, \textit{i.e.} the variables of an instance can take only finitely many values.
The first main result in this direction is \emph{Schaefer's dichotomy theorem}~\cite{Sch78} for Boolean CSPs, \textit{i.e.} the variables can take only two values. It classifies the complexity of all boolean CSPs as either in $\tsf{P}$ or $\tsf{NP}$-complete. Many important CSPs are Boolean, for example 3SAT, written over the boolean alphabet $\Z_2=\{0,1\}$, which has eight relations $R_{\mathbf{a}}^{\mathrm{3SAT}}=\Z_2^3\backslash\{\mathbf{a}\}$ for all $\mathbf{a}\in\Z_2^3$, corresponding to the formulae $x\lor y\lor z$, $x\lor y\lor\lnot z$, \textit{etc.}; or LIN, written here over the boolean alphabet $\{+1,-1\}$, which has two relations $R_{\pm}^{\mathrm{LIN}}=\set*{\mathbf{a}\in\{+1,-1\}^3}{a_1a_2a_3=\pm 1}$, corresponding to the linear relations.

If we fix the relational structure to be a simple undirected graph $H$, the problem $\CSP(H)$ becomes equivalent to the problem of deciding the existence of a graph homomorphism $G\to H$.
The \emph{Hell--Ne\v{s}et\v{r}il theorem}~\cite{HN90} gives a dichotomy for this case: testing whether $G\to H$ is solvable in polynomial time if  $H$ is bipartite, otherwise the problem is $\tsf{NP}$-complete.
Decades of research on finite-alphabet CSPs culminated in the landmark result known as the \emph{CSP dichotomy theorem}~\cite{Bul17,Zhu17}, which states that every finite-alphabet CSP is either in $\tsf{P}$ or $\tsf{NP}$-complete. This theorem was proved only recently and unifies both of the aforementioned results.
The dichotomy is driven by algebraic properties of $A$: the problem $\CSP(A)$ is in $\tsf{P}$ if and only if $A$ admits a \emph{weak near unanimity polymorphism}, \textit{i.e.} a certain kind of map $A^n\to A$ that is compatible with all the relations of the relational structure $A$.

The origins of entangled CSPs trace back to the same period as classical CSPs.
The Mermin--Peres magic square~\cite{Mer90,Per90a} is a paradigmatic example of a quantum advantage in CSPs. In the language of relational structures, the Mermin--Peres magic square game is the linear constraint system corresponding to homomorphisms $X\rightarrow\mathrm{LIN}$, where the alphabet of the relational structure~$X$ is $X=\{x_1,\dots, x_9\}$ with relations $R_+^{X}=\{(x_1,x_2,x_3), (x_4,x_5,x_6), (x_7,x_8,x_9)\}$ and $R_-^{X}=\{(x_1,x_4,x_7), (x_2,x_5,x_8), (x_3,x_6,x_9)\}$. This can be visualised as having the variables arrayed on a $3\times 3$ grid where the row products are $+1$ and the column products are $-1$. However, the system of equations has no solution over the alphabet $\{-1,+1\}$, \textit{i.e.} the space of homomorphisms $X\rightarrow\mathrm{LIN}$ is empty. However, Mermin showed that the system has an \emph{operator solution} consisting of linear operators on a four-dimensional Hilbert space.
Mermin used this construction to prove the Bell--Kochen--Specker theorem on the impossibility of explaining quantum mechanics via hidden variables.

To implement the operator solution, the Mermin--Peres magic square can be rephrased as a \emph{nonlocal game}. In general, nonlocal games, which are a development of Bell inequalities from the physics literature, provide a mathematical framework to study quantum entanglement. In a nonlocal game, two players usually called Alice and Bob, who may not communicate but may come up with a strategy in advance, try to correctly respond to questions given to each by a referee. There are a variety of (often but not always equivalent) ways to present a CSP as a nonlocal game. In the \emph{constraint-constraint game} of $X\rightarrow A$, both Alice and Bob get an element of a relation of $X$ (an equation) and they must respond with an element from the corresponding relation of $A$ (a satisfying assignment). They win if their assignments agree on the variables which appear in both equations.
Alice and Bob can use a \emph{classical strategy} (deterministic or probabilistic), or a \emph{quantum strategy}, in which the players determine their answers by performing independent measurements on a shared entangled state.
The goal of Alice and Bob is to maximize the winning probability. There is a perfect classical strategy for the game if and only if there is a (classical) homomorphism $X\rightarrow A$. 
The power of entanglement is revealed when the maximum winning probability achievable by a quantum strategy is strictly larger than that of a classical strategy. Another presentation as a nonlocal game relevant to this work is the \emph{constraint-variable game}, where Alice gets a tuple as above but Bob gets a single variable from that tuple, and, to win, the assignment to the variable that Bob responds with must agree with the satisfying assignment that Alice responds with. A final presentation, the \emph{assignment game}, valid only for arity-2 CSPs, is the one where Alice and Bob each receive one of the two variables from an element of a relation and need to respond with assignments that together are satisfying.

Aravind~\cite{Ara04} showed that the operator solution for the Mermin--Peres magic square corresponds to a perfect quantum strategy of the associated constraint-constraint nonlocal game, which has no perfect classical strategy.
In a similar fashion, one can associate a nonlocal game to the CSP of any relational structure $A$, which gives rise to its \emph{entangled CSP}. Cleve and Mittal~\cite{CM14a} showed that perfect quantum strategies for boolean constraint systems correspond to operator assignments in a similar way; and Abramsky, Barbosa, de Silva, and Zapata~\cite{ABdSZ17} extended this to general relational structure homomorphisms.

As for classical CSPs, one can ask about the decision complexity of entangled CSPs. In general, the complexity of deciding if there is a perfect quantum strategy for a nonlocal game is much harder than for classical strategies. In fact, due to Ji, Natarajan, Vidick, Wright, and Yuen~\cite{JNV+21}, it is \emph{undecidable} in the following strong way: there is a polynomial-time reduction from the halting problem to the problem of deciding whether, given a succinct description of a nonlocal game, there is a perfect quantum strategy or every quantum strategy wins with probability less than $\frac{1}{2}$, given that one of the two holds. The \emph{halting problem} is the problem of deciding whether a given Turing machine halts, and is the prototypical example of an undecidable problem~\cite{Tur36}; and \emph{succinctly presented nonlocal games} are those where the referee has efficient probabilistic algorithms to sample questions and decide correctness of answers, but may not be able to efficiently access the table of values for these functions. 

The undecidability of nonlocal games extends, in some cases, to the level of entangled CSPs. Culf and Mastel~\cite{CM24} that, for any boolean relational structure $A$ such that $\CSP(A)$ is $\tsf{NP}$-complete, there is a polynomial-time reduction from the halting problem to the problem of deciding whether, for a succinctly-presented instance $X$ of $\CSP(A)$, there is a perfect quantum strategy for the game $X\rightarrow A$ (in either the constraint-constraint or constraint-variable model) or whether every quantum strategy wins with probability $<s$ (for some fixed $s<1$ depending on $A$), given that one of the two holds. Conversely, for every boolean relational structure $A$ whose classical CSP is in $\tsf{P}$ except for LIN, Atserias, Kolaitis, and Severini~\cite{AKS17} show that, give a relational structure $X$, the existence of a perfect quantum strategy for the $X\rightarrow A$ game can be decided in polynomial time. Hence, for succinctly presented relational structures, this can be decided in exponential time, which is computable. In the remaining boolean case, linear constraint systems LIN, the situation remains unclear: Slofstra~\cite{Slo19} shows that deciding whether there exists a perfect quantum strategy is undecidable, but the decidability in the gapped case, as studied in~\cite{CM24}, is unknown. Beyond boolean CSPs, even less is known. For example, due to \cite{CM24}, the complexity of entangled $3$-colouring is known to be undecidable in the same way as for boolean CSPs above, in all three of the constraint-constraint, constraint-variable, and assignment models; but nothing is known for entangled $k$-colouring with $k>3$. On the other hand, due to Bulatov and \v{Z}ivn\'y~\cite{BZ25}, the entangled CSP problem for any CSP of bounded width --- classically solvable in polynomial time by the local consistency checking algorithm~\cite{BK09}, such as homomorphism to a bipartite graph --- is also solvable in polynomial time.

The current proof techniques to show undecidability of entangled CSPs are quantisations of the primitive positive reductions (or gadget reductions) used to show $\tsf{NP}$-completeness of classical CSPs. They rely on being able to express constraints in the original constraint systems as conjunctions of constraints in the constraint system being reduced to. To extend this reduction to the entangled setting requires two components. First, we need to be able to split these conjunctions of constraints up into the constituent constraints so that they can be asked separately. This was shown to be generically possible by Mastel and Slofstra~\cite{MS24}. Second, we need the variables in the subconstraints within these conjunctions to commute, in order to permit only classical satisfying assignments at this level. This, however, does not seem to be generically possible. To achieve this commutation, prior work relies on \emph{commutativity gadgets}, which are constraint systems which do not constrain the classical values a pair of variables can take, but force them to commute in any quantum assignment. The idea of a commutativity gadget was introduced by Ji~\cite{Ji13}, who constructed a commutativity gadget for $3$-colouring and $1$-in-$3$-SAT in order to show gapless reductions between entangled CSPs. \cite{CM24} extended this to commutativity gadgets for all boolean CSPs and to gapped reductions.

The aim of this work is to study the existence of commutativity gadgets, and hence the possibility of generically quantising classical CSP reductions. As a starting point, let us consider the case of $k$-colouring. It has been shown in \cite{Ji13} that a commutativity gadget exists if $k\leq 3$. Here, we show that it cannot exist for $k\geq 4$ by establishing a link to the quantum permutation group $S_k^+$. It is well-known that $S_k^+$ is non-classical for $k\geq 4$ \cite{Wan98}, and we show that these quantum symmetries prevent the existence of commutativity gadgets.

We explore this line of reasoning for arbitrary CSPs. This requires us to quantise the endomorphism monoids of relational structures. We distinguish between \emph{quantum endomorphism monoids} and \emph{oracular quantum endomorphism monoids} (precisely defined in Definition \ref{def:quantum-endomorphism-monoid}), which reflect the difference between quantum automorphism groups as defined by Banica~\cite{Ban05} and Bichon~\cite{Bic03}, respectively. Using this new framework, we reach the following no-go theorem:
\begin{theorem*} [See Theorem \ref{thm:nogo}]
    \interlinepenalty=99
    Let $A$ be a relational structure. 
    \begin{enumerate}[(a)]
        \item If $A$ admits a non-classical quantum endomorphism, then no commutativity gadget exists for~$A$.
        \item If $A$ admits a non-classical oracular quantum endomorphism, then no oracular commutativity gadget exists for $A$.
    \end{enumerate}
    \interlinepenalty=0
\end{theorem*}

To rule out the existence of (oracular) commutativity gadgets using this theorem, one needs to be able to understand the appropriate quantum endomorphism monoid
of the relational structure $A$. Quantum automorphism groups have already proven to be objects of interest in the case of graphs \cite{Sch18, Sch20, RS22a, BB07}. Clearly, the no-go theorem implies that any graph with a non-classical quantum automorphism group does not allow for a commutativity gadget. Furthermore, some of the methods used in the analysis of quantum automorphism groups can be translated to quantum endomorphism monoids as well. In particular, by introducing a new technical condition called weak adjacency congruence (WAC), we are able to prove the following criterion, inspired by \cite{Sch20}, for the existence of non-classical quantum endomorphisms for a graph:

\begin{theorem*}[See Theorem \ref{thm:endo-schmidt}]
    Let $G$ be a graph. Suppose that $f$ and $g$ are non-trivial endomorphisms.
    \begin{enumerate}[(a)]
        \item If $f$ and $g$ have disjoint supports and are WAC, then $G$ admits non-classical quantum endomorphisms.
        \item If $f$ and $g$ have disconnected supports, then $G$ admits non-classical oracular quantum endomorphisms.
    \end{enumerate}
\end{theorem*}

Combining the above theorems gives one a tool to conclude the non-existence of commutativity gadgets for graphs based on purely classical properties of these graphs. We use it to find a graph that admits neither a commutativity gadget nor an oracular commutativity gadget. Currently, we are not aware of similar general criteria that can be used to prove that all quantum endomorphisms are classical or a graph admits a commutativity gadget. Instead, we analyse several concrete classes of graphs to establish examples that exhibit these properties.

While we have shown that $k$-colouring does not admit a commutativity gadget, we are able to construct an oracular commutativity gadget for it. This implies that oracular entangled $k$-colouring is undecidable for $k\geq 4$ using some results implicit in~\cite{CM24}: see \cref{sec:comm-gadgets-complexity}. We prove that an oracular commutativity gadget for a graph $G$ works for the categorical power $G^{\times m}$ as well, giving us oracular commutativity gadgets for $K_n^{\times m}$. For other kinds of graphs, the odd cycles and the so-called odd graphs, we are able to show that all quantum endomorphisms are classical, by proving that all quantum endomorphisms are in fact quantum automorphisms. We also show that for any graph without a length four cycle, any oracular commutativity gadget is a commutativity gadget.

\paragraph{Outline}

In \cref{sec:prelims}, we give a detailed overview of the background needed for this article, reviewing the necessary concepts from graph theory, algebra, nonlocal games, and complexity theory. In the next section, \cref{sec:quantum-homs}, we develop the concepts of quantum endomorphism monoids and oracular quantum endomorphism monoids, and capture them in a single categorical framework. \cref{sec:comm-gadgets} introduces the different notions of commutativity gadgets and the relations between them, details some of their implications for CSPs, and proves the no-go theorem. Lastly, \cref{sec:examples} investigates graph CSPs using the previously developed tools. In the appendix, we show in \cref{sec:an-attempt} that a natural generalisation of the commutativity gadget for the 3-cycle does not yield a commutativity gadget for larger odd cycles; and in \cref{sec:bipartite} that the entangled CSP of a bipartite graph is in $\tsf{P}$ in an alternate way to~\cite{BZ25}.

\paragraph{Open questions}
While we address many different aspects of commutativity gadgets, this work is the first systematic investigation of commutativity gadgets of arbitrary CSPs, and many questions remain. Here we introduce a few of them, and they will be restated more precisely in appropriate locations in the main text.
\begin{enumerate}
	\item One can define commutativity gadgets to capture approximate homomorphisms in different ways: robust commutativity gadgets satisfy the minimum requirements needed to capture this, while algebraic commutativity gadgets, based on the game algebra, do not but are easier to work with. We know that algebraic commutativity gadgets are automatically robust commutativity gadgets, but we do not know if the reverse implication is true. In other words, are there (robust) commutativity gadgets that are not algebraic commutativity gadgets?
	\item Intuitively, commutativity gadgets impose commutation relations between two PVMs without imposing any relations on the classical assignments to those PVMs. This means that for every possible classical assignment to these two PVMs, there must be a quantum assignment to the commutativity gadget that agrees with it. For every known CSP with a commutativity gadget, this quantum assignment can be chosen to be classical. Are there examples of CSPs that only admit commutativity gadgets where for some classical values of the PVMs no classical assignment exists? A reduction using such a commutativity gadget would not necessarily be a valid reduction for a classical CSP.
	\item We show that oracular entangled $k$-colouring is undecidable for $k\geq 4$. Does the same hold for (non-oracular) entangled $k$-colouring?
	\item Using our criterion for the existence of non-classical quantum endomorphisms, we are able to construct an example of an NP-complete CSP with alphabet size four that does not have an oracular commutativity gadget. When the alphabet size is two for an NP-complete CSP, it is known that a commutativity gadget always exists~\cite{CM24}. However, for alphabet size three this is still an open problem: does there exist an NP-complete CSP with alphabet size three that does not allow for an (oracular) commutativity gadget?
	\item For odd cycles and odd graphs, we have shown that all quantum endomorphisms are classical. Consequently, our no-go theorem does not apply, but this does not automatically give us a construction of a commutativity gadget. A natural, but unsuccessful, candidate of a commutativity gadget can be found in \cref{sec:an-attempt}. Therefore, the question remains: can one construct a commutativity gadget for these graphs?
\end{enumerate}

\paragraph{Related work} There has been a wide range of work studying quantum homomorphisms of relational structures. This idea was first introduced in the context of graphs by Man\v{c}inska and Roberson~\cite{MR16}, via the assignment nonlocal game, and extended to the setting of general relational structures by Abramsky, Barbosa, de Silva, and Zapata~\cite{ABdSZ17}, via the constraint-variable game. Ciardo~\cite{Cia24}, and Bulatov and \v{Z}ivn\'y~\cite{BZ25} studied connections between the complexity of classical CSPs and quantum satisfiability. The connection between quantum isomorphisms of graphs and quantum groups was first investigated by Lupini, Man\v{c}inska, and Roberson~\cite{LMR20}, who considered the non-oracular setting. An extension of oracular quantum automorphism groups to directed hypergraphs --- a different way to extend graphs than to relational structures --- was studied by Faro\ss{}~\cite{Far24}.

The notion of the quantum core of a graph was initially studied by Ortiz and Paulsen~\cite{OP16}. They show that every graph has a minimal idempotent quantum endomorphism, which they term the quantum core of the graph. This is analogous to the classical case, wherein the core is the image of a minimal idempotent graph endomorphism. We make use of the special case where the idempotent quantum endomorphism is the identity, which is equivalent to all quantum endomorphisms being automorphisms, which quantises the idea of a graph being a core (\textit{i.e.} its own core).

The complexity of specific classes of nonlocal games has been an area of recent active study. Man\v{c}inska, Spaas, Spirig, and Vernooij~\cite{MSSV25} have shown that the entangled independent set game, which is classically easy, is undecidable. Also, Taller and Vidick~\cite{TV25} have shown that entangled linear system games are undecidable, but without perfect completeness.

\paragraph{Acknowledgements}%
EC was supported by a CGS D scholarship from Canada's NSERC.
JvDdB was supported by the Carlsberg Foundation Young Researcher Fellowship CF21-0682 ``Quantum Graph Theory''. MV was supported by the NWO Vidi grant VI.Vidi.192.018 `Non-commutative harmonic analysis and rigidity of operator
algebras'.
PZ was funded by the European Union (ERC, POCOCOP, 101071674). Views and opinions expressed are however those of the authors only and do not necessarily reflect those of the European Union or the European Research Council Executive Agency. Neither the European Union nor the granting authority can be held responsible for them.

\section{Preliminaries}\label{sec:prelims}

\subsection{Notation}

We denote the cyclic group modulo $n$ by $\Z_n$ and denote its equivalence classes by the standard representatives $\{0,\ldots,n-1\}$. We write $[n]=\{1,\ldots,n\}\subseteq\N$.

Given a set $X$, we write elements of $X^n$ as $\mathbf{x}=(x_1,\ldots,x_n)$. For strings $x\in X^n$, write $|x|=n$ for the length of $x$.

We only work with probability distributions on finite sets, so we can represent a probability distribution $\pi$ on a finite set $X$ by a function $\pi:X\rightarrow[0,1]$ such that $\sum_{x\in X}\pi(x)=1$. The \emph{uniform distribution} on a finite set $X$, defined as $x\mapsto\frac{1}{|X|}$, is denoted $\mbb{u}_X$. The subscript is dropped when the set $X$ is clear from context.

For a Hilbert space $H$, write $B(H)$ for the algebra of bounded linear operators $H\rightarrow H$. A \emph{positive operator-valued measure (POVM)} is a finite set of positive operators $\{P_i\}_{i\in I}$ in $B(H)$ such that $\sum_{i\in I}P_i=1$. A \emph{projection} is an element $P\in B(H)$ such that $P=P^\ast=P^2$. A \emph{projection-valued measure (PVM)} is a POVM $\{P_i\}_{i\in I}$ such that $P_i$ is a projection for all $i$.

In Hilbert spaces, we use bra-ket notation. In $\C^2$, we denote a canonical orthonormal basis, called the \emph{computational basis}, by $\{\ket{0},\ket{1}\}$. We write the \emph{Hadamard basis} $\{\ket{+},\ket{-}\}$, where $\ket{+}=\frac{1}{\sqrt{2}}\parens*{\ket{0}+\ket{1}}$ and $\ket{-}=\frac{1}{\sqrt{2}}\parens*{\ket{0}-\ket{1}}$.

We write the \emph{commutator} of two elements $x,y$ in an algebra as $[x,y]=xy-yx$.

\subsection{Graphs}

We work with finite simple undirected loop-free graphs. In this case a \emph{graph} $G=(V(G),E(G))$ where $V(G)$ is a finite set, called the \emph{vertices}, and $E(G)\subseteq\set*{\{x,y\}}{x,y\in V(G)\land x\neq y}$, called the \emph{edges}. The edges induce a symmetric irreflexive relation on $V(G)$ , which we denote by $x\sim_G y$ iff $\{x,y\}\in E(G)$. We often write $x\in G$ to mean $x\in V(G)$. $G$ is \emph{bipartite} if there exists a partition $V(G)=A\cup B$ such that $x\nsim_Gy$ if $x,y\in A$ or $x,y\in B$.

A \emph{walk} of length $n$ in $G$ is a tuple $\mathbf{x}=(x_1,\ldots,x_{n+1})\in V(G)^{n+1}$ such that $x_i\sim_G x_{i+1}$ for all $i\in[n]$. We say $\mathbf{x}$ is a walk from $x$ to $y$ if $x_1=x$ and $x_{n+1}=y$. The \emph{distance} between $x$ and $y$ is the length of the shortest walk from $x$ to $y$; this is denoted $d(x,y)$ and it is a metric on $G$. The \emph{diameter} of $G$ is the maximum distance between two vertices. We say $\mathbf{x}$ is \emph{cyclic} if $x_1=x_{n+1}$. We say $\mathbf{x}$ is a \emph{path} if $x_i\neq x_j$ for all $i\neq j$; a \emph{cyclic path}, or \emph{cycle}, is a walk of length $n>2$ where the above holds except that $x_1=x_{n+1}$. The \emph{girth} of a graph is the length of the shortest cyclic path. We also use the \emph{even and odd girths} which are with respect to paths of even and odd length, respectively; and the \emph{walk girth}, which is the girth with respect to cyclic walks of length $>0$. The \emph{odd walk girth} is the girth with respect to cyclic walks of odd length; however, a notion of even walk girth is not sensible, as this would be $2$ for every graph that has an edge.

The \emph{complement} of a graph $G$ is the graph $\overline{G}$ with the same vertices and edges $x\sim_{\overline{G}}y$ if and only if $x\neq y$ and $x\nsim_G y$.

A \emph{graph product} is a way to define a graph on the vertex set $V(G)\times V(H)$ for graphs $G$ and $H$. We work with the \emph{box product} (or Cartesian product), denoted $G\Boxprod H$ with edges $(x,y)\sim_{G\Boxprod H}(x',y')$ iff $x=x'$ and $y\sim_H y'$ or $x\sim_G x'$ and $y=y'$; and with the \emph{categorical product} (or tensor product), denoted $G\times H$ with edges $(x,y)\sim_{G\times H}(x',y')$ iff $x\sim_G x'$ and $y\sim_H y'$.

We work with some standard examples of graphs. The \emph{complete graph} on $n$ vertices is the graph $K_n$ with $V(K_n)=\Z_n$ and $x\sim_{K_n} y$ for all $x\neq y$. The \emph{$n$-cycle} is the graph $C_n$ with $V(C_n)=\Z_n$ and $x\sim_{C_n}y$ iff $x=y\pm 1$. The \emph{path} of length $n$ is the graph $P_n$ with $V(P_n)=[n+1]$ and $x\sim_{P_n}y$ iff $x=y\pm 1$. For $n\geq k$, the \emph{Kneser graph} is the graph $K(n,k)$ with $V(K(n,k))=\set*{u\subseteq[n]}{\abs{u}=k}$ and $u\sim_{K(n,k)}v$ iff $u\cap v=\varnothing$. Some special graphs are Kneser graphs: $K_n=K(n,1)$, the \emph{Petersen graph} is $P=K(5,2)$, and the \emph{odd graphs} are $O_n=K(2n-1,n-1)$. The odd graph $O_n$ has odd girth $2n-1$.

\subsection{Category theory}

A \emph{category} $\mc{C}$ consists of a class of \emph{objects} $\ob(\mc{C})$ and for every pair $a,b\in\ob(\mc{C})$, a class of \emph{morphisms} $\mor_{\mc{C}}(a,b)$, where the morphisms have a composition operation $\circ$ such that for all $f\in\mor_{\mc{C}}(a,b)$ and $g\in\mor_{\mc{C}}(b,c)$, $g\circ f\in\mor_{\mc{C}}(a,c)$, satisfying associativity $h\circ(g\circ f)=(h\circ g)\circ f$ for all $f\in\mor_{\mc{C}}(a,b)$,$g\in\mor_{\mc{C}}(b,c)$, and $h\in\mor_{\mc{C}}(c,d)$ and existence of an identity $\id_a\in\mor_{\mc{C}}(a,a)$ such that $f\circ\id_a=\id_b\circ f=f$ for all $f\in\mor_{\mc{C}}(a,b)$. Where clear from context, we write $a\in\mc{C}$ to mean $a\in\ob(\mc{C})$, $\mor(a,b)$ to mean $\mor_{\mc{C}}(a,b)$, and $f:a\rightarrow b$ to mean $f\in\mor_{\mc{C}}(a,b)$.

An \emph{endomorphism} is a morphism $f:a\rightarrow a$, and we denote the set of endomorphisms by $\enmo(a)=\mor(a,a)$. An \emph{isomorphism} is a morphism $f:a\rightarrow b$ with an inverse morphism $g:b\rightarrow a$ such that $f\circ g=\id_b$ and $g\circ f=\id_a$, and we denote the set of isomorphisms by $\iso(a,b)$. An \emph{automorphism} is an isomorphism from an object to itself, and we write $\aut(a)=\iso(a,a)$.

A \emph{(covariant) functor} $F$ from a category $\mc{C}$ to a category $\mc{D}$ is a map, denoted $F:\mc{C}\rightarrow\mc{D}$ that to each object $a\in\mc{C}$ assigns an object $F(a)\in\mc{C}$, and to each morphism $f:a\rightarrow b$ assigns a morphism $F(f):F(a)\rightarrow F(b)$, such that $F(\id_a)=\id_{F(a)}$ and $F(g\circ f)=F(g)\circ F(f)$. Two categories $\mc{C}$ and $\mc{D}$ are \emph{isomorphic}, denoted $\mc{C}\cong\mc{D}$, if there exist functor $F:\mc{C}\rightarrow\mc{D}$ and $G:\mc{D}\rightarrow\mc{C}$ such that $F\circ G=\Id_{\mc{D}}$ and $G\circ F=\Id_{\mc{C}}$, where $\Id$ denotes the identity functor.

\subsection{Relational structures}
A \emph{relation} over a set $A$ is a subset $R\subseteq A^n$ for some $n\in\N$, called the \emph{arity} of $R$ and denoted~$\ar(R)$. A \emph{relational structure} is a pair $(A,\mc{R})$, where $A$ is a finite set, called the \emph{alphabet}, and $\mc{R}$ is a collection of relations over $A$. A \emph{signature} is a set $\sigma$ such that each $R\in\sigma$ is associated an arity $\ar(R)\in\N$. The elements of $\sigma$ are called \emph{relational symbols}. A \emph{relational structure over~$\sigma$} is a relational structure $(A,\sigma^A)$ with relations $R^A\subseteq A^{\ar(R)}$ for all $R\in\sigma$. Any relational structure $(A,\mc{R})$ can be expressed as a relational structure over a signature, simply by assigning one relational symbol to each element of $\mc{R}$. When the signature or set of relations is clear, we suppress reference to it and refer to the relational structure by its alphabet only.

Let $A$ and $B$ be relational structures over the same signature $\sigma$. Then a \emph{homomorphism} from $A$ to $B$ is a function $f:A\rightarrow B$ such that for all $R\in\sigma$ and $\mathbf{a}\in R^A$, $$f(\mathbf{a})\coloneqq(f(a_1),\ldots,f(a_{\ar(R)}))\in R^B.$$

We denote the \emph{category of relational structures} over a signature $\sigma$ by $\mc{R}(\sigma)$. Here the objects are the relational structures over $\sigma$ and the morphisms $\mor(A,B)$ are the homomorphisms~${f:A\rightarrow B}$.

Graphs provide a standard example of relational structures. Given a graph $G$, it can be presented as the relational structure over $\sigma=\{E\}$ with arity $\ar(E)=2$, with alphabet $V(G)$ and relation $E^G=\set*{(x,y)}{x\sim_G y}$. We implicitly treat graphs as relational structures using this presentation.

Relational structures are useful to encode constraint satisfaction problems. An alternate way to express this other than relational structure homomorphisms is in terms of constraint systems, and there is a bijective correspondence between these two pictures. A \emph{constraint system} $S$ over an alphabet $\Sigma$ consists of a set of variables $X$ and constraints $(\mathbf{v}_i,C_i)$ for $i\in[m]$, where $\mathbf{v}_i\in X^{n_i}$ and $C_i\subseteq \Sigma^{n_i}$. A \emph{satisfying assignment} to $S$ is a function $f:X\rightarrow\Sigma$ such that $f(\mathbf{v}_i)\in C_i$ for all $i$. $S$ induces a pair of relational structures over the same signature as follows. Let $\sigma$ be the partition of $[m]$ given by the equivalence relation $i\sim j$ if $C_i=C_j$. Then, let $A$ be the relational structure over $\sigma$ with alphabet $\Sigma$ and relations $R^A=C_i$ for any $i\in R$; and let $B$ be the relational structure with alphabet $X$ and relations $R^B=\set*{\mathbf{v}_i}{i\in R}$. Then, it follows directly that this correspondence between constraint systems and pairs of relational structures over the same signature is bijective, and that $f$ is a satisfying assignment to $S$ if and only if it is a homomorphism $B\rightarrow A$.

\subsection{\texorpdfstring{$\ast$}{\textasteriskcentered{}}-algebras and \texorpdfstring{$C^\ast$}{\textit{C}*}-algebras}

A unital \emph{$\ast$-algebra} is an algebra $\mc{A}$ over $\C$ (or $\R$) equipped with an antilinear involution $x\mapsto x^\ast$ such that $1^\ast=1$ and $(xy)^\ast=y^\ast x^\ast$. Denote the (real) subspace of hermitian elements by $\mc{A}_h=\set*{x\in\mc{A}}{x^\ast=x}$. A \emph{$\ast$-positive cone} over $\mc{A}$ is a set $\mc{A}_+\subseteq\mc{A}_h$ such that $1\in\mc{A}_+$, $x+y\in\mc{A}_+$ for all $x,y\in\mc{A}_+$, and $axa^\ast\in\mc{A}_+$ for all $x\in\mc{A}_+$ and $a\in\mc{A}$. A $\ast$-positive cone induces an order on $\mc{A}_h$ via $x\geq y$ iff $x-y\in\mc{A}_+$. We say $\mc{A}_+$ is \emph{archimedean} if for all $x\in\mc{A}$, there exists $R\in\R$ such that $x^\ast x\leq R1$; a $\ast$-algebra equipped with an archimedean $\ast$-positive cone is called a \emph{semi-pre-$C^\ast$-algebra}, following the notation of~\cite{Oza13}. A standard choice of $\ast$-positive cone is the \emph{sum-of-squares cone} given by $\mc{A}_+=\set*{\sum_{i=1}^nx_i^\ast x_i}{n\in\N,\;x_i\in\mc{A}}$. Given a set of generators $S$ and relations $R$, we denote the \emph{$\ast$-algebra generated $S$ subject to $R$} as $\gen*{S}{R}=F(S)/\ang*{R}$, where $F(S)$ is the \emph{free algebra} with generators $S$ and $\ang*{R}$ is the two-sided ideal of $F(S)$ generated by $R$.

We say a semi-pre-$C^\ast$-algebra is \emph{hereditary} if $\mc{A}_+\cap(-\mc{A}_+)=\{0\}$. These arise via quotients by hereditary ideals: see~\cite{HMPS19}. Importantly, if $\mc{A}$ is hereditary then when $x_i\in\mc{A}_+$ and $\sum_ix_i=0$, we must have $x_i=0$ for all $i$.

A semi-pre-$C^\ast$-algebra is a \emph{pre-$C^\ast$-algebra} if $x^\ast x\leq\varepsilon 1$ for all $\varepsilon>0$ implies that $x=0$ in $\mc{A}$. Every pre-$C^\ast$-algebra is hereditary. If $\mc{A}$ is a pre-$C^\ast$-algebra, we can define a norm on it by
\begin{align*}
    \norm{x}=\sup\set*{\norm{\pi(x)}}{\pi:\mc{A}\rightarrow B(H)\text{ is a $\ast$-homomorphism}}=\inf\set*{\lambda\in\R_{\geq0}}{x^\ast x\leq\lambda^21}.
\end{align*}
This norm satisfies the properties $\norm{xy}\leq\norm{x}\norm{y}$ and $\norm{x^\ast x}=\norm{xx^\ast}=\norm{x}$. A \emph{$C^\ast$-algebra} is a pre-$C^\ast$-algebra that is complete with respect to this norm. In a $C^\ast$-algebra $\mc{A}$, there is a unique $\ast$-positive cone given by $\mc{A}_+=\set*{x^\ast x}{x\in\mc{A}}$.

Given a semi-pre-$C^\ast$-algebra $\mc{A}$, we can convert it to a pre-$C^\ast$-algebra by taking the quotient by the infinitesimal ideal $I(\mc{A})=\set*{x\in\mc{A}}{x^\ast x\leq\varepsilon 1\;\forall\,\varepsilon\geq 0}$. We call the completion of $\mc{A}/I(\mc{A})$ the \emph{universal $C^\ast$-algebra} of $\mc{A}$, and denote it $C_u^\ast(\mc{A})$. Given a set of generators $S$ and a set of relations $R$, the \emph{$C^\ast$-algebra generated by $S$ subject to $R$} is denoted $C^\ast\!\gen*{S}{R}=C^\ast_u(\gen{S}{R})$.

A \emph{von Neumann algebra} is a unital $\ast$-subalgebra $\mc{M}\subseteq B(H)$ for a Hilbert space $H$ that is closed under the \emph{weak-$\ast$ topology}, defined via convergence as $(x_\lambda)\rightarrow x$ iff $\braket{\psi}{x_\lambda}{\phi}\rightarrow\braket{\psi}{x}{\phi}$ for all $\ket{\psi},\ket{\phi}\in H$. A  von Neumann algebra is \emph{finite} if there exists a faithful tracial state on $\mc{M}$. A finite von Neumann algebra is \emph{Connes-embeddable} if there is an injective $\ast$-homomorphism $\mc{M}\rightarrow\mc{R}^\omega$, where $\mc{R}^\omega$ is an ultrapower of the hyperfinite $\mathrm{II}_1$ factor.

A \emph{state} on a semi-pre-$C^\ast$-algebra $\mc{A}$ is a linear map $\rho:\mc{A}\rightarrow\C$ such that $\rho(1)=1$ and $\rho(x)\geq 0$ for all $x\in\mc{A}_+$. We say $\rho$ is \emph{tracial} if $\rho(xy)=\rho(yx)$ for all $x,y\in\mc{A}$; and we say $\rho$ is \emph{faithful} if $\rho(x)=0$ implies $x=0$ for all $x\in\mc{A}_+$ The existence of a faithful state automatically guarantees that $\mc{A}$ is a pre-$C^*$-algebra, since the infinitesimal ideal is trivial. Tracial states are usually denoted by $\tau$. Any state induces a seminorm $\norm{a}_\rho=\rho(a^\ast a)$ called the \emph{$\rho$-norm}. Every state $\rho$ induces a \emph{GNS representation}, which is a $\ast$-representation $\pi:\mc{A}\rightarrow B(H)$ such that there is a unit vector $\ket{\psi}\in H$ satisfying $\rho(x)=\braket{\psi}{\pi(x)}{\psi}$. A state is called \emph{finite-dimensional} if it admits a finite-dimensional GNS representation; a state is called \emph{Connes-embeddable} if it admits a GNS representation where the von Neumann algebra generated by $\pi(\mc{A})$ is Connes-embeddable. A \emph{character} is a state that is a $\ast$-representation.

The \emph{tensor product} of $\ast$-algebras $\mc{A}$ and $\mc{B}$ is denoted $\mc{A}\otimes\mc{B}$, or sometimes $\mc{A}\odot\mc{B}$ in order to emphasize that we are not taking a completion. We implicitly use the isomorphism $\C\otimes\mc{A}\cong\mc{A}\otimes\C\cong\mc{A}$. For $C^\ast$-algebras $\mc{A}$ and $\mc{B}$, there are a variety of (generally inequivalent) ways to complete $\mc{A}\odot\mc{B}$ to a $C^\ast$-algebra. Notably, we make use of the \emph{minimal tensor product} (or min-tensor product), which is the completion $\mc{A}\otimes_{\min}\mc{B}$ of $\mc{A}\odot\mc{B}$ with respect to the norm $\norm{x}_{\min}=\norm{(\pi_A\otimes\pi_B)(x)}$ for $\pi_A$ and $\pi_B$ faithful $\ast$-representations of $\mc{A}$ and $\mc{B}$, respectively; and the \emph{maximal tensor product} (or max-tensor product), which is the completion $\mc{A}\otimes_{\max}\mc{B}$ of $\mc{A}\odot\mc{B}$ with respect to the norm $\norm{x}_{\max}=\sup\set*{\norm{\pi(x)}}{\pi\text{ $\ast$-representation of }\mc{A}\odot\mc{B}}$. For von Neumann algebras $\mc{M}\subseteq B(H)$ and $\mc{N}\subseteq B(K)$, however, there is a unique way to complete $\mc{M}\odot\mc{N}$ to a von Neumann algebra: $\mc{M}\overline{\otimes}\mc{N}$ is the closure of $\mc{M}\odot\mc{N}$ in $B(H\otimes K)$.

\subsection{Quantum spaces and quantum groups}

By Gelfand duality, compact spaces are in bijection with commutative $C^\ast$-algebras, via the map $X\mapsto C(X)$, where $C(X)$ is the $C^\ast$-algebra of continuous functions $X\rightarrow\C$ with the norm $\norm{f}=\max\set*{|f(x)|}{x\in X}$. By analogy, we can see a noncommutative $C^\ast$-algebra as the algebra of continuous functions on a virtual \emph{compact quantum space} $X$, and denote it $C(X)$. A compact quantum space $Y$ is a \emph{quantum subspace} of $X$ if there is a surjective $\ast$-homomorphism $C(X)\rightarrow C(Y)$. A natural way to construct a quantum subspace is with respect to a set of relations $R\subseteq C(X)$: then, the surjection is simply the quotient by the two-sided ideal generated by the relations.

A \emph{compact quantum group (CQG)} is a quantum space $G$ equipped with a $\ast$-homomorphism $\Delta:C(G)\rightarrow C(G)\otimes_{\min}C(G)$, called the \emph{coproduct}, satisfying the relation $(\id\otimes\Delta)\Delta=(\Delta\otimes\id)\Delta$, called \emph{coassociativity}, and such that $(\id\otimes C(G))\Delta(C(G))$ and $(C(G)\otimes\id)\Delta(C(G))$ are dense in $C(G)\otimes_{\min}C(G)$. When working with compact quantum groups, we assume all tensor products of $C^\ast$-algebras are min-tensor-products, and suppress the subscript. A \emph{quantum subgroup} $H$ of a CQG $G$ is a CQG that is a quantum subspace and $\Delta_G\pi=(\pi\otimes\pi)\Delta_H$ where $\pi$ is the quotient map.

In this work, we need only compact matrix quantum groups. A \emph{compact matrix quantum group (CMQG)} is a CQG where $C(G)$ is generated by generators $p_{ij}$ for $1\leq i,j\leq n$ and $\Delta(p_{ij})=\sum_{k=1}^np_{ik}\otimes p_{kj}$. Conversely, if $\Delta$ defined in this way is a $\ast$-homomorphism and the matrices $u=(p_{ij})$ and $u^T=(p_{ji})$ are invertible, then it automatically satisfies the two conditions on the coproduct of a CQG, inducing a CMQG structure. Further, we can define a \emph{counit} $\ast$-homomorphism $\epsilon:C(G)\rightarrow \C$ by $\epsilon(p_{ij})=\delta_{ij}$ and a \emph{antipode} $\ast$-anti-automorphism $S:C(G)\rightarrow C(G)$ by $S(p_{ij})=p_{ji}^\ast$. These satisfy the conditions of a \emph{Hopf algebra}, \textit{i.e.} $(\id\otimes\epsilon)\Delta=(\epsilon\otimes\id)\Delta=\id$ and $\mu(\id\otimes S)\Delta=\mu(S\otimes\id)\Delta=1\epsilon$, where $\mu:C(G)\otimes C(G)\rightarrow C(G)$ is the product $\ast$-homomorphism $\mu(x\otimes y)=xy$.

The \emph{quantum permutation group} of $n$ elements is the CMQG $S_n^+$ with algebra of functions generated by $p_{ij}$ with $1\leq i,j\leq n$ subject to the relations $p_{ij}^2=p_{ij}^\ast=p_{ij}$ for all $i,j$, and $\sum_{j}p_{ij}=\sum_j p_{ji}=1$ for all $i$. In the same way, we can define the quantum permutation group of any finite set $X$ as $S_{X}^+\cong S_{|X|}^+$ Given a graph $G$, the \emph{quantum automorphism group} of $G$ is the quantum subgroup $\aut^+(G)$ of $S_G^+$ with respect to the relations $p_{uv}p_{u'v'}=0$ when $u\sim_G u'\land v\nsim_G v'$ or $u\nsim_G u'\land v\sim_G v'$. As a shorthand, write $\Aut^+(G)=C(\aut^+(G))$. An alternate version of the quantum automorphism group is the \emph{oracular quantum automorphism group}, which is the quantum subgroup $\aut^{o+}(G)$ of $\aut^+(G)$ by the relations $[p_{uv},p_{u'v'}]=0$ when $u\sim_G u'$ or $v\sim_G v'$. We similarly write $$\Aut^{o+}(G)=C(\aut^{o+}(G))$$

Prior work has used varied notation for quantum automorphism groups. For example, $\aut^+(G)$ has been denoted $\mathrm{Qut}(G)$~\cite{LMR20,RS22a,DKRSZ25,ADKRZ25}, $S_{|G|}^G$~\cite{SW19}, $\mathrm{QAut}(G)$~\cite{JSW20}, $G_{aut}^+(G)$~\cite{Sch20}, and $\mathbb{Q}\mathrm{ut}(G)$~\cite{BGMSW25}; $\aut^{o+}(G)$ has been denoted $G_{aut}^{\ast}(G)$~\cite{Sch20} and $S_{|G|}^{\mc{E}}$~\cite{SW19} where $\mc{E}$ is the adjacency matrix of $G$; $\Aut^+(G)$ has been denoted $H(G)$~\cite{Ban05} and $A(G)$~\cite{BB07}; and $\Aut^{o+}(G)$ has been denoted $A_{aut}(G)$~\cite{Bic03}.

\subsection{Nonlocal games}

A $2$-player \emph{nonlocal game} $\ttt{G}$ consists of finite sets $X,Y,A,B$ called Alice's questions, Bob's questions, Alice's answers, and Bob's answers; a probability distribution $\pi:X\times Y\rightarrow[0,1]$, called the question distribution; and a function $V:A\times B\times X\times Y\rightarrow\{0,1\}$, $(a,b,x,y)\mapsto V(a,b|x,y)$, called the predicate. A strategy for a nonlocal game $\ttt{G}$ is given by a \emph{correlation}, which is a function $p:A\times B\times X\times Y$, $(a,b,x,y)\mapsto p(a,b|x,y)$, such that $(a,b)\mapsto p(a,b|x,y)$ is a probability distribution for all $x,y$. The \emph{value} of a nonlocal game $\ttt{G}$ with respect to a correlation $p$ is
$$\mfk{v}(\ttt{G},p)=\sum_{(x,y,a,b)\in X\times Y\times A\times B}\pi(x,y)V(a,b|x,y)p(a,b|x,y).$$
We say a correlation $p$ is \emph{perfect} for $\ttt{G}$ if $\mfk{v}(\ttt{G},p)=1$.

A correlation is \emph{deterministic} if there are functions $f:X\rightarrow A$, $g:Y\rightarrow B$ such that $p(a,b|x,y)=\delta_{f(x),a}\delta_{g(y),b}$. A correlation is \emph{classical} if there exists a random variable $\Lambda$ on a set $L$ and functions $f:X\times L\rightarrow A$ and $g:Y\times L\rightarrow B$ such that $p(a,b|x,y)=\expec\delta_{f(x,\Lambda),a}\delta_{f(y,\Lambda),b}$. A correlation is \emph{quantum} if there exist finite-dimensional Hilbert spaces $H_A$ and $H_B$, POVMs $\{P^x_a\}_{a\in A}\subseteq B(H_A)$ $\{Q^y_b\}_{b\in B}\subseteq B(H_B)$ for all $x\in X$ and $y\in Y$, and a state $\ket{\psi}\in H_A\otimes H_B$ such that $p(a,b|x,y)=\braket{\psi}{P^x_a\otimes Q^y_b}{\psi}$. A correlation is \emph{quantum approximate} if it is the limit of a sequence of quantum correlations. A correlation is \emph{quantum commuting} if there exists a Hilbert spaces $H$, POVMs $\{P^x_a\}_{a\in A}\subseteq  B(H)$ $\{Q^y_b\}_{b\in B}\subseteq  B(H)$ for all $x\in X$ and $y\in Y$, and a state $\ket{\psi}\in H$ such that $[P^x_a,Q^y_b]=0$ and $p(a,b|x,y)=\braket{\psi}{P^x_aQ^y_b}{\psi}$. A correlation is \emph{tracial} if there exists a von Neumann algebra $\mc{M}$, a POVM $\{P^x_a\}_{a\in A\cup B}\subseteq\mc{M}$ for all $x\in X\cup Y$, and a tracial state $\tau:\mc{M}\rightarrow\C$ such that $p(a,b|x,y)=\tau(P^x_aP^y_b)$. We denote sets of correlations by $C_Q(A,B|X,Y)$, where the symbol $Q=d$ for deterministic, $Q=c$ for classical, $Q=q$ for quantum, $Q=qa$ for quantum approximate, $Q=qc$ for quantum commuting, and $Q=t$ for tracial. We do not specify the question and answer sets when clear from context. Write also $C_{t,Q}(A,B|X,Y)=C_t(A,B|X,Y)\cap C_Q(A,B|X,Y)$. In this case, we have that $\mc{M}=\C$ if $p$ is deterministic, $\mc{M}$ is commutative if $p$ is classical, $\mc{M}$ is finite-dimensional if $p$ is quantum, $\mc{M}$ is Connes-embeddable if $p$ is quantum approximate, and $C_{t,qc}(A,B|X,Y)=C_t(A,B|X,Y)$. The \emph{value} of $\ttt{G}$ with respect to a set of correlations $C_Q$ is $\mfk{v}_Q(\ttt{G})=\sup_{p\in C_Q(A,B|X,Y)}\mfk{v}(\ttt{G},p)$. Note that $\mfk{v}_d(\ttt{G})=\mfk{v}_c(\ttt{G})$, $\mfk{v}_{t,d}(\ttt{G})=\mfk{v}_{t,c}(\ttt{G})$, $\mfk{v}_q(\ttt{G})=\mfk{v}_{qa}(\ttt{G})$, and $\mfk{v}_{t,q}(\ttt{G})=\mfk{v}_{t,qa}(\ttt{G})$.

A nonlocal game is \emph{synchronous} if $X=Y$, $A=B$, and $V(a,b|x,x)=0$ whenever $a\neq b$. Then, if $p$ is perfect for the game, it must also be synchronous. This implies that $p$ is tracial~\cite{PSSTW16}. Further, near-perfect quantum correlations can be well-approximated by tracial correlations~\cite{Vid22,MdlS23}, in the sense that if $\mfk{v}_q(\ttt{G})\geq 1-\varepsilon$, then $\mfk{v}_{t,q}(\ttt{G})\geq1-O((\varepsilon/C)^{1/4})$, under the assumption that the question distribution of $\ttt{G}$ satisfies $\pi(x,y)=\pi(y,x)$ and $\pi(x,x)\geq C\sum_{y\neq x}\pi(x,y)$; the same holds for quantum approximate and quantum commuting correlations. As such, we can always restrict to working with tracial correlations by restricting the games to synchronous games, which may be done by symmetrising the probability distribution and adding consistency checks.

There are several ways to express constraint satisfaction problems as nonlocal games. Let $A$ and $B$ be relational structures over the same signature $\sigma$. For a probability distribution $\pi:(\cup_{R\in\sigma}\{R\}\times R^A)^2\rightarrow[0,1]$, the \emph{constraint-constraint game} $\ttt{G}_{c-c}(A,B,\pi)$ is the synchronous nonlocal game with question sets $\bigcup_{R\in\sigma}\{R\}\times R^A$, answer sets $\bigcup_{R\in\sigma}B^{\ar(R)}$, question distribution $\pi$, and predicate $V(\mathbf{a},\mathbf{b}|(R,\mathbf{x}),(S,\mathbf{y}))=1$ iff $\mathbf{a}\in R^B$, $\mathbf{b}\in S^B$, and $a_i=b_j$ whenever $x_i=y_j$. For a probability distribution $\pi:\cup_{R\in\sigma}\{R\}\times R^A\rightarrow[0,1]$, the \emph{constraint-variable game} $\ttt{G}_{c-v}(A,B,\pi)$ is the nonlocal game with question sets $\bigcup_{R\in\sigma}\{R\}\times R^A$ and $A$, answer sets $\bigcup_{R\in\sigma}B^{\ar(R)}$ and $B$, question distribution $((R,\mathbf{x}),y)\mapsto\sum_{i:\,x_i=y}\frac{\pi(R,\mathbf{x})}{\ar(R)}$, and predicate $V(\mathbf{a},b|(R,\mathbf{x}),y)=1$ iff $\mathbf{a}\in R^B$ and $a_i=b$ when $x_i=y$. A third way of presenting the constraint problem as a nonlocal game arises in the $2$-CSP case, \textit{i.e.} $\ar(R)=2$ for all $R\in\sigma$. For a probability distribution $\pi:\cup_{R\in\sigma}\{R\}\times R^A\rightarrow[0,1]$, the \emph{assignment nonlocal game} $\ttt{G}_a(A,B,\pi)$ is the nonlocal game with question sets $A$, answer sets $B$, question distribution $(x,y)\mapsto\sum_{R\in\sigma:\,(x,y)\in R^A}\pi(R,(x,y))$, and predicate $V(a,b|x,y)=1$ iff $(a,b)\in R^B$ for all $R\in\sigma$ such that $(x,y)\in R^A$. If the probability distribution has full support, perfect deterministic tracial correlations for all three games correspond to relational structure homomorphisms $A\rightarrow B$. Further, there is a perfect classical correlation iff there is a homomorphism $A\rightarrow B$.

A tracial correlation is \emph{oracularisable} if $[P^x_a,P^y_b]=0$ whenever $\pi(x,y)>0$~\cite{JNV+21}. The \emph{oracularisation} of a nonlocal game is the game where Alice is asked both her and Bob's questions and Bob is asked one of the two questions; the players win if Alice responds with a correct answer and Bob responds with answer consistent with Alice's. This presentation of the oracularisation renders it as a constraint-variable game, but there are other equivalent presentations, for example as a constraint-constraint game~\cite{MS24}. The oracularisation transformation of a game is always sound, and it is also complete (with respect to tracial correlations) if the optimal correlation was oracularisable.

\subsection{Weighted algebras}

A \emph{weighted algebra} is a pair $(\mc{A},\mu)$ where $\mc{A}$ is a $\ast$-algebra and $\mu:\mc{A}\rightarrow\R_{\geq 0}$ is a finitely-supported function, called the \emph{weight function}. The \emph{defect polynomial} of a weighted algebra is
\begin{align*}
    D(\mc{A},\mu)=\sum_{a\in\mc{A}}\mu(a)a^\ast a.
\end{align*}
Given a tracial state $\tau:\mc{A}\rightarrow\C$, the \emph{defect} of $\tau$ is $$\defect(\tau;\mu)=\tau(D(\mc{A},\mu))=\sum_{a\in\mc{A}}\mu(a)\norm{a}_\tau^2.$$
We denote it $\defect(\tau)$ if the weight function is clear from context. Weighted algebras capture approximate representations of $\mc{A}/\ang{\supp(\mu)}$ and $\defect(\tau)$ captures how much the GNS representation of the tracial state $\tau$ deviates from an exact representation of this algebra.

Since we consider only the representations of weighted algebras induced by tracial states, the natural notion of order is weaker than that induced by the sums of squares. We say $a,b\in\mc{A}$ are \emph{cyclically equivalent} if there exist $c_1,d_1,\ldots,c_k,d_k\in\mc{A}$ such that $a-b=\sum_i[c_i,d_i]$; we denote this by $a\cyceq b$. Write $a\gtrsim b$ if there exist $s_1,\ldots,s_k\in\mc{A}$ such that $a-b\cyceq\sum_is_i^\ast s_i$.

A \emph{$C$-homomorphism} $\alpha:(\mc{A},\mu)\rightarrow(\mc{B},\nu)$ is a $\ast$-homomorphism $\alpha:\mc{A}\rightarrow\mc{B}$ such that $\alpha(D(\mc{A},\mu))\lesssim C\cdot D(\mc{B},\nu)$. If there is a $C$-homomorphism $\alpha:(\mc{A},\mu)\rightarrow(\mc{B},\nu)$, then for every tracial state $\tau$ on $\mc{B}$, there exists a tracial state $\tau'=\tau\circ\alpha$ on $\mc{A}$ such that $\defect(\tau')\leq C\defect(\tau)$.

The following inequality is often useful in the context of weighted algebras: for $a_1,\ldots,a_k\in\mc{A}$, $\omega$ a primitive $k$-th root of unity, and writing $\abs*{a}^2=a^\ast a$,
\begin{align*}
    \abs[\Big]{\sum_{i=1}^ka_i}^2\leq\sum_{j=0}^{k-1}\abs[\Big]{\sum_{i=1}^k\omega^{(i-1)j}a_i}^2=\sum_{j=0}^{k-1}\sum_{i,i'=1}^k\omega^{(i-i')j}a_{i'}^\ast a_i=k\sum_{i=1}^k\abs*{a_i}^2.
\end{align*}
This is tight, by taking for example $a_i=1$.

We use the weighted algebra formalism to capture approximate quantum homomorphisms between relational structures, as studied in~\cite{MS24,CM24}. In the following, $A$ and $B$ are relational structures over a signature $\sigma$.

Let $\pi$ be a probability distribution on $\parens*{\bigcup_{R\in\sigma}\{R\}\times R^A}^2$. The \emph{constraint-constraint algebra} is the weighted algebra $\Mor^{c-c}_\pi(A,B)$ generated by PVMs $\{\Phi^{R,\mathbf{a}}_{\mathbf{b}}\}_{\mathbf{b}\in R^B}$ for all $R\in\sigma$ and $\mathbf{a}\in R^A$, equipped with the weight function
\begin{align*}
    \mu_{c-c,\pi}(\Phi^{R,\mathbf{a}}_{\mathbf{b}}\Phi^{R',\mathbf{a}'}_{\mathbf{b}'})=\pi((R,\mathbf{a}),(R',\mathbf{a}'))
\end{align*}
if there exist $i,j\in[\ar(R)]$ such that $a_i=a_j'$ but $b_i\neq b_j'$, and $0$ on all other elements. Denote the defect polynomial $D^{c-c}_\pi(A,B)=D(\Mor^{c-c}_\pi(A,B),\mu_{c-c,\pi})$.

Let $\pi$ be a probability distribution on $\bigcup_{R\in\sigma}\{R\}\times R^A$. The \emph{constraint-variable algebra} is the weighted algebra $\Mor^{c-v}_\pi(A,B)$ generated by PVMs $\{\Phi^{R,\mathbf{a}}_{\mathbf{b}}\}_{\mathbf{b}\in R^B}$ for all $R\in\sigma$ and $\mathbf{a}\in R^A$ and $\{p^a_b\}_{b\in B}$ for all $a\in A$, and equipped with the weight function
\begin{align*}
    \mu_{c-v,\pi}(\Phi^{R,\mathbf{a}}_{\mathbf{b}}(1-p^{a_i}_{b_i}))=\frac{\pi(R,\mathbf{a})}{\ar(R)},
\end{align*}
for all $R\in\sigma$, $\mathbf{a}\in R^A$, and $i\in[\ar(R)]$, and $0$ on all other elements. Denote the defect polynomial $D^{c-v}_\pi(A,B)=D(\Mor^{c-v}_\pi(A,B),\mu_{c-v,\pi})$.

Let $\pi$ be a probability distribution on $\bigcup_{R\in\sigma}\{R\}\times R^A$. The \emph{assignment algebra} is the weighted algebra $\Mor^a_{\pi}(A,B)$ generated by PVMs
$\{p^a_b\}_{b\in B}$ for all $a\in A$, and equipped with the weight function
\begin{align*}
    \mu_{a,\pi}(p^{a_1}_{b_1}\cdots p^{a_{\ar(R)}}_{b_{\ar(R)}})=\pi(R,\mathbf{a}),
\end{align*}
for all $R\in\sigma$, $\mathbf{a}\in R^A$, and $\mathbf{b}\notin R^B$, and $0$ on all other elements. Denote the defect polynomial $D^{a}_\pi(A,B)=D(\Mor^{a}_\pi(A,B),\mu_{a,\pi})$.

In $\Mor^{c-c}_\pi(A,B)$ or $\Mor^{c-v}_\pi(A,B)$, write $\Phi^{R,\mathbf{a},i}_b=\sum_{\mathbf{b}\in R^B:\,b_i=b}\Phi^{R,\mathbf{a}}_{\mathbf{b}}$.

There is a bijective correspondence between  tracial correlations $p$ for $\ttt{G}_{c-c}(A,B,\pi)$ and tracial states $\tau$ on $\Mor^{c-c}_\pi(A,B)$ such that $\mfk{v}(\ttt{G}_{c-c}(A,B,\pi),p)=1-\defect(\tau)$, deterministic correlations map to characters , quantum correlations map to finite-dimensional traces, and quantum approximate correlations map to Connes-embeddable traces. There is an analogous bijective correspondence for the constraint-variable model and, if $\ar(R)=2$ for all $R\in\sigma$, for the assignment model.

\subsection{Complexity theory}

A \emph{Turing machine} is a tuple $M=(k,\Sigma,S,\delta)$, where $k\in\Z_{\geq 0}$, called the number of work tapes; $\Sigma$ is a finite set, called the set of symbols, containing a blank symbol $\Box$; $S$ is a finite set, called the set of states, containing the start state $\ttt{start}$ and the halt state $\ttt{halt}$; and $\delta:S\times\Sigma^{k+2}\rightarrow S\times\Sigma^{k+2}\times\{-1,0,1\}^{k+2}$ is a function, called the transition function. A Turing machine describes an abstract computation as follows. The machine consists of $k+2$ tapes (infinite strips where one symbol can be placed for every integer) of which one is the input tape, one is the output tape, and the remainder are work tapes; one head on each tape that can read and write on one position at a time; and a register that can store one of the states. The machine starts with a blank symbol on every position, except for the input tape where a word $x$ in $\Sigma$ can be input, and the register in the $\ttt{start}$ state. Then, at each step of the computation, the head on each tape $i$ reads the symbol $\sigma_i$ it is on, and the machine computes $\delta(s,\sigma_1,\ldots,\sigma_{k+2})=(s',\sigma_1',\ldots,\sigma_{k+2}',d_1,\ldots,d_{k+2})$, where $s$ is the state in the register. Then, the machine updates the state in the register to $s'$, writes $\sigma_i'$ at each head position $i$, and moves each head $i$ in the direction given by $d_i$. If the state in the register ever becomes $\ttt{halt}$, then the machine halts. The output of the Turing machine is the word on the output tape, omitting blank symbols, when the computation halts, and is denoted $M(x)$. A \emph{probabilistic Turing machine} is a Turing machine with an additional tape that starts filled with uniformly random symbols. This provides an abstract presentation of randomised models of computation.

The \emph{runtime} of a Turing machine on input $x$ is the number of computation steps before reaching the halt state. Since it takes a finite amount of information to specify a Turing machine, they can be represented as bit strings; importantly, there is a \emph{universal Turing machine} that can take a bit string representation of a Turing machine as input and produce the same output as would the inputted Turing machine. A function $f$ is \emph{computable} if there is a Turing machine $M$ such that $f(x)=M(x)$ for all $x$. A function $f$ is \emph{$T$-time computable}, where $T:\N\rightarrow\N$, if there is a Turing machine $M$ such that $f(x)=M(x)$ and the runtime of $M$ on input $x$ is less than $T(|x|)$. A function $f$ is \emph{polynomial-time computable} if it is $T$-time computable with $T(n)=Cn^c$ for some $C,c\geq 0$.

A \emph{language} is a subset $L\subseteq\{0,1\}^\ast$, where $\Sigma^\ast$ denotes the set of words in $\Sigma$. A Turing machine $M$ \emph{decides} a language $L$ if $M(x)=1$ if $x\in L$ and $M(x)=0$ if $x\notin L$. A \emph{promise problem} $P$ is a pair of disjoint subsets $Y(P),N(P)\subseteq\{0,1\}^\ast$. The elements of $Y(P)\cup N(P)$ are called the \emph{instances} of $P$; the elements of $Y(P)$ are the \emph{yes instances} and the elements of $N(P)$ are the \emph{no instances}. We write $x\in P$ to mean $x\in Y(P)\cup N(P)$. Every language $L$ corresponds to the promise problem $(L,L^c)$, so we will work exclusively with promise problems in the following. A Turing machine $M$ decides $P$ if $M(x)=1$ for all $x\in Y(P)$ and $M(x)=0$ for all $x\in N(P)$. Note that it does not matter what $M$ outputs for $x$ in neither $Y(P)$ nor $N(P)$ --- in fact $M$ may not even halt. A \emph{complexity class} is a collection of promise problems.

$\tsf{P}$ is the class of all problems that can be decided in polynomial time. $\tsf{NP}$ is the class of all problems $P$ such that there exists a Turing machine $M$ such that for all $x\in Y(P)$, there exists $y\in\{0,1\}^\ast$ such that $M(x,y)=1$ and for all $x\in N(P)$ and all $y\in\{0,1\}^\ast$, $M(x,y)=0$; and $M$ halts in polynomial time. $\tsf{EXP}$ is the class of all problems that can be decided in exponential time, \textit{i.e.} $T(n)=C\cdot 2^{kn^c}$ for some $C,c,k\geq 0$. $\tsf{R}$ is the class of all problems that can be decided with a Turing machine; $P\in\tsf{R}$ is called \emph{decidable}, and $P\notin\tsf{R}$ is called \emph{undecidable}. $\tsf{RE}$ is the class of all problems $P$ such that there exists a Turing machine $M$ where if $x\in Y(P)$, $M$ halts on input $x$, and if $x\in N(P)$, $M$ does not halt on input $x$. $\tsf{MIP}^\ast_{c,s}$ is the class of all problems $P$ such that for all $x\in Y(P)\cap N(P)$, there exists a nonlocal game $\ttt{G}(x)$ where the questions can be sampled in polynomial time using a probabilistic Turing machine and the predicate function can be computed in polynomial time, such that if $x\in Y(P)$, $\mfk{v}_q(\ttt{G}(x))\geq c$ and if $x\in N(P)$, $\mfk{v}_q(\ttt{G}(x))<s$. We write $\tsf{MIP}^\ast=\tsf{MIP}^\ast_{1,\frac{1}{2}}$. $\tsf{MIP}_{c,s}$ is the same except that the quantum value is replaced by the classical value. For both $\tsf{MIP}^\ast$ and $\tsf{MIP}$, the original definitions were more general, but the above ones have been shown to be sufficient~\cite{JNV+21,BFL91}. Hence, $\tsf{MIP}^\ast$ and $\tsf{MIP}$ capture the complexity of quantum and classical strategies for succinctly-presented nonlocal games.

Let $P_1$ and $P_2$ be promise problems. A \emph{(Karp) reduction from $P_1$ to $P_2$} is a computable function $f$ such that for all $x\in Y(P_1)$, $f(x)\in Y(P_2)$ and for all $x\in N(P_1)$, $f(x)\in N(P_2)$. The reduction is \emph{polynomial-time} if $f$ is polynomial-time computable. A problem $P$ is \emph{hard} for a complexity class $\tsf{C}$, or $\tsf{C}$\emph{-hard}, if every problem in $\tsf{C}$ reduces to $P$. Note that the time complexity stipulated for the reduction depends on the complexity class: for $\tsf{NP}$, $\tsf{MIP}$, or $\tsf{MIP}^*$ it is polynomial, while for $\tsf{RE}$ it just has to be computable. We say $P$ is $\tsf{C}$\emph{-complete} if it is $\tsf{C}$-hard and $P\in\tsf{C}$.

The \emph{halting problem} is the problem where the yes instances are the Turing machines that halt on empty input and the no instances are the Turing machines that do not. The halting problem is $\tsf{RE}$-complete. The equality $\tsf{MIP}^\ast=\tsf{RE}$ is due to the fact that there is a polynomial-time reduction from the halting problem to (succinctly-presented) nonlocal games~\cite{JNV+21}.

\subsection{Constraint satisfaction problems}

Given a relational structure $A$ over $\sigma$, the \emph{constraint satisfaction problem (CSP)} over $A$ is the problem $\CSP(A)$ with instances $\ob(\mc{R}(\sigma))$, where $B\in Y(\CSP(A))$ if there exists a homomorphism $B\rightarrow A$ (\textit{i.e.} $\mor(B,A)$ is nonempty) and $B\in N(\CSP(A))$ otherwise. Due to the \emph{CSP dichotomy theorem}, we know that $\CSP(A)$ is either contained in $\tsf{P}$ or $\tsf{NP}$-complete~\cite{Bul17,Zhu17}. Given $1\geq c\geq s\geq 0$, the \emph{$(c,s)$-gapped CSP} induced by $A$ is the promise problem $\CSP(A)_{c,s}$ where the yes instances are relational structures $B$ such that there is a character $\chi$ on $\Mor^a_{\mbb{u}}(B,A)$ such that $\defect(\chi)\leq 1-c$, and the no instances are relational structures $B$ such that for all characters $\chi$ on $\Mor^a_{\mbb{u}}(B,A)$, $\defect(\chi)> 1-s$. In the constraint system picture, this corresponds to there existing an assignment satisfying a $c$ fraction of the constraints in the yes case, and no assignment satisfying an $s$ fraction or more of the constraints in the no case. Note also that $\CSP(A)_{1,1}=\CSP(A)$. It follows from the PCP theorem~\cite{ALMSS98,Din07} that either $\CSP(A)_{1,s}\in\tsf{P}$ for all $s$, or there exists $s<1$ such that $\CSP(A)_{1,s}$ is $\tsf{NP}$-complete. However, $s$ in general depends on $A$, and for any $A$ there is a large enough gap that $\CSP(A)_{1,s}$ becomes easy. The \emph{$(c,s)$-gapped succinct CSP} induced by $A$ is the promise problem $\SuccinctCSP(A)_{c,s}$ where the instances are probabilistic Turing machines that sample from a probability distribution $\pi$ on $\bigcup_{R\in\sigma}\{R\}\times R^B$ where $B$ is a relational structure over $\sigma$ (note that since $B$ is presented succinctly, it is exponential-size in the length of the description of $M$); the yes instances are those such that there is a character $\chi$ on $\Mor^a_{\pi}(B,A)$ such that $\defect(\chi)\leq 1-c$ and the no instances are those such that for all characters $\chi$ on $\Mor^a_{\pi}(B,A)$, $\defect(\chi)> 1-s$.

Entangled CSPs can be presented similarly, except that there is a dependence on the way the constraint satisfaction problem is expressed as a weighted algebra. Let $w\in\{c-c,c-v,a\}$ The \emph{$(c,s)$-gapped $w$-entangled CSP} induced by $A$ is the promise problem $\CSP_{w}(A)^\ast_{c,s}$ where the yes instances are relational structures $B$ such that there exists a Connes-embeddable trace $\tau$ on $\Mor^{w}_{\mbb{u}}(B,A)$ such that $\defect(\tau)\leq1-c$, and the no instances are relational structures $B$ such that for all Connes-embeddable traces $\tau$ on $\Mor^{w}_{\mbb{u}}(B,A)$, $\defect(\tau)>1-s$. Succinct CSPs also extend to the entangled setting, and they are often arise more naturally in the context of nonlocal games. Connes-embeddable traces arise here as they constitute limits of finite-dimensional traces, which encode quantum correlations. The \emph{$(c,s)$-gapped succinct $w$-entangled CSP} induced by $A$ is the promise problem $\SuccinctCSP_{w}(A)^\ast_{c,s}$ where the instances are probabilistic Turing machines that sample from a probability distribution $\pi$ on $\bigcup_{R\in\sigma}\{R\}\times R^B$ (or $\parens*{\bigcup_{R\in\sigma}\{R\}\times R^B}^2$ if $w=c-c$) where $B$ is a relational structure over $\sigma$; the yes instances are those such that there exists a Connes-embeddable trace $\tau$ on $\Mor^{w}_{\pi}(B,A)$ such that $\defect(\tau)\leq1-c$, and the no instances are relational structures $B$ such that for all Connes-embeddable traces $\tau$ on $\Mor^{w}_{\pi}(B,A)$, $\defect(\tau)>1-s$.

A relational structure $A$ is \emph{boolean} if it has alphabet size $2$. A relational structure $A$ is \emph{two-variable falsifiable (TVF)} if for every $R\in\sigma$ and every pair $1\leq i<j\leq\ar(R)$, there exist $a,b\in A$ such that if $\mathbf{a}\in A^{\ar(R)}$ with $a_i=a$ and $a_j=b$, then $\mathbf{a}\notin R^A$. It was shown in~\cite{CM24} that if $A$ is not TVF and $\CSP(A)$ is $\tsf{NP}$-complete, then for $w\in\{c-c,c-v\}$ there exists $s<1$ such that there is a polynomial-time reduction from the halting problem to $\SuccinctCSP_{w}(A)^\ast_{1,s}$~\cite[Theorem 4.14 \& Corollary 4.15]{CM24}; and if $A$ is TVF, boolean, and $\tsf{NP}$-complete, or $A=K_3=C_3$, then for $w\in\{c-c,c-v,a\}$ there exists $s<1$ such that there is a polynomial-time reduction from the halting problem to $\SuccinctCSP_{w}(A)^\ast_{1,s}$~\cite[Theorem 4.14 \& Corollary 4.16]{CM24}. Note that the cases differ in that, in the first case, we do not necessarily have hardness of the $a$-entangled CSP. This also implies that in the above cases, $\CSP_w(A)^\ast_{1,s}$ is undecidable, as there is an exponential-time reduction from the halting problem.

\section{Quantum relational structure homomorphisms}\label{sec:quantum-homs}

In this section, we introduce spaces of quantum homomorphisms between relational structures and show their basic properties. In \cref{sec:q-hom-spaces}, we present the quantum spaces of relational structure homomorphisms. In \cref{sec:cocat}, we construct a category-theoretic framework for quantum relational structure homomorphisms, and use it to show their basic properties. In \cref{sec:q-groups}, we show every relational structure admits a quantum automorphism group and an oracular quantum automorphism group.

\subsection{Quantum spaces from relational structure homomorphisms}\label{sec:q-hom-spaces}

\begin{definition}\label{def:quantum-endomorphism-monoid}
    Let $A$ and $B$ be finite sets. The \emph{quantum space of transformations from $A$ to $B$} is the quantum space $T^+_{A,B}$ with $C^\ast$-algebra of continuous functions $C(T^+_{A,B})$ generated by $p_{ab}$ for $a\in A$ and $b\in B$ subject to the relations $p_{ab}^\ast=p_{ab}^2=p_{ab}$ for all $a\in A$ and $b\in B$, and $\sum_{b\in B}p_{ab}=1$ for all $a\in A$. Write $T^+_{m,n}=T^+_{[m],[n]}$ and $T^+_{A}=T^+_{A,A}$.
    
    Let $A$ and $B$ be relational structures over the same signature $\sigma$. The \emph{quantum space of homomorphisms from $A$ to $B$} is the quantum subspace $\mor^+(A,B)$ of $T^+_{A,B}$ via the relations $p_{a_1b_1}p_{a_2b_2}\cdots p_{a_{\ar(R)}b_{\ar(R)}}=0$ for all $R\in\sigma$, $\mathbf{a}\in R^A$, and $\mathbf{b}\notin R^B$. The \emph{quantum endomorphism monoid of $A$} is $\enmo^+(A)=\mor^+(A,A)$. Write the algebras of functions $\Mor^+(A,B)=C(\mor^+(A,B))$ and $\End^+(A)=C(\enmo^+(A))$.

    The \emph{oracular quantum space of homomorphisms from $A$ to $B$} is the quantum subspace $\mor^{o+}(A,B)$ of $\mor^+(A,B)$ via the relations $[p_{a_ib},p_{a_jb'}]=0$ for all $R\in\sigma$, $\mathbf{a}\in R^A$, $i,j\in[\ar(R)]$, and $b,b'\in B$. The \emph{oracular quantum endomorphism monoid} is $\enmo^{o+}(A)=\mor^{o+}(A,A)$.  Write as above $\Mor^{o+}(A,B)=C(\mor^{o+}(A,B))$ and $\End^{o+}(A)=C(\enmo^{o+}(A))$.

    We say a relational structure $A$ has \emph{only classical endomorphisms} if $\enmo^+(A)=\enmo(A)$; we say that \emph{only classical oracular endomorphisms} if $\enmo^{o+}(A)=\enmo(A)$.
\end{definition}

Representations of quantum spaces of homomorphisms into finite von Neumann algebras correspond to perfect tracial strategies for CSP nonlocal games: the oracular quantum space corresponds to the constraint-constraint and constraint-variable games, and the non-oracular space corresponds to the assignment game.

It follows from the definitions that, under the assumption that the probability distributions have large enough support (see \cref{lem:perfect-strats} for the specific statement),
\begin{align*}
    &C_u^\ast(\Mor^{c-c}_{\pi'}(A,B)/\ang*{\supp\mu_{c-c,{\pi'}}})=C_u^\ast(\Mor^{c-v}_\pi(A,B)/\ang*{\supp\mu_{c-v,\pi}})=\Mor^{o+}(A,B),\\
    &C_u^\ast(\Mor^{a}_\pi(A,B)/\ang*{\supp\mu_{a,\pi}})=\Mor^{+}(A,B).
\end{align*}

We can define a \emph{coproduct} $\Delta:C(T^+_n)\rightarrow C(T^+_n)\otimes C(T^+_n)$ and a \emph{counit} $\epsilon:C(T^+_n)\rightarrow\C$ on the generators as $\Delta(p_{ij})=\sum_{k=1}^n p_{ik}\otimes p_{jk}$ and $\epsilon(p_{ij})=\delta_{i,j}$, respectively. We will see in the following section that $\Delta$ and $\epsilon$ extend to coproduct and counit $\ast$-homomorphisms on $C(T^+_n)$. These definitions also give rise to a coproduct and a counit on $\End^+(A)$ and $\End^{o+}(A)$ for any relational structure $A$.

\subsection{Category-theoretic unification} \label{sec:cocat}

Since relational structure homomorphisms are easily presented as morphisms of the category~$\mc{R}(\sigma)$, one may ask if there is a similar unified way of presenting the quantum homomorphism spaces. Previous work~\cite{ABdSZ17} has shown that quantum homomorphisms on finite-dimensional Hilbert spaces can be presented as Kleisli morphisms of a monad on the category of relational structures. We use a different approach that is more in line with the notion of quantum spaces, and allows use to work with quantum homomorphisms in a wider variety of settings than finite-dimensional Hilbert spaces. To do so, we make use of a slightly modified notion of a cocategory in the sense of Bichon~\cite{Bic14}, which has previously been used to study, \textit{e.g.}, quantum analogues of groupoids~\cite{Bic14} and quantum isomorphisms and metric spaces~\cite{BCE+20,Eif20}.

\begin{definition}
    A \emph{$C^\ast$-cocategory} $\mc{C}^+$ consists of a class of objects $\ob(\mc{C}^+)$ and for every pair $a,b\in\ob(\mc{C}^+)$ a quantum space $\mor_{\mc{C}}^+(a,b)$, such that for all $a,b,c\in\ob(\mc{C}^+)$, there exists a $\ast$-homomorphism $\Delta_{a,c}^b:\Mor^+(a,c)\rightarrow\Mor^+(a,b)\otimes\Mor^+(b,c)$ called \emph{cocomposition}, where $\Mor^+(a,b)\coloneqq C(\mor^+(a,b))$, satisfying the coassociativity conditions
    $$(\Delta^b_{a,c}\otimes\id_{\Mor^+(c,d)})\Delta^c_{a,d}=(\id_{\Mor^+(a,b)}\otimes\Delta_{b,d}^c)\Delta_{a,d}^b;$$
    and for all $a\in\ob(\mc{C}^+)$, there exists a $\ast$-homomorphism $\epsilon_a:\Mor^+(a,a)\rightarrow\C$ called \emph{coidentity}, satisfying the counit conditions
    \begin{align*}
        &(\epsilon_a\otimes\id_{\Mor^+(a,b)})\Delta_{a,b}^a=(\id_{\Mor^+(a,b)}\otimes\epsilon_b)\Delta_{a,b}^b=\id_{\Mor^+(a,b)}.
    \end{align*}
    Note that reference to $\mc{C}^+$ is dropped when clear from context.
\end{definition}

Every category $\mc{C}$ where $\mor_{\mc{C}}(a,b)$ is a compact Hausdorff space can be expressed as a $C^\ast$-cocategory. In this, $\Mor^+(a,b)$ is the space of continuous functions $\mor(a,b)\rightarrow\C$, and we define the cocomposition $\Delta_{a,c}^b:C(\mor(a,c))\rightarrow C(\mor(a,b)\times\mor(b,c))$ as $\Delta_{a,c}^b(\phi)(f,g)=\phi(g\circ f)$ and the coidentity $\epsilon_a:C(\mor(a,a))\rightarrow\C$ as $\epsilon_a(\phi)=\phi(\id_A)$.

\begin{definition}
    Let $\mc{C}^+$ be a $C^\ast$-cocategory, and let $Q$ be a class of $C^\ast$-algebras closed under the minimal tensor product and containing $\C$. The \emph{$Q$-morphism category of $\mc{C}^+$} is $\mc{C}^Q$ with objects $\ob(\mc{C}^Q)=\ob(\mc{C}^+)$ and morphisms
    $$\mor_{\mc{C}^Q}(a,b)=\set*{\pi:\Mor^+(a,b)\rightarrow\mc{A}}{\pi\text{ is a $\ast$-homomorphism and }\mc{A}\in Q},$$
    writing $\mor^Q(a,b)=\mor_{\mc{C}^Q}(a,b)$ when $\mc{C}^+$ is clear from context, with composition defined for $\pi\in\mor^Q(a,b)$ and $\varphi\in\mor^Q(b,c)$ as
    $$\varphi\circ\pi=(\pi\otimes\varphi)\Delta_{a,c}^b.$$
\end{definition}
We write $\pi:a\overset{Q}{\rightarrow}b$ to mean $\pi\in\mor^Q(a,b)$.

We make use of the following classes of $C^\ast$-algebras: $d=\{\C\}$ gives the deterministic morphisms, $c=\set*{C(X)}{X\text{ is a compact space}}$ gives the classical morphisms, $q=\set*{B(\C^d)}{d\in\N}$ gives the quantum morphisms, $qa=\set*{\mc{A}}{\mc{A}\hookrightarrow\mc{R}^\omega}$ gives the quantum approximate morphisms, and $qc=\set*{\mc{A}}{\exists\,\tau:\mc{A}\rightarrow\C\text{ tracial state}}$ gives the quantum commuting morphisms.

\begin{remark}
    Note that each of the classes defined above is closed under the spatial tensor product. For $Q=d$, $\C\otimes\C=\C$ by definition. For $Q=c$, this fact follows from the isomorphism $C(X)\otimes C(Y)\cong C(X\times Y)$. For $Q=q$, the fact follows from the isomorphism $B(\C^{d_1})\otimes B(\C^{d_2})\cong B(\C^{d_1\cdot d_2})$. For $Q=qc$, for any tracial states $\tau_1:\mc{A}_1\rightarrow\C$ and $\tau_2:\mc{A}_2\rightarrow\C$, the tracial state on the algebraic tensor product $\mc{A}_1\odot\mc{A}_2$ extends to a tracial state $\tau_1\otimes\tau_2:\mc{A}_1\otimes\mc{A}_2\rightarrow\C$. Finally, for $Q=qa$, let $\mc{A}_1,\mc{A}_2\in qa$. So, there exist injections $\iota_i:\mc{A}_i\rightarrow\mc{R}^\omega$. $\mc{R}^\omega\subseteq B(H)$ for some Hilbert space $H$, so $\iota_i$ is a faithful representation of $\mc{A}_i$. As such, $\iota_1\odot\iota_2:\mc{A}_1\odot\mc{A}_2\rightarrow B(H\otimes H)$ extends to a faithful representation $\iota_1\otimes\iota_2$ of $\mc{A}_1\otimes\mc{A}_2$. $(\iota_1\otimes\iota_2)(\mc{A}_1\otimes\mc{A}_2)$ is the closure of $\iota_1(\mc{A}_1)\odot\iota_2(\mc{A}_2)$ with respect to the operator norm, so it is contained within the closure of $\mc{R}^\omega\odot\mc{R}^\omega$ with respect to the operator norm, which is in turn contained within the closure in the strong operator topology, $\mc{R}^\omega\bar{\otimes}\mc{R}^\omega\cong\mc{R}^\omega$. Hence, $\iota_1\otimes\iota_2$ can be corestricted to an injective $\ast$-homomorphism $\mc{A}_1\otimes\mc{A}_2\rightarrow\mc{R}^\omega$, giving that $\mc{A}_1\otimes\mc{A}_2\in qa$.
    
\end{remark}

\begin{lemma}
    Let $\mc{C}^+$ be a $C^\ast$-cocategory and $Q$ be a class of $C^\ast$-algebras that contains $\C$ and is closed under the minimal tensor product. Then $\mc{C}^Q$ is well-defined as a category.
\end{lemma}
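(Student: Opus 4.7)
Three things need to be verified: (i) composition is well-defined, in that $\varphi\circ\pi\in\mor^Q(a,c)$ whenever $\pi\in\mor^Q(a,b)$ and $\varphi\in\mor^Q(b,c)$; (ii) composition is associative; and (iii) each object carries an identity morphism. Each will follow almost directly from the corresponding cocategory axiom, with one standard analytic ingredient needed for (i).

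For (i), given $\pi:\Mor^+(a,b)\to\mc{A}_\pi$ and $\varphi:\Mor^+(b,c)\to\mc{A}_\varphi$ with $\mc{A}_\pi,\mc{A}_\varphi\in Q$, the algebraic tensor product $\pi\odot\varphi$ is a $\ast$-homomorphism on $\Mor^+(a,b)\odot\Mor^+(b,c)$. I will invoke the standard fact that $\pi\odot\varphi$ extends to a $\ast$-homomorphism $\pi\otimes\varphi:\Mor^+(a,b)\otimes_{\min}\Mor^+(b,c)\to\mc{A}_\pi\otimes_{\min}\mc{A}_\varphi$ (this is the functoriality of the minimal $C^\ast$-tensor product for $\ast$-homomorphisms). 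Composing with the $\ast$-homomorphism $\Delta^b_{a,c}$ yields $\varphi\circ\pi:\Mor^+(a,c)\to\mc{A}_\pi\otimes_{\min}\mc{A}_\varphi$, and the target lies in $Q$ by the closure assumption on $Q$ under minimal tensor product.

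For (ii), introducing a further $\rho:c\to d$, I will unfold
\begin{align*}
    (\rho\circ\varphi)\circ\pi &= \bigl(\pi\otimes(\rho\circ\varphi)\bigr)\Delta^b_{a,d} = (\pi\otimes\varphi\otimes\rho)(\id\otimes\Delta^c_{b,d})\Delta^b_{a,d},\\
    \rho\circ(\varphi\circ\pi) &= \bigl((\varphi\circ\pi)\otimes\rho\bigr)\Delta^c_{a,d} = (\pi\otimes\varphi\otimes\rho)(\Delta^b_{a,c}\otimes\id)\Delta^c_{a,d},
\end{align*}
and the coassociativity axiom for $\mc{C}^+$ states exactly that these two compositions agree.

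For (iii), the coidentity $\epsilon_a:\Mor^+(a,a)\to\C$ lies in $\mor^Q(a,a)$ because $\C\in Q$. For any $\pi:a\to b$, using the canonical identifications $\C\otimes\mc{A}_\pi\cong\mc{A}_\pi\cong\mc{A}_\pi\otimes\C$, the two counit conditions give $\pi\circ\epsilon_a=(\epsilon_a\otimes\pi)\Delta^a_{a,b}=\pi$ and $\epsilon_b\circ\pi=(\pi\otimes\epsilon_b)\Delta^b_{a,b}=\pi$, so $\epsilon_a$ plays the role of $\id_a$. The only step requiring more than a direct rewrite of the cocategory axioms is the extension to the minimal tensor product in (i), which is a classical fact about $C^\ast$-algebras.
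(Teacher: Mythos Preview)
Your proof is correct and follows essentially the same approach as the paper's: verify that composition lands in $\mor^Q(a,c)$ via closure of $Q$ under the minimal tensor product, derive associativity from coassociativity of $\Delta$, and use the counit $\epsilon_a$ as the identity morphism. The only difference is that you make explicit the (standard) functoriality of $\otimes_{\min}$ needed to define $\pi\otimes\varphi$, which the paper leaves implicit.
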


\begin{proof}
    First, note that composition is well-defined. Let $\pi\in\mor^Q(a,b)$ and $\varphi\in\mor^Q(b,c)$. Then, there exist $\mc{A},\mc{B}\in Q$ such that $\pi:\Mor^+(a,b)\rightarrow\mc{A}$ and $\varphi:\Mor^+(a,b)\rightarrow\mc{B}$ are $\ast$-homomorphisms. As such, $\varphi\circ\pi:\Mor^Q(a,c)\rightarrow\mc{A}\otimes\mc{B}$ is an element of $\mor^Q(a,c)$.
    
    Next, we need to show that the composition is associative. Let $\pi\in\mor^Q(a,b)$, $\varphi\in\mor^Q(b,c)$, and $\psi\in\mor^Q(c,d)$. Then, by coassociativity,
    \begin{align*}
        \psi\circ(\varphi\circ\pi)&=((\varphi\circ\pi)\otimes\psi)\Delta_{a,d}^c\\
        &=(\pi\otimes\varphi\otimes\psi)(\Delta_{a,c}^b\otimes\id)\Delta_{a,d}^c\\
        &=(\pi\otimes\varphi\otimes\psi)(\id\otimes\Delta_{b,d}^c)\Delta_{a,d}^b\\
        &=(\pi\otimes(\psi\circ\varphi))\Delta_{a,d}^b\\
        &=(\psi\circ\varphi)\circ\pi.
    \end{align*}
    Finally, we need to show the existence of an identity for each $a\in\ob(\mc{C}^Q)$. We claim $\epsilon_a:\Mor^+(a,a)\rightarrow\C$ is this element. In fact, for any $\pi\in\mor^Q(a,b)$, we get by the counit property that
    \begin{align*}
        &\pi\circ\epsilon_a=(\epsilon_a\otimes\pi)\Delta_{a,b}^a=\pi\\
        &\epsilon_b\circ\pi=(\pi\otimes\epsilon_b)\Delta_{a,b}^b=\pi.\qedhere
    \end{align*}
\end{proof}

Now, we can use the above objects to capture quantum homomorphisms between relational structures.

\begin{definition}
    The \emph{$C^\ast$-cocategory of relational structures} over a signature $\sigma$ is the $C^\ast$-cocategory $\mc{R}^+(\sigma)$ where the objects are the relational structures, for $A,B\in\ob(\mc{R}^+(\sigma))$, $\mor^+(A,B)$ is the quantum space of homomorphisms from $A$ to $B$, the cocomposition is defined via $\Delta_{A,C}^B(p_{ac})=\sum_{b\in B}p_{ab}\otimes p_{bc}$, and the coidentity is defined by $\epsilon_A(p_{aa'})=\delta_{a,a'}$.

    The \emph{oracular $C^\ast$-cocategory of relational structures} over $\sigma$ is the $C^\ast$-cocategory $\mc{R}^{o+}(\sigma)$ defined identically except that $\mor^+_{\mc{R}^o(\sigma)}(A,B)=\mor^{o+}(A,B)$ is the oracular quantum space of homomorphisms from $A$ to $B$.
\end{definition}

\begin{lemma}
    $\mc{R}^+(\sigma)$ and $\mc{R}^{o+}(\sigma)$ are well-defined as $C^\ast$-cocategories.
\end{lemma}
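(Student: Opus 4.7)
The plan is to verify the $C^\ast$-cocategory axioms in four steps: (i) that each $\Delta_{A,C}^B$ extends to a well-defined $\ast$-homomorphism on $\Mor^+(A,C)$ (respectively $\Mor^{o+}(A,C)$); (ii) the coassociativity identity; (iii) that each $\epsilon_A$ extends to a $\ast$-homomorphism; and (iv) the counit equations. Steps (ii) and (iv) are identities between $\ast$-homomorphisms, so it suffices to check them on the generators $p_{ab}$, where both sides of coassociativity send $p_{ad}$ to $\sum_{b,c}p_{ab}\otimes p_{bc}\otimes p_{cd}$, and both sides of the counit equations restrict to the identity. For steps (i) and (iii), the projection and PVM relations are standard: $\Delta(p_{ac})^2$ collapses to $\Delta(p_{ac})$ via $p_{ab}p_{ab'}=\delta_{bb'}p_{ab}$, and $\sum_c\Delta(p_{ac}) = \sum_b p_{ab}\otimes 1 = 1$. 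The real work lies in showing that the structural relations specific to each cocategory are preserved.

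For $\mc{R}^+(\sigma)$, the key step is the vanishing relation. For $\mathbf{a}\in R^A$ and $\mathbf{c}\notin R^C$, I would expand
$$\Delta_{A,C}^B(p_{a_1c_1})\cdots\Delta_{A,C}^B(p_{a_nc_n}) = \sum_{\mathbf{b}\in B^n}(p_{a_1b_1}\cdots p_{a_nb_n})\otimes(p_{b_1c_1}\cdots p_{b_nc_n})$$
and partition the sum by whether $\mathbf{b}\in R^B$: when $\mathbf{b}\in R^B$ the right factor vanishes in $\Mor^+(B,C)$ (using $\mathbf{c}\notin R^C$), and when $\mathbf{b}\notin R^B$ the left factor vanishes in $\Mor^+(A,B)$ (using $\mathbf{a}\in R^A$). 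For the counit, $\epsilon_A(p_{a_1b_1}\cdots p_{a_nb_n})=\prod_i\delta_{a_i,b_i}$ is nonzero only when $\mathbf{b}=\mathbf{a}$, which contradicts the simultaneous assumptions $\mathbf{a}\in R^A$ and $\mathbf{b}\notin R^A$.

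The main obstacle is the additional commutation relation in the oracular setting: one must verify that $[\Delta_{A,C}^B(p_{a_ic}),\Delta_{A,C}^B(p_{a_jc'})]=0$ whenever $\mathbf{a}\in R^A$. A direct expansion, combined with the oracular relation in $\Mor^{o+}(A,B)$ (applicable because $a_i$ and $a_j$ appear together in $\mathbf{a}\in R^A$), reduces the commutator to
$$\sum_{b,b'\in B} p_{a_ib}p_{a_jb'}\otimes[p_{bc},p_{b'c'}].$$
The crucial trick is to rewrite each coefficient by inserting resolutions of identity $1 = \sum_{b_k}p_{a_kb_k}$ at the positions $k\neq i,j$: since all of these factors mutually commute in $\Mor^{o+}(A,B)$, one obtains $p_{a_ib}p_{a_jb'} = \sum_{\mathbf{b}\in B^n:\,b_i=b,\,b_j=b'}p_{a_1b_1}\cdots p_{a_nb_n}$, and the vanishing relations restrict this sum to $\mathbf{b}\in R^B$. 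Substituting back and exchanging the order of summation yields
$$[\Delta(p_{a_ic}),\Delta(p_{a_jc'})] = \sum_{\mathbf{b}\in R^B}p_{a_1b_1}\cdots p_{a_nb_n}\otimes[p_{b_ic},p_{b_jc'}].$$
Since $\mathbf{b}\in R^B$, the entries $b_i$ and $b_j$ now appear together in a common relation of $B$, so the inner commutator vanishes by the oracular relation in $\Mor^{o+}(B,C)$, completing the argument.
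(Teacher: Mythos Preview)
Your proposal is correct and follows essentially the same route as the paper's proof: the same partition of the $\sum_{\mathbf{b}}$ expansion by whether $\mathbf{b}\in R^B$ to handle the vanishing relations, and the same ``insert resolutions of identity and use the oracular commutation in $\Mor^{o+}(A,B)$ to reorder, then restrict to $\mathbf{b}\in R^B$ so that the oracular relation in $\Mor^{o+}(B,C)$ kills the remaining commutator'' maneuver for the oracular case. If anything, you are slightly more explicit than the paper about why the oracular commutation in the first leg is needed to justify the reordering step.
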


\begin{proof}
    To show this, we need that the cocomposition and coidentity extend to $\ast$-homomorphisms that satisfy the coproduct and counit relations. First, the relations of $C(T_{A,C}^+)$ are $p_{ac}=p_{ac}^\ast$ for all $a\in A$ and $c\in C$; $p_{ac}^2=p_{ac}$ for all $a\in A$ and $c\in C$; and $\sum_{c\in C}p_{ac}=1$ for all $a\in A$. Note that as $C(T_{A,C}^+)$ is a $C^\ast$-algebra, it follows that $p_{ac}p_{ac'}=\delta_{c,c'}p_{ac}$. These are preserved under $\Delta_{A,C}^B$ as
    \begin{align*}
        &\Delta_{A,C}^B(p_{ac})^\ast=\parens[\Big]{\sum_{b\in B}p_{ab}\otimes p_{bc}}^\ast=\sum_{b\in B}p_{ab}^\ast\otimes p_{bc}^\ast=\sum_{b\in B}p_{ab}\otimes p_{bc}=\Delta_{A,C}^B(p_{ac}),\\
        &\Delta_{A,C}^B(p_{ac})^2=\sum_{b,b'\in B}p_{ab}p_{ab'}\otimes p_{bc}p_{b'c}=\sum_{b,b'\in B}\delta_{b,b'}p_{ab}\otimes p_{bc}p_{b'c}=\sum_{b\in B}p_{ab}\otimes p_{bc}=\Delta_{A,C}^B(p_{ac}),\\
        &\sum_{c\in C}\Delta_{A,C}^B(p_{ac})=\sum_{\substack{b\in B\\c\in C}}p_{ab}\otimes p_{bc}=\sum_{b\in B}p_{ab}\otimes 1=1\otimes 1=\Delta_{A,C}^B(1).
    \end{align*}
    Now, let $R\in\sigma$ and let $\mathbf{a}\in R^A$ and $\mathbf{c}\notin R^C$. Writing $n=\ar(R)$, the relation $p_{a_1c_1}\cdots p_{a_nc_n}=0$ is preserved under $\Delta_{A,C}^B$ as
    \begin{align*}
        \Delta_{A,C}^B(p_{a_1c_1})\cdots\Delta_{A,C}^B(p_{a_nc_n})&=\sum_{b_1,\ldots,b_n\in B}p_{a_1b_1}\cdots p_{a_nb_n}\otimes p_{b_1c_1}\cdots p_{b_nc_n}\\
        &=\sum_{\mathbf{b}\in R^B}p_{a_1b_1}\cdots p_{a_nb_n}\otimes p_{b_1c_1}\cdots p_{b_nc_n}=0,
    \end{align*}
    as $\mathbf{a}\in R^A$. Therefore, $\Delta_{A,C}^B$ extends to a $\ast$-homomorphism. Next, we need coassociativity of the the cocomposition. Since the $\Delta_{A,C}^B$ are $\ast$-homomorphisms, we need only show this property on the generators. In fact,
    \begin{align*}
        (\Delta_{A,C}^B\otimes\id)\Delta_{A,D}^C(p_{ad})&=\sum_{c\in C}\Delta_{A,C}^B(p_{ac})\otimes p_{cd}\\
        &=\sum_{\substack{b\in B\\c\in C}}p_{ab}\otimes p_{bc}\otimes p_{cd}\\
        &=\sum_{b\in B} p_{ab}\otimes\Delta_{B,D}^C(p_{bd})\\
        &=(\id\otimes\Delta_{B,D}^C)\Delta_{A,D}^B(p_{ad}).
    \end{align*}
    We also need to show that the coidentity is a $\ast$-homomorphism. We approach this the same way:
    \begin{align*}
        &\epsilon_A(p_{aa'})^\ast=\delta_{a,a'}^\ast=\delta_{a,a'}=\epsilon_A(p_{aa'}),\\
        &\epsilon_A(p_{aa'})^2=\delta_{a,a'}^2=\epsilon_A(p_{aa'}),\\
        &\sum_{a'\in A}\epsilon_A(p_{aa'})=\sum_{a'\in A}\delta_{a,a'}=1=\epsilon_A(1).
    \end{align*}
    For the remaining relations, suppose $R\in\sigma$, $\mathbf{a}\in R^A$ and $\mathbf{a}'\notin R^A$. Writing $n=\ar(R)$, we find that
    \begin{align*}
        \epsilon_A(p_{a_1a_1'})\cdots\epsilon_A(p_{a_na_n'})=\delta_{a_1,a_1'}\cdots\delta_{a_n,a_n'}=\delta_{\mathbf{a},\mathbf{a}'}=0,
    \end{align*}
    as $\mathbf{a}$ and $\mathbf{a}'$ cannot be equal. Hence $\epsilon_A$ is a $\ast$-homomorphism. As such, to show it also satisfies the counit relations, it suffices to show it on the generators. In fact,
    \begin{align*}
        &(\epsilon_A\otimes\id)\Delta_{A,B}^A(p_{ab})=\sum_{a'\in A}\epsilon_A(p_{aa'})\otimes p_{a'b}=\sum_{a'\in A}\delta_{a,a'}p_{a'b}=p_{ab},\\
        &(\id\otimes\epsilon_B)\Delta_{A,B}^B(p_{ab})=\sum_{b'\in B}p_{ab'}\otimes\epsilon_B(p_{b'b})=\sum_{b'\in B}\delta_{b',b}p_{ab'}=p_{ab}.
    \end{align*}
    Hence $\mc{R}^+(\sigma)$ is well-defined as a $C^\ast$-cocategory

    To show that $\mc{R}^{o+}(\sigma)$ is also well-defined, it suffices to show that the cocomposition and the coidentity preserve the additional commutation relations $[p_{a_ic},p_{a_jc'}]=0$ for all $R\in\sigma$, $\mathbf{a}\in R^A$, $1\leq i,j\leq\ar(R)$, and $c,c'\in C$. In fact, writing $n=\ar(R)$ and assuming without loss of generality that $i<j$, we have that
    \begin{align*}
        [\Delta_{A,C}^B(p_{a_ic}),\Delta_{A,C}^B(p_{a_jc'})]&=\sum_{b,b'\in B}[p_{a_ib}\otimes p_{bc},p_{a_jb'}\otimes p_{b'c'}]\\
        &=\sum_{b,b'\in B}p_{a_ib}p_{a_jb'}\otimes[p_{bc},p_{b'c'}]\\
        &=\sum_{b_1,\ldots,b_n\in B}p_{a_1b_1}\cdots p_{a_nb_n}\otimes[p_{b_ic},p_{b_jc'}]\\
        &=\sum_{\mathbf{b}\in R^B}p_{a_1b_1}\cdots p_{a_nb_n}\otimes[p_{b_ic},p_{b_jc'}]=0,
    \end{align*}
    by the commutation relation on $\Mor^+(B,C)$. It is clear that $[\epsilon_A(p_{a_ia}),\epsilon_A(p_{a_ja'})]=0$, as the codomain of $\epsilon_A$ is a commutative algebra. Hence $\mc{R}^{o+}(\sigma)$ is well-defined as well.
\end{proof}

To close this section, we see how this unified picture captures the classical and quantum notions of morphisms between relational structures. First, the $d$-morphism category is essentially the same as the classical category of relational structures.

\begin{lemma}
    $\mc{R}^d(\sigma)\cong\mc{R}^{od}(\sigma)\cong\mc{R}(\sigma)$.
\end{lemma}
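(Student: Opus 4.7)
The plan is to construct explicit mutually inverse functors between $\mc{R}^d(\sigma)$, $\mc{R}^{od}(\sigma)$, and $\mc{R}(\sigma)$, all of which act as the identity on objects. Since $d=\{\C\}$, a morphism $\pi\in\mor^d(A,B)$ is nothing but a character $\pi:\Mor^+(A,B)\to\C$, and similarly for $\mor^{od}(A,B)$. The strategy is to show that every such character is determined by a classical homomorphism $A\to B$, to verify that this bijection intertwines the cocomposition with ordinary composition, and to observe that the oracular commutation relations are automatic since $\C$ is commutative.

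First I would unpack the data of a character $\pi$ on $\Mor^+(A,B)$. Since each $p_{ab}$ is a projection, $\pi(p_{ab})\in\{0,1\}$, and the relation $\sum_b p_{ab}=1$ forces, for each $a\in A$, a unique $b=f_\pi(a)\in B$ with $\pi(p_{a,f_\pi(a)})=1$. This defines a function $f_\pi:A\to B$. The vanishing relations $p_{a_1b_1}\cdots p_{a_nb_n}=0$ for $\mathbf{a}\in R^A$, $\mathbf{b}\notin R^B$ then force $(f_\pi(a_1),\ldots,f_\pi(a_n))\in R^B$, so $f_\pi$ is a homomorphism. Conversely, for any homomorphism $f:A\to B$, setting $\pi_f(p_{ab})=\delta_{f(a),b}$ extends to a well-defined $\ast$-homomorphism $\Mor^+(A,B)\to\C$: the projection and completeness relations are trivial to check, and the defining homomorphism property of $f$ handles the vanishing relations. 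The maps $\pi\mapsto f_\pi$ and $f\mapsto\pi_f$ are mutual inverses.

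Next I would check that $\pi_f$ automatically satisfies the extra oracular relations $[p_{a_ib},p_{a_jb'}]=0$, since its image lies in the commutative algebra $\C$. Hence $\pi_f$ descends to a character on the quotient $\Mor^{o+}(A,B)$, and conversely every character on $\Mor^{o+}(A,B)$ pulls back to one on $\Mor^+(A,B)$. This establishes a bijection $\mor^d(A,B)\leftrightarrow\mor^{od}(A,B)\leftrightarrow\mor_{\mc{R}(\sigma)}(A,B)$.

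Finally I would verify functoriality. For composition, using the cocomposition $\Delta^B_{A,C}(p_{ac})=\sum_{b\in B}p_{ab}\otimes p_{bc}$ and the canonical isomorphism $\C\otimes\C\cong\C$, one computes
\begin{align*}
    (\pi_g\circ\pi_f)(p_{ac})
    &=(\pi_f\otimes\pi_g)\Delta^B_{A,C}(p_{ac})
    =\sum_{b\in B}\delta_{f(a),b}\,\delta_{g(b),c}
    =\delta_{g(f(a)),c}
    =\pi_{g\circ f}(p_{ac}),
\end{align*}
and for identities, $\epsilon_A(p_{aa'})=\delta_{a,a'}=\pi_{\id_A}(p_{aa'})$. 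Extending the assignments $A\mapsto A$, $\pi\mapsto f_\pi$, $f\mapsto\pi_f$ by identity on objects yields functors in both directions that are mutually inverse, establishing the two isomorphisms simultaneously.

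There is no serious obstacle here; the entire content is the elementary fact that a character on a unital $\ast$-algebra generated by a PVM picks out a single projector with value $1$, together with the observation that characters automatically commute. The only point that requires a line of care is confirming that the oracular relations are vacuous under characters, which is immediate.
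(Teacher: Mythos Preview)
Your proposal is correct and follows essentially the same approach as the paper: extract a classical function from a character via the PVM structure, verify it is a homomorphism using the vanishing relations, construct the inverse assignment $f\mapsto\pi_f$, check functoriality against the cocomposition and counit formulas, and handle the oracular case by observing that characters annihilate commutators. The paper carries out the same steps with slightly more explicit bookkeeping (separately verifying that both $F$ and $G$ are functors), but there is no substantive difference.
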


\begin{proof}
    First, note that for a $\ast$-homomorphism $\pi:\Mor^+(A,B)\rightarrow\C$, every commutator in $\Mor^+(A,B)$ is in the kernel of $\pi$, and hence it factors through $\Mor^{o+}(A,B)$. As such, we get a bijection between $\mor^d(A,B)$ and $\mor^{od}(A,B)$. In the following, we show the isomorphism $\mc{R}^d(\sigma)\cong\mc{R}(\sigma)$ --- by the bijections above, an identical argument can be used to show the other isomorphism $\mc{R}^{od}(\sigma)\cong\mc{R}(\sigma)$.
    
    Next, $\{\pi(p_{ab})\}_{b\in B}$ is a PVM in $\C$. So $\pi(p_{ab})=0$ or $\pi(p_{ab})=1$, and for each $a$ there is a unique $b$ such that $\pi(p_{ab})=1$. This induces a unique function $f:A\rightarrow B$ such that $\pi(p_{ab})=\delta_{b,f(a)}$. We define the functors $F:\mc{R}^d(\sigma)\rightarrow\mc{R}(\sigma)$ and $G:\mc{R}(\sigma)\rightarrow\mc{R}^d(\sigma)$ on objects as $F(A)=G(A)=A$ and on morphisms as $F(\pi)=f$ and $G(f)$ as the extension of a map $p_{ab}\mapsto\delta_{b,f(a)}$ to an $\ast$-homomorphism. It is clear that these are inverses of each other if they are well-defined. It remains to show that they are in fact functors. To show that $G$ is a functor, we need to first show that for all $f\in\mor(A,B)$, $G(f)\in\mor^d(A,B)$. This follows because $G(f)$ preserves the relations of $\Mor^+(A,B)$: let $a\in A$, $b,b'\in B$, $R\in\sigma$, $\mathbf{a}\in R^A$, and $\mathbf{b}\notin R^B$, and writing~$n=\ar(R)$,
    \begin{align*}
        &G(f)(p_{ab})^\ast=\delta_{b,f(a)}=G(f)(p_{ab}),\\
        &G(f)(p_{ab})^2=\delta_{b,f(a)}^2=G(f)(p_{ab}),\\
        &\sum_{b\in B}G(f)(p_{ab})=\sum_{b\in B}\delta_{b,f(a)}=1,\\
        &G(f)(p_{a_1b_1})\cdots G(f)(p_{a_nb_n})=\delta_{b_1,f(a_1)}\cdots\delta_{b_n,f(a_n)}=\delta_{\mathbf{b},f(\mathbf{a})}=0,
    \end{align*}
    as $f(\mathbf{a})\in R^B$ since $f$ is a homomorphism. Next, we need that $G$ preserves compositions. Let $f\in\mor(A,B)$ and $g\in\mor(B,C)$. Then, we have that
    \begin{align*}
        (G(g)\circ G(f))(p_{ac})&=(G(f)\otimes G(g))\Delta_{A,C}^B(p_{ac})\\
        &=\sum_{b\in B} G(f)(p_{ab})\otimes G(g)(p_{bc})\\
        &=\sum_{b\in B}\delta_{b,f(a)}\delta_{c,g(b)}\\
        &=\delta_{c,g(f(a))}=G(g\circ f)(p_{ac}).
    \end{align*}
    As the maps are $\ast$-homomorphisms, this implies that $G(g)\circ G(f)=G(g\circ f)$ as wanted. Finally, $G$ preserves the identity, as $G(\id_A)(p_{aa'})=\delta_{a,a'}=\epsilon_A(p_{aa'})$, so $G$ is a functor.

    We also need to show that $F$ is a functor. Let $\pi\in\mor^d(A,B)$ and suppose that $F(\pi)\notin\mor(A,B)$. Then, there exists $R\in\sigma$ and $\mathbf{a}\in R^A$ such that $F(\pi)(\mathbf{a})\notin R^B$. This means that
    \begin{align*}
        0=\pi(p_{a_1\,F(\pi)(a_1)}\cdots p_{a_n\,F(\pi)(a_n)})=\delta_{F(\pi)(a_1),F(\pi)(a_1)}\cdots\delta_{F(\pi)(a_n),F(\pi)(a_n)}=1,
    \end{align*}
    a contradiction. Hence, $F(\pi)$ is a homomorphism. Next, we need that $F$ preserves compositions. Let $\pi\in\mor^d(A,B)$ and $\varphi\in\mor^d(B,C)$, so $F(\varphi\circ\pi)(a)$ is such that $(\varphi\circ\pi)(p_{ac})=\delta_{c,F(\varphi\circ\pi)(a)}$. But
    $$(\varphi\circ\pi)(p_{ac})=(\pi\otimes\varphi)\Delta_{A,C}^B(p_{ac})=\sum_{b\in B}\pi(p_{ab})\otimes\varphi(p_{bc})=\sum_{b\in B}\delta_{b,F(\pi)(a)}\delta_{c,F(\varphi)(b)}=\delta_{c,F(\varphi)\circ F(\pi)(b)},$$
    so $F(\varphi\circ\pi)=F(\varphi)\circ F(\pi)$. Finally,
    $F(\epsilon_A)(a)=a$, so it is the identity. Hence, $F$ is a functor as well, finishing the proof.
\end{proof}

Next, perfect strategies for the nonlocal game realisations of CSPs can be captured by $Q$-morphism categories.

\begin{lemma}\label{lem:perfect-strats}
    Let $A$ and $B$ be relational structures over $\sigma$, and let $Q\in\{d,c,q,qa,qc\}$. 
    \begin{enumerate}[(i)]
        \item Let $\pi$ be a probability distribution on $\bigcup_{R\in\sigma}\{R\}\times R^A$  with full support. Then, $\mor^{oQ}(A,B)\neq\varnothing$ if and only if there exists $p\in C_{t,Q}$ such that $\mfk{v}(\ttt{G}_{c-v}(A,B,\pi),p)=1$.
        
        \item Let $\pi$ be a probability distribution on $\parens*{\bigcup_{R\in\sigma}\{R\}\times R^A}^2$ such that, for all $a\in A$, the graph $G_a$ is connected, where $G_a$ is the graph with vertices $(R,\mathbf{a})$ for all $R\in\sigma$ and $\mathbf{a}\in R^A$ where there exists $i\in[\ar(R)]$ such that $a_i=a$; and edges $(R,\mathbf{a})\sim_{G_a}(R',\mathbf{a}')$ iff $\pi((R,\mathbf{a}),(R',\mathbf{a}'))>0$. Then, $\mor^{oQ}(A,B)\neq\varnothing$ if and only if there exists $p\in C_{t,Q}$ such that $\mfk{v}(\ttt{G}_{c-c}(A,B,\pi),p)=1$.
        
        \item Suppose $\ar(R)=2$ for all $R\in\sigma$ and let $\pi$ be a probability distribution on $\bigcup_{R\in\sigma}\{R\}\times R^A$  with full support. Then, $\mor^Q(A,B)\neq\varnothing$ if and only if there exists $p\in C_{t,Q}$ such that $\mfk{v}(\ttt{G}_a(A,B,\pi),p)=1$.
    \end{enumerate}
\end{lemma}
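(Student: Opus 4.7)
The plan is to prove each equivalence by establishing both directions. The forward direction (morphism $\Rightarrow$ perfect strategy) is essentially uniform: given $\varphi \in \mor^{(o)Q}(A,B)$ with codomain $\mc{A} \in Q$, combine $\varphi$ with a tracial state $\tau$ on $\mc{A}$ (which exists for each class $d, c, q, qa, qc$) and push the generators $p_{ab}$ forward to build the PVMs of a tracial correlation. In case (iii), the correlation $p(b_1, b_2 \mid a_1, a_2) = \tau(\varphi(p_{a_1 b_1}) \varphi(p_{a_2 b_2}))$ satisfies the assignment-game predicate because the defining relations $p_{a_1 b_1} p_{a_2 b_2} = 0$ for $(a_1, a_2) \in R^A$, $(b_1, b_2) \notin R^B$ force the losing probabilities to vanish. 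In cases (i) and (ii), Alice's PVM at question $(R, \mathbf{a})$ is the joint family $\{\varphi(p_{a_1 b_1}) \cdots \varphi(p_{a_{\ar(R)} b_{\ar(R)}})\}_{\mathbf{b} \in B^{\ar(R)}}$; the oracular commutation makes these genuine projections, the relations on $\mathbf{b} \notin R^B$ ensure Alice always answers into $R^B$, and marginalising over single coordinates yields $\varphi(p_{a_i b})$, which handles both the constraint-variable consistency with Bob and the constraint-constraint consistency between Alice's two questions.

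For the reverse direction, a perfect tracial correlation in $C_{t,Q}$ is realised by PVMs in a von Neumann algebra $\mc{M} \in Q$ together with a tracial state $\tau$, which we may assume faithful on $\mc{M}$ (by restricting to the image). In case (iii), the winning condition reads $\tau(P^{a_1}_{b_1} P^{a_2}_{b_2}) = 0$ for every $(a_1, a_2) \in R^A$ and $(b_1, b_2) \notin R^B$. Since $P^{a_2}_{b_2} P^{a_1}_{b_1} P^{a_2}_{b_2}$ is then a positive element with trace zero (by traciality), faithfulness forces $P^{a_1}_{b_1} P^{a_2}_{b_2} = 0$ in $\mc{M}$, so the assignment $p_{ab} \mapsto P^a_b$ extends to a $\ast$-homomorphism $\Mor^+(A,B) \to \mc{M}$. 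In case (i), the constraint-variable predicate similarly forces $\Phi^{R,\mathbf{a}}_{\mathbf{b}} = 0$ for $\mathbf{b} \notin R^B$, and the Alice--Bob consistency check combined with synchronicity-style arguments for tracial correlations forces $\Phi^{R,\mathbf{a},i}_b = p^{a_i}_b$ for each $i$; the commuting joint-projection structure of Alice's operators then implies both the oracular commutation $[p^{a_i}_b, p^{a_j}_{b'}] = 0$ for $\mathbf{a} \in R^A$ and the vanishing of $p^{a_1}_{b_1} \cdots p^{a_{\ar(R)}}_{b_{\ar(R)}}$ on $\mathbf{b} \notin R^B$, giving the required $\ast$-homomorphism $p_{ab} \mapsto p^a_b$. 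The full-support hypothesis on $\pi$ ensures that every constraint in $A$ is actually tested by the game.

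The main obstacle is case (ii). Here Bob is absent, so there is no canonical label for the analogue of $p^a_b$: a priori the marginal $\Phi^{R,\mathbf{a},i}_b$ depends on the triple $(R, \mathbf{a}, i)$ rather than only on the vertex $a = a_i$. The connectivity hypothesis on $G_a$ is precisely what lets us glue these marginals into a single PVM $\{p^a_b\}_{b \in B}$ per vertex: the constraint-constraint consistency predicate equates two marginals $\Phi^{R,\mathbf{a},i}_b$ and $\Phi^{R',\mathbf{a}',j}_b$ in the GNS representation of $\tau$ whenever $\pi$ places positive weight on the edge $(R,\mathbf{a}) \sim_{G_a} (R',\mathbf{a}')$ with $a_i = a_j' = a$, and path-connectivity of $G_a$ propagates this agreement to all indices hitting $a$. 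Once the common family $\{p^a_b\}$ is extracted, the remaining relations defining $\Mor^{o+}(A,B)$ follow exactly as in case (i) from the joint-projection structure of Alice's PVMs.
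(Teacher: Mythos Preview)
Your proposal is correct and follows essentially the same approach as the paper: both directions are handled exactly as you describe, with the forward direction building tracial correlations from $\varphi(p_{ab})$ via a tracial state on the codomain algebra, and the reverse direction using faithfulness of the trace to upgrade the vanishing correlations into operator identities, with the connectivity of $G_a$ in case~(ii) serving precisely to glue the marginals $\Phi^{R,\mathbf{a},i}_b$ into a single $P^a_b$. The paper in fact only writes out case~(ii) for $Q=qa$ explicitly and declares the rest analogous, so your outline actually contains slightly more detail on cases~(i) and~(iii) than the paper itself.
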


Part (i) for $Q=q$ was shown in \cite{ABdSZ17}.

\begin{proof}
    All the parts of this result are proved in a similar way. We show part (ii) with $Q=qa$ here. First, if $\pi\in\mor^{oqa}(A,B)$, then $\pi:\Mor^+(A,B)\rightarrow\mc{A}$ for a $C^\ast$-algebra $\mc{A}$ that embeds into $\mc{R}^\omega$. Hence, there is a tracial state $\tau$ on $\mc{A}$. This gives rise to a tracial quantum approximate correlation $p(\mathbf{b},\mathbf{b}'|(R,\mathbf{a}),(R',\mathbf{a}'))=\tau(p_{a_1b_1}\cdots p_{a_{\ar(R)}b_{\ar(R)}}p_{a_1'b_1'}\cdots p_{a_{\ar(R)}'b_{\ar(R)}'})$. By construction, we have that $p(\mathbf{b},\mathbf{b}'|(R,\mathbf{a}),(R',\mathbf{a}'))=0$ if $\mathbf{b}\notin R^B$ or $\mathbf{b}'\notin (R')^B$; and if $a_i=a_j'$ and $b_i\neq b_j'$ for some $i,j$, then $p(\mathbf{b},\mathbf{b}'|(R,\mathbf{a}),(R',\mathbf{a}'))=\tau(p_{a_1b_1}\cdots p_{a_{\ar(R)}b_{\ar(R)}}p_{a_ib_i}p_{a_j'b_j'}p_{a_1'b_1'}\cdots p_{a_{\ar(R)}'b_{\ar(R)}'})=0$. From here, it follows that
    \begin{align*}
        \mfk{v}(\ttt{G}_{c-c}(A,B,\pi),p)&=\hspace{-0.2cm}\sum_{(R,\mathbf{a}),(R',\mathbf{a}'),\mathbf{b},\mathbf{b}'}\hspace{-0.2cm}\pi((R,\mathbf{a}),(R',\mathbf{a}'))V(\mathbf{b},\mathbf{b}'|(R,\mathbf{a}),(R',\mathbf{a}'))p(\mathbf{b},\mathbf{b}'|(R,\mathbf{a}),(R',\mathbf{a}'))\\
        &=\sum_{\substack{R,R'\in\sigma,\mathbf{a}\in R^A,\mathbf{a}'\in(R')^A\\\mathbf{b}\in R^B,\mathbf{b}'\in(R')^B\\a_i=a_j'\Rightarrow b_i=b_j'}}\pi((R,\mathbf{a}),(R',\mathbf{a}'))p(\mathbf{b},\mathbf{b}'|(R,\mathbf{a}),(R',\mathbf{a}'))\\
        &=\sum_{R,R'\in\sigma,\mathbf{a}\in R^A,\mathbf{a}'\in(R')^A}\pi((R,\mathbf{a}),(R',\mathbf{a}'))=1.
    \end{align*}
    Conversely, let $p\in C_{t,qa}$ be a perfect correlation for $\ttt{G}_{c-c}(A,B,\pi)$. First, there exist PVMs $\{P^{(R,\mathbf{a})}_{\mathbf{b}}\}_{\mathbf{b}\in R^B}\subseteq\mc{R}^\omega$ for all $R\in\sigma$ and $\mathbf{a}\in R^A$ such that $p(\mathbf{b},\mathbf{b}'|(R,\mathbf{a}),(R',\mathbf{a}'))=\tau(P^{(R,\mathbf{a})}_{\mathbf{b}}P^{(R',\mathbf{a}')}_{\mathbf{b}'})$. Next, since $p$ is perfect, we have that $p(\mathbf{b},\mathbf{b}'|(R,\mathbf{a}),(R',\mathbf{a}'))=0$ whenever there exists $i,j$ such that $a_i=a_j'$ but $b_i\neq b_j'$. Write $P^{(R,\mathbf{a}),i}_b=\sum_{\mathbf{b}:\,b_i=b}P^{(R,\mathbf{a})}_{\mathbf{b}}$. Then, if $a_i=a_j'$ and $\pi((R,\mathbf{a}),(R',\mathbf{a}'))>0$, $\tau(P^{(R,\mathbf{a}),i}_bP^{(R',\mathbf{a}'),j}_{b'})=0$ for all $b\neq b'$. By faithfulness of the trace $P^{(R,\mathbf{a}),i}_bP^{(R',\mathbf{a}'),j}_{b'}P^{(R,\mathbf{a}),i}_b=0$ and therefore $P^{(R,\mathbf{a}),i}_bP^{(R',\mathbf{a}'),j}_{b'}=0$ for all $b'\neq b$, which implies $$P^{(R,\mathbf{a}),i}_b=\sum_{b'\in B}P^{(R,\mathbf{a}),i}_bP^{(R',\mathbf{a}'),j}_{b'}=P^{(R,\mathbf{a}),i}_bP^{(R',\mathbf{a}'),j}_b=\sum_{b'}P^{(R,\mathbf{a}),i}_{b'}P^{(R',\mathbf{a}'),j}_{b}=P^{(R',\mathbf{a}'),j}_{b}.$$ As such, for any $a\in A$, the connectedness of $G_a$ implies that $P^{(R,\mathbf{a}),i}_b=P^{(R',\mathbf{a}'),j}_b$ whenever $a_i=a_j'=a$, allowing us to define $P^a_b=P^{(R,\mathbf{a}),i}_b$. Then, the map $p_{ab}\mapsto P^a_b$ extends to a $\ast$-homomorphism $\Mor^{o+}(A,B)\rightarrow\mc{R}^\omega$ as the $\{P^a_b\}_{b\in B}$ are PVMs, $P^{a_1}_{b_1}\cdots P^{a_{\ar(R)}}_{b_{\ar(R)}}=P^{(R,\mathbf{a})}_{\mathbf{b}}$, and $[P^{a_i}_b,P^{a_j}_{b'}]=0$ if $\mathbf{a}\in R^A$. Hence $\mor^{oqa}(A,B)\neq\varnothing$.
\end{proof}

\subsection{Quantum groups from relational structures}\label{sec:q-groups}

\begin{definition}
    Let $A$ be a relational structure. The \emph{quantum automorphism group of $A$} is the quantum subgroup $\aut^+(A)$ of the quantum permutation group $S_A^+$ with respect to the relations $p_{a_1a_1'}\cdots p_{a_{\ar(R)}a_{\ar(R)}'}=p_{a_1'a_1}\cdots p_{a_{\ar(R)}'a_{\ar(R)}}=0$ for all $R\in\sigma$, $\mathbf{a}\in R^A$, and $\mathbf{a}'\notin R^A$.

    The \emph{oracular quantum automorphism group of $A$} is the quantum subgroup $\aut^{o+}(A)$ of $\aut^+(A)$ with respect to the relations $[p_{a_ia},p_{a_ja'}]=[p_{aa_i},p_{a'a_j}]=0$ for all $a,a'\in A$, $R\in\sigma$, and $\mathbf{a}\in R^A$.
\end{definition}

\begin{lemma}
    $\aut^+(A)$ and $\aut^{o+}(A)$ are compact matrix quantum groups.
\end{lemma}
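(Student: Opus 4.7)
The plan is to apply the CMQG criterion stated in the preliminaries: a quantum space whose algebra of functions is generated by the entries of a matrix $u=(p_{aa'})_{a,a'\in A}$ is a CMQG as soon as $\Delta(p_{aa'})=\sum_{a''\in A}p_{aa''}\otimes p_{a''a'}$ extends to a well-defined $\ast$-homomorphism on the algebra and both $u$ and $u^T$ are invertible. Since $\aut^+(A)$ and $\aut^{o+}(A)$ are defined as quantum subgroups of $S_A^+$, and in $C(S_A^+)$ the matrix $u$ is a magic unitary (so $uu^T=u^Tu=I$), invertibility of $u$ and $u^T$ descends to both quotients automatically. It therefore remains only to check that the coproduct $\Delta$ of $S_A^+$ annihilates each additional defining relation.

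For $\aut^+(A)$, fix $R\in\sigma$ of arity $n$, $\mathbf{a}\in R^A$, and $\mathbf{a}'\notin R^A$, and expand
\begin{align*}
\Delta(p_{a_1a_1'})\cdots\Delta(p_{a_na_n'})=\sum_{\mathbf{b}\in A^n}(p_{a_1b_1}\cdots p_{a_nb_n})\otimes(p_{b_1a_1'}\cdots p_{b_na_n'}).
\end{align*}
Split the sum: if $\mathbf{b}\notin R^A$, the first tensor factor vanishes by the first type of $\aut^+$ relation (applied to $\mathbf{a}\in R^A$ and $\mathbf{b}\notin R^A$); if $\mathbf{b}\in R^A$, the second vanishes by the same relation (applied to $\mathbf{b}\in R^A$ and $\mathbf{a}'\notin R^A$). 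Hence the whole expression is zero. The companion relation $p_{a_1'a_1}\cdots p_{a_n'a_n}=0$ is killed by the mirror argument, invoking the second type of $\aut^+$ relation.

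The main obstacle is the $\aut^{o+}(A)$ case, where one must additionally show that $\Delta$ annihilates the oracular commutators $[p_{a_ia},p_{a_ja'}]$ and $[p_{aa_i},p_{a'a_j}]$ for $\mathbf{a}\in R^A$. For the first, the first oracular relation, applied in the left tensor factor, gives $[p_{a_ib},p_{a_jb'}]=0$, so expanding $[\Delta(p_{a_ia}),\Delta(p_{a_ja'})]$ collapses to
\begin{align*}
[\Delta(p_{a_ia}),\Delta(p_{a_ja'})]=\sum_{b,b'\in A}p_{a_ib}p_{a_jb'}\otimes[p_{ba},p_{b'a'}].
\end{align*}
The key trick is to use the row-sum identities $\sum_{c}p_{a_kc}=1$ (for $k\neq i,j$) together with the oracular commutation among $\{p_{a_kc}\}_{k,c}$ to rewrite
\begin{align*}
p_{a_ib}p_{a_jb'}=\sum_{\substack{\mathbf{b}\in A^n\\b_i=b,\,b_j=b'}}p_{a_1b_1}\cdots p_{a_nb_n},
\end{align*}
after which the $\aut^+$ relations collapse the sum to tuples $\mathbf{b}\in R^A$. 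For each such $\mathbf{b}$, the first oracular relation applied to the tuple $\mathbf{b}\in R^A$ kills the second factor: $[p_{b_ia},p_{b_ja'}]=[p_{ba},p_{b'a'}]=0$. So every summand vanishes. The companion relation $[p_{aa_i},p_{a'a_j}]=0$ is preserved by the symmetric argument with rows and columns swapped, invoking the second type of $\aut^+$ relation and the second oracular relation.
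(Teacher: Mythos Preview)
Your argument is correct and essentially self-contained: you directly expand $\Delta$ applied to each defining relation and kill every summand by splitting on whether the intermediate tuple lies in $R^A$. The paper instead leverages work already done: the earlier lemma establishing that $\mc{R}^+(\sigma)$ and $\mc{R}^{o+}(\sigma)$ are $C^\ast$-cocategories shows $\Delta$ preserves the $\End^+(A)$ and $\End^{o+}(A)$ relations (i.e.\ the ``forward'' ones $p_{a_1a_1'}\cdots p_{a_na_n'}=0$ and $[p_{a_ia},p_{a_ja'}]=0$), so only the transposed relations remain. For those, the paper introduces the transpose map $\tau(p_{aa'})=p_{a'a}$ and the tensor swap $\sigma$, checks the identity $\Delta\tau=(\tau\otimes\tau)\sigma\Delta$ on generators, and uses it to transport preservation of the forward relations to the transposed ones in one line. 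Your direct expansion is what that transpose trick is packaging; the paper's route is shorter because it cites the cocategory lemma, while yours stands on its own and makes the role of the oracular commutation (enabling the reordering $p_{a_ib}p_{a_jb'}=\sum_{\mathbf{b}}p_{a_1b_1}\cdots p_{a_nb_n}$) more explicit.
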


\begin{proof}
    We need first to show that the coproduct $\Delta$ of $S_A^+$ preserves the relations of $\Aut^+(A)=C(\aut^+(A))$ and $\Aut^{o+}(A)=C(\aut^{o+}(A))$. In fact, we know that $\Delta$ preserves the relations of $\End^+(A)$ and $\End^{o+}(A)$, that is $p_{a_1a_1'}\cdots p_{a_{\ar(R)}a_{\ar(R)}'}=0$ and $[p_{a_ia},p_{a_ja'}]=0$. To show that it preserves the other relations consider the transpose map $\tau:\Aut^+(A)\rightarrow\Aut^+(A)$ given by $\tau(p_{aa'})=p_{a'a}$ and the tensor swap map $\sigma:\Aut^+(A)^{\otimes 2}\rightarrow\Aut^+(A)^{\otimes 2}$ given by $\sigma(x\otimes y)=y\otimes x$. $\sigma$ is a standard $\ast$-homomorphism, and $\tau$ is also an $\ast$-homomorphism as the set of relations is invariant under this transpose. $\tau$ exchanges the relations of $\End^+(A)$, which we know $\Delta$ satisfies, with the new relations that we need to show it satisfies, and we have also that
    $$\Delta\tau(p_{aa'})=\sum_{b\in A}p_{a'b}\otimes p_{ba}=(\tau\otimes\tau)\sigma\Delta(p_{aa'}).$$
    Hence, $\Delta$ preserves the new relations of $\Aut^+(A)$ as
    \begin{align*}
        \Delta(p_{a_1'a_1})\cdots \Delta(p_{a_{\ar(R)}'a_{\ar(R)}})&=\Delta\tau(p_{a_1a_1'})\cdots \Delta\tau(p_{a_{\ar(R)}a_{\ar(R)}'})\\
        &=\sigma(\tau\otimes\tau)(\Delta(p_{a_1a_1'})\cdots \Delta(p_{a_{\ar(R)}a_{\ar(R)}'}))\\
        &=0.
    \end{align*}
    The same holds in the same way for the relations $[p_{aa_i},p_{a'a_j}]=0$ of $\Aut^{o+}(A)$.

    To finish, we need that $u=(p_{aa'})$ and $u^T$ are invertible in $\mbb{M}_{|A|}(\Aut^+(A))$. As for any quantum subgroup of $S_{A}^+$, they are unitary, which we show explicitly here:
    \begin{align*}
        &(u^\ast u)_{aa'}=\sum_{b\in A}p^\ast_{ba}p_{ba'}=\sum_{b\in A}\delta_{a,a'}p_{ba}=\delta_{a,a'}1,\\
        &(uu^\ast)_{aa'}=\sum_{b\in A}p_{ab}p_{a'b}^\ast=\sum_{b\in A}\delta_{a,a'}p_{ab}=\delta_{a,a'}1,
    \end{align*}
    so $u$ is unitary. As $u^T=u^\ast$, it is also unitary. The same holds in the same way for $\aut^{o+}(A)$.
\end{proof}

\section{Commutativity gadgets}\label{sec:comm-gadgets}

In this section, we formally define commutativity gadgets, study their properties --- notably their relation to the complexity of entangled CSPs --- and prove a no-go theorem relating the existence of non-classical endomorphisms to the nonexistence of commutativity gagdets. Commutativity gadgets allow classical reductions between CSPs to be lifted to the setting of their entangled analogues, by forcing variables to commute and hence allowing them to express classical constraints. To achieve this, a commutativity gadget needs to satisfy two fundamental properties: for any two variables, there is a quantum homomorphism that assigns the two variables to any two classical values; and for any quantum homorphism, the PVMs corresponding to these variables commute. We formalise this as follows.

\begin{definition}
    Let $A$ be a relational structure over a signature $\sigma$. A \emph{commutativity gadget} for $A$ is a relational structure $G$ over $\sigma$ along with $x,y\in G$, called the distinguished variables, such that
    \begin{enumerate}[(i)]
        \item For all $a,b\in A$, there exists $\pi_{a,b}\in\mor^{qa}(G,A)$ such that $\pi_{a,b}(p_{xa})=\pi_{a,b}(p_{yb})=1$,
        \item For all $\pi\in\mor^{qa}(G,A)$ and $a,b\in A$, $[\pi(p_{xa}),\pi(p_{yb})]=0$.
    \end{enumerate}
    We call $(G,x,y)$ an \emph{oracular commutativity gadget} if the same properties holds with respect to $\mor^{oqa}(G,A)$.
\end{definition}

One could also make an analogous definition for commuting operator homomorphisms, but we focus in this work on quantum (approximate) homomorphisms.

The remainder of this section is organised as follows. In \cref{sec:robust}, we investigate robust commutativity gadgets, that allow us to construct gapped reductions. In \cref{sec:rel-oracular}, we give conditions under which oracular and non-oracular commutativity gadgets may be related. In \cref{sec:comm-gadgets-complexity}, we show how the existence of robust commutativity gadgets gives rise to gapped hardness of entangled CSPs. In \cref{sec:nogo}, we prove that non-classical endomorphisms imply nonexistence of a commutativity gadget.

\subsection{Robust commutativity gadgets}\label{sec:robust}

To capture approximate notions of satisfiability, we need to work with stronger notions of commutativity gadgets that allow us to provide guarantees on the defects of weighted algebras.

\begin{definition}
    A \emph{$c-c$-robust commutativity gadget} for $A$ over $\sigma$ is an oracular commutativity gadget $(G,x,y)$ such that for all $\varepsilon>0$ there exists $\delta>0$ where if $\defect(\tau)<\delta$ for a finite-dimensional tracial state on $\Mor^{c-c}_{\mbb{u}}(G,A)$, then
    $$\frac{1}{m^2}\sum_{\substack{R\in\sigma,\mathbf{x}\in R^G,i\in[\ar(R)]:\,x_i=x\\S\in\sigma,\mathbf{y}\in S^G,j\in[\ar(R)]:\,y_j=y\\a,b\in A}}\norm{[\Phi^{R,\mathbf{x},i}_a,\Phi^{S,\mathbf{y},j}_b]}_\tau^2<\varepsilon,$$
    where $m=\sum_{R\in\sigma}|R^A|$ so $\mbb{u}((R,\mathbf{x}),(S,\mathbf{y}))=\frac{1}{m^2}$.

    A \emph{$c-v$-robust commutativity gadget} for $A$ over $\sigma$ is an oracular commutativity gadget $(G,x,y)$ such that for all $\varepsilon>0$ there exists $\delta>0$ where if $\defect(\tau)<\delta$ for a finite-dimensional tracial state on $\Mor^{c-v}_{\mbb{u}}(G,A)$, then
    $$\sum_{a,b\in A}\norm{[p^x_a,p^y_b]}_\tau^2<\varepsilon.$$

    An \emph{$a$-robust commutativity gadget} for $A$ over $\sigma$ is an commutativity gadget $(G,x,y)$ such that for all $\varepsilon>0$ there exists $\delta>0$ where if $\defect(\tau)<\delta$ for a finite-dimensional tracial state on $\Mor^{a}_{\mbb{u}}(G,A)$, then
    $$\sum_{a,b\in A}\norm{[p^x_a,p^y_b]}_\tau^2<\varepsilon.$$
\end{definition}

\begin{lemma}
    $(G,x,y)$ is a $c-c$-robust commutativity gadget if and only if it is a $c-v$-robust commutativity gadget.
\end{lemma}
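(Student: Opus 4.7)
My plan is to prove both implications by constructing traces on one weighted algebra from traces on the other with controlled defect, and tracking how the commutator sums transform. Both conditions are, in essence, asserting that when the defect of a trace is small, the commutators of the natural ``single-variable'' PVMs at $x$ and $y$ are small; the two definitions only differ in whether these PVMs appear as primitive generators $p^x_a, p^y_b$ (the $c-v$ case) or are reconstructed as marginals $\Phi^{R,\mathbf{x},i}_a$ of constraint PVMs (the $c-c$ case).

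For the direction $c-v \Rightarrow c-c$, given a finite-dimensional trace $\tau$ on $\Mor^{c-c}_{\mbb{u}}(G,A)$ with $\defect(\tau)$ small, I would fix for each variable $v \in V(G)$ an occurrence $(R_v,\mathbf{x}^v,i_v)$ with $x^v_{i_v}=v$ and extend $\tau$ to a trace $\tilde\tau$ on $\Mor^{c-v}_{\mbb{u}}(G,A)$ by setting $p^v_a := \Phi^{R_v,\mathbf{x}^v,i_v}_a$ (these marginals are projections). The $c-v$-defect terms $\norm{\Phi^{R,\mathbf{x}}_{\mathbf{b}}(1 - p^{x_i}_{b_i})}^2_{\tilde\tau}$ reduce to $\norm{\Phi^{R,\mathbf{x},i}_{b_i} - \Phi^{R_{x_i},\mathbf{x}^{x_i},i_{x_i}}_{b_i}}^2_\tau$, and since under $\mbb{u}$ every pair of constraint occurrences containing a common variable has positive weight in $\mu_{c-c,\mbb{u}}$, this quantity is directly controlled by $\defect(\tau)$. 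Applying $c-v$-robustness yields a bound on $\sum_{a,b}\norm{[p^x_a, p^y_b]}^2_{\tilde\tau}$, which is the commutator bound for our chosen occurrences; using marginal-closeness once more allows averaging over all occurrences of $x$ and $y$, producing the required $c-c$-robust conclusion.

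For the opposite direction $c-c \Rightarrow c-v$, the natural inclusion $\Mor^{c-c}_{\mbb{u}}(G,A) \hookrightarrow \Mor^{c-v}_{\mbb{u}}(G,A)$ lets me restrict any trace $\tau$ with small $c-v$-defect to a trace on the $c-c$ algebra. A $c-c$-defect term $\norm{\Phi^{R,\mathbf{a}}_{\mathbf{b}}\Phi^{R',\mathbf{a}'}_{\mathbf{b}'}}^2_\tau$ (for a shared variable $a_i=a'_j$ with $b_i\ne b'_j$) can be bounded by inserting $p^{a_i}_{b_i}$ and $p^{a'_j}_{b'_j}$ on either side, using $\Phi^{R,\mathbf{a},i}_{b_i} \approx p^{a_i}_{b_i}$ from the $c-v$-defect and the exact orthogonality $p^{a_i}_{b_i}p^{a_i}_{b'_j}=0$. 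Applying $c-c$-robustness to this restricted trace and then translating the resulting marginal commutator bound back to $[p^x_a, p^y_b]$ via $p^x_a \approx \Phi^{R,\mathbf{x},i}_a$ completes this direction.

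The main technical obstacle will be quantitatively matching $\delta$ and $\varepsilon$ in both directions, in particular ensuring that a threshold $\delta'$ on one side can be chosen small enough to invoke the given $\delta$ on the other. This will require the squared-norm inequality $\abs{\sum_i a_i}^2 \le k\sum_i \abs{a_i}^2$ from the preliminaries to control averages over constraint positions in terms of single-term defects, together with routine tracial Cauchy--Schwarz to pass between $\norm{PQ}_\tau$, $\norm{P(1-Q)}_\tau$, and $\norm{P-Q}_\tau$ for projections $P,Q$.
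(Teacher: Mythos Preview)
Your proposal is correct and follows essentially the same approach as the paper. The paper cites \cite{CM24} for the existence of a $4L$-homomorphism $\alpha:\Mor^{c-c}_{\mbb{u}}(G,A)\rightarrow\Mor^{c-v}_{\mbb{u}}(G,A)$ (the inclusion you describe) and an $m$-homomorphism $\beta:\Mor^{c-v}_{\mbb{u}}(G,A)\rightarrow\Mor^{c-c}_{\mbb{u}}(G,A)$ acting as identity on the $\Phi^{R,\mathbf{a}}_{\mathbf{b}}$ (your ``choose a fixed occurrence $p^v_a:=\Phi^{R_v,\mathbf{x}^v,i_v}_a$'' map), then proves the two-way inequalities between $\sum_{a,b}\norm{[p^x_a,p^y_b]}_\tau^2$ and the averaged $\Phi$-commutator sum inside $\Mor^{c-v}$ via the identity $\sum_{a}\norm{\Phi^{R,\mathbf{x},i}_a-p^{x_i}_a}_\tau^2=2\sum_{\mathbf{a}\in R^A}\norm{\Phi^{R,\mathbf{x}}_{\mathbf{a}}(1-p^{x_i}_{a_i})}_\tau^2$ and the squared-sum inequality you anticipate.
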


\begin{proof}
     The inclusion map is a $4L$-homomorphism $\alpha:\Mor^{c-c}_{\mbb{u}}(G,A)\rightarrow\Mor^{c-v}_{\mbb{u}}(G,A)$, where $L=\max_{R\in\sigma}\ar(R)$, and there exists an $m$-homomorphism $\beta:\Mor^{c-v}_{\mbb{u}}(G,A)\rightarrow\Mor^{c-c}_{\mbb{u}}(G,A)$, which is identity on the $\Phi^{R,\mathbf{a}}_{\mathbf{b}}$~\cite{CM24}. As such, to show the equivalence of the robust gadgets, it suffices to relate the values of $\frac{1}{m^2}\sum_{\substack{R\in\sigma,\mathbf{x}\in R^G,i\in[\ar(R)]:\,x_i=x\\S\in\sigma,\mathbf{y}\in S^G,j\in[\ar(R)]:\,y_j=y\\a,b\in A}}\norm{[\Phi^{R,\mathbf{x},i}_a,\Phi^{S,\mathbf{y},j}_b]}_\tau^2$ and $\sum_{a,b\in A}\norm{[p^x_a,p^y_b]}_\tau^2$ for a tracial state $\tau$ on $\Mor^{c-v}_\pi(G,A)$. First,
    \begin{align*}
        \sum_{a,b\in A}\norm{[\Phi^{R,\mathbf{x},i}_a,\Phi^{s,\mathbf{y},j}_b]}_\tau^2&=\sum_{a,b\in A}\norm{[p^{x_i}_a,p^{y_j}_b]+[\Phi^{R,\mathbf{x},i}_a-p^{x_i}_a,p^{y_j}_b]+[\Phi^{R,\mathbf{x},i}_a,\Phi^{S,\mathbf{y},j}_b-p^{y_j}_b]}_\tau^2\\
        &\leq 5\sum_{a,b\in A}\parens*{\norm{[p^{x_i}_a,p^{y_j}_b]}_\tau^2+2\norm{(\Phi^{R,\mathbf{x},i}_a-p^{x_i}_a)p^{y_j}_b}_\tau^2+2\norm{(\Phi^{S,\mathbf{y},j}_b-p^{y_j}_b)\Phi^{R,\mathbf{x},i}_a}_\tau^2}\\
        &=5\sum_{a,b\in A}\norm{[p^{x_i}_a,p^{y_j}_b]}_\tau^2+10\sum_{a\in A}\parens*{\norm{\Phi^{R,\mathbf{x},i}_a-p^{x_i}_a}_\tau^2+\norm{\Phi^{S,\mathbf{y},j}_a-p^{y_j}_a}_\tau^2}.
    \end{align*}
    Next,
    \begin{align*}
        \sum_{a\in A}\norm{\Phi^{R,\mathbf{x},i}_a-p^{x_i}_a}_\tau^2&=\sum_{a\in A}\parens*{\tau(p^{x_i}_a-\Phi^{R,\mathbf{x},i}_a)+2\tau(\Phi^{R,\mathbf{x},i}_a(1-p^{x_i}_a))}\\
        &=\tau(1-1)+2\sum_{\mathbf{a}\in R^A}\tau(\Phi^{R,\mathbf{x}}_{\mathbf{a}}(1-p^{x_i}_{a_i}))\\
        &=2\sum_{\mathbf{a}\in R^A}\norm{\Phi^{R,\mathbf{x}}_{\mathbf{a}}(1-p^{x_i}_{a_i})}_\tau^2.
    \end{align*}
    Hence, putting these together
    \begin{align*}
        \frac{1}{m^2}\sum_{\substack{R,\mathbf{x},i:\,x_i=x\\S,\mathbf{y},j:\,y_j=y\\a,b\in A}}\norm{[\Phi^{R,\mathbf{x},i}_a,\Phi^{S,\mathbf{y},j}_b]}_\tau^2&\leq\frac{5}{m^2}\sum_{\substack{R,\mathbf{x},i:\,x_i=x\\S,\mathbf{y},j:\,y_j=y\\a,b\in A}}\norm{[p^x_a,p^y_b]}_\tau^2+\frac{20}{m^2}\sum_{\substack{R,\mathbf{x},i:\,x_i=x\\S,\mathbf{y},j:\,y_j=y\\\mathbf{a}\in R^A}}\norm{\Phi^{R,\mathbf{x}}_{\mathbf{a}}(1-p^{x_i}_{a_i})}_\tau^2\\
        &\qquad\qquad+\frac{20}{m^2}\sum_{\substack{R,\mathbf{x},i:\,x_i=x\\S,\mathbf{y},j:\,y_j=y\\\mathbf{b}\in S^B}}\norm{\Phi^{S,\mathbf{y}}_{\mathbf{b}}(1-p^{y_j}_{b_j})}_\tau^2\\
        &\leq5L^2\sum_{a,b\in A}\norm{[p^x_a,p^y_b]}_\tau^2+\frac{40}{m}\sum_{\substack{R,\mathbf{x},i\\\mathbf{a}\in R^A}}\norm{\Phi^{R,\mathbf{x}}_{\mathbf{a}}(1-p^{x_i}_{a_i})}_\tau^2\\
        &\leq5L^2\sum_{a,b\in A}\norm{[p^x_a,p^y_b]}_\tau^2+40L\defect(\tau).
    \end{align*}
    On the other hand, by a similar argument,
    \begin{align*}
        \sum_{a,b\in A}\norm{[p^x_a,p^y_b]}_\tau^2&\leq5\sum_{a,b\in A}\norm{[\Phi^{R,\mathbf{x},i}_a,\Phi^{S,\mathbf{y},j}_b]}_\tau^2+10\sum_{a\in A}\parens*{\norm{\Phi^{R,\mathbf{x},i}_a-p^{x_i}_a}_\tau^2+\norm{\Phi^{S,\mathbf{y},j}_a-p^{y_j}_a}_\tau^2}\\
        &\leq 5\sum_{\substack{R,\mathbf{x},i:\,x_i=x\\S,\mathbf{y},j:\,y_j=y\\a,b\in A}}\norm{[\Phi^{R,\mathbf{x},i}_a,\Phi^{S,\mathbf{y},j}_b]}_\tau^2+40mL\defect(\tau).
    \end{align*}

    Now, suppose $(G,x,y)$ is a $c-c$-robust commutativity gadget. Let $\varepsilon>0$. Then, there exists $\delta>0$ such that for any trace $\tau$ on $\Mor^{c-c}_\pi(G,A)$ with $\defect(\tau)<\delta$, $\frac{1}{m^2}\sum_{\substack{R,\mathbf{x},i:\,x_i=x\\S,\mathbf{y},j:\,y_j=y\\a,b\in A}}\norm{[\Phi^{R,\mathbf{x},i}_a,\Phi^{S,\mathbf{y},j}_b]}_\tau^2<\frac{\varepsilon}{10m^2}$. Now, let $\delta'=\min\set*{\frac{\varepsilon}{80mL},\frac{\delta}{4L}}$. Let $\tau$ be a trace on $\Mor^{c-v}_\pi(G,A)$ such that $\defect(\tau)<\delta'$. Then, $\defect(\tau\circ\alpha)\leq 4L\defect(\tau)<4L\delta'\leq\delta$, so $\frac{1}{m^2}\sum_{\substack{R,\mathbf{x},i:\,x_i=x\\S,\mathbf{y},j:\,y_j=y\\a,b\in A}}\norm{[\Phi^{R,\mathbf{x},i}_a,\Phi^{S,\mathbf{y},j}_b]}_{\tau\circ\alpha}^2<\frac{\varepsilon}{10m^2}$. By the inequalities above and the fact that $\alpha$ acts as identity,
    \begin{align*}
        \sum_{a,b\in A}\norm{[p^x_a,p^y_b]}_\tau^2<5\frac{\varepsilon}{10}+40mL\defect(\tau)<\varepsilon,
    \end{align*}
    so $(G,x,y)$ is a $c-v$-robust commutativity gadget.

    Conversely, suppose that $(G,x,y)$ is a $c-v$-robust commutativity gadget. Let $\varepsilon>0$. There exists $\delta>0$ such that for any tracial state $\tau$ on $\Mor^{c-v}_\pi(G,A)$ such that $\defect(\tau)<\delta$, $\sum_{a,b\in A}\norm{[p^x_a,p^y_b]}_\tau^2<\frac{\varepsilon}{10}$. Let $\delta'=\min\set*{\frac{\varepsilon}{80L^3},\frac{\delta}{m}}$. Suppose $\tau$ is a tracial state on $\Mor^{c-c}_\pi(G,A)$ such that $\defect(\tau)<\delta'$. Then $\defect(\tau\circ\beta)< m\delta'\leq\delta$, so $\sum_{a,b\in A}\norm{[p^x_a,p^y_b]}_{\tau\circ\beta}^2\leq\frac{\varepsilon}{10L^2}$. Then, by the inequalities above,
    \begin{align*}
        \frac{1}{m^2}\sum_{\substack{R,\mathbf{x},i:\,x_i=x\\S,\mathbf{y},j:\,y_j=y\\a,b\in A}}\norm{[\Phi^{R,\mathbf{x},i}_a,\Phi^{S,\mathbf{y},j}_b]}_\tau^2&=\frac{1}{m^2}\sum_{\substack{R,\mathbf{x},i:\,x_i=x\\S,\mathbf{y},j:\,y_j=y\\a,b\in A}}\norm{[\Phi^{R,\mathbf{x},i}_a,\Phi^{S,\mathbf{y},j}_b]}_{\tau\circ\beta}^2<5L^2\frac{\varepsilon}{10L^2}+40L\defect(\tau)<\varepsilon,
    \end{align*}
    so $(G,x,y)$ is a $c-c$-robust commutativity gadget.
\end{proof}

Generally, as the above proof demonstrates, working with robust commutativity gadgets can be quite unwieldy. However, we can instead work with a stronger notion.

\begin{definition}
    Let $A$ be a relational structure over a signature $\sigma$. An \emph{algebraic commutativity gadget} for $A$ is a commutativity gadget $(G,x,y)$ such that for all $a,b\in A$, $[p_{xa},p_{yb}]=0$ in $\Mor^+(G,A)$. An \emph{oracular algebraic commutativity gadget} is a oracular commutativity gadget $(G,x,y)$ such that the same commutation relations hold in $\Mor^{o+}(G,A)$.
\end{definition}

\begin{lemma}
    Suppose $(G,x,y)$ is an algebraic commutativity gadget for $A$. Then it is an $a$-robust commutativity gadget.
\end{lemma}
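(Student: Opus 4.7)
The plan is to prove the lemma via a compactness argument on the space of tracial states. The key point is that the algebraic hypothesis $[p_{xa},p_{yb}]=0$ in $\Mor^+(G,A)$ is exactly what forces $\sum_{a,b}\|[p^x_a,p^y_b]\|_\tau^2=0$ at every tracial state of zero defect, and a compactness argument converts this into the continuous $(\varepsilon,\delta)$ dependence required by the definition of a robust commutativity gadget.

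First, I would lift every finite-dimensional tracial state $\tau$ on $\Mor^a_{\mathbb{u}}(G,A)$ to a tracial state on the universal $C^\ast$-algebra $\mathcal{U}$ generated by the PVM relations. This is legitimate because the generators are projections, hence norm-bounded, so every finite-dimensional GNS representation extends to $\mathcal{U}$. The set $T(\mathcal{U})$ of tracial states is weak-$\ast$ compact by Banach--Alaoglu, and the two relevant functionals
\begin{equation*}
\defect(\tau)=\sum_e \mu_{a,\mathbb{u}}(e)\,\tau(e^\ast e),\qquad F(\tau)=\sum_{a,b\in A}\tau\bigl([p_{xa},p_{yb}]^\ast[p_{xa},p_{yb}]\bigr),
\end{equation*}
are finite linear combinations of evaluations $\tau\mapsto\tau(x)$, hence weak-$\ast$ continuous on $T(\mathcal{U})$.

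Fix $\varepsilon>0$ and consider the closed (hence compact) set $K_\varepsilon=\{\tau\in T(\mathcal{U}):F(\tau)\geq\varepsilon\}$. Continuity of $\defect$ forces $\delta:=\inf_{\tau\in K_\varepsilon}\defect(\tau)$ to be attained, so the lemma reduces to showing $\delta>0$; equivalently, I must show that every $\tau$ with $\defect(\tau)=0$ satisfies $F(\tau)=0$. If $\defect(\tau)=0$, then $\tau(e^\ast e)=0$ for every $e\in\supp\mu_{a,\mathbb{u}}$; since $\tau$ is a trace, the set $J_\tau=\{a:\tau(a^\ast a)=0\}$ is a two-sided $\ast$-ideal containing $\supp\mu_{a,\mathbb{u}}$, and hence contains $\ang*{\supp\mu_{a,\mathbb{u}}}$. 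The GNS representation $\pi_\tau$ annihilates $J_\tau$ and so factors through the $C^\ast$-completion of the corresponding quotient, namely $\Mor^+(G,A)$. By the algebraic commutativity hypothesis $[p_{xa},p_{yb}]=0$ in $\Mor^+(G,A)$, so $\pi_\tau$ sends this commutator to $0$ and $F(\tau)=0$, as required.

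The main technical obstacle will be the bookkeeping in the first step: making precise the passage from a finite-dimensional tracial state on the $\ast$-algebra $\Mor^a_{\mathbb{u}}(G,A)$ to a genuine tracial state on $\mathcal{U}$, and identifying $\Mor^+(G,A)$ with the universal $C^\ast$-completion of $\mathcal{U}/\ang*{\supp\mu_{a,\mathbb{u}}}$. An equivalent and perhaps cleaner presentation is the sequential contrapositive: assume the conclusion fails, pick finite-dimensional tracial states $\tau_n$ with $\defect(\tau_n)\to 0$ yet $F(\tau_n)\geq\varepsilon$, extract a weak-$\ast$ limit $\tau_\ast\in T(\mathcal{U})$, and derive a contradiction from $\defect(\tau_\ast)=0$ and $F(\tau_\ast)\geq\varepsilon$ using the $\Mor^+(G,A)$ step above.
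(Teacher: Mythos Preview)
Your proof is correct and takes a genuinely different route from the paper. The paper argues explicitly and quantitatively: since $[p_{xa},p_{yb}]=0$ in $\Mor^+(G,A)$, the commutator $[p^x_a,p^y_b]$ lies in the infinitesimal ideal of $\Mor^a_{\mathbb{u}}(G,A)/\langle\supp\mu_{a,\mathbb{u}}\rangle$, so one can write $[p^x_a,p^y_b]=\sum_i v_i r_i w_i$ modulo an infinitesimal with each $r_i\in\supp\mu_{a,\mathbb{u}}$; archimedeanity then bounds $v_i^\ast v_i\leq\alpha_i$ and $w_iw_i^\ast\leq\beta_i$, and a direct estimate yields $\sum_{a,b}\|[p^x_a,p^y_b]\|_\tau^2\leq C\cdot\defect(\tau)$ for an explicit constant $C$, so that $\delta=\varepsilon/C$ works with linear dependence. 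Your Banach--Alaoglu argument is softer: it produces \emph{some} $\delta(\varepsilon)>0$ but no modulus. For the statement as written this suffices, and indeed the downstream reduction in \cref{thm:cm24} convexifies whatever modulus it is handed; however, the paper's linear bound is more informative and sidesteps the bookkeeping of lifting to $\mathcal{U}$ and identifying the relevant quotient with $\Mor^+(G,A)$. On the other hand, your approach is conceptually cleaner and immediately more general: it uses only that $F$ vanishes on the zero-defect locus, so it would apply verbatim to any polynomial consequence of the relations, not just these particular commutators.
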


\begin{proof}
    Since $[p_{xa},p_{yb}]=0$ in $\Mor^+(G,A)$ for each $a,b\in A$, we know that $[p^x_{a},p^y_{b}]$ is an infinitesimal element of the quotient of $\Mor^a_{\mbb{u}}(G,A)$ by $\ang*{\supp\mu_{a,\mbb{u}}}$, which is the ideal generated by the set of relations $\scr{R}=\set*{p^{x_1}_{a_1}\cdots p^{x_{\ar(R)}}_{a_{\ar(R)}}}{R\in\sigma,\mathbf{x}\in R^G,\mathbf{a}\notin R^A}$. Then, we have that there exist $N_{a,b}\in\N$, $r_{a,b,i}\in \scr{R}$, and $v_{a,b,i},w_{a,b,i}\in\Mor^a_{\mbb{u}}(G,A)$ such that $[p^x_{a},p^y_{b}]-\sum_{i=1}^{N_{a,b}}v_{a,b,i}r_{a,b,i}w_{a,b,i}\in I(\Mor^a_{\mbb{u}}(G,A))$, the infinitesimal ideal of $\Mor^a_{\mbb{u}}(G,A)$, so for any state $\rho$ on $\Mor^a_{\mbb{u}}(G,A)$, $$\sum_{a,b\in A}\rho(\abs*{[p^x_{a},p^y_{b}]}^2)=\sum_{a,b\in A}\sum_{i=1}^{N_{a,b}}\rho\parens[\Bigg]{\abs[\Big]{\sum_{i=1}^{N_{a,b}}v_{a,b,i}r_{a,b,i}w_{a,b,i}}^2}.$$ Further, since $\Mor^{a}_{\mbb{u}}(G,A)$ is archimedean with respect to the $\ast$-positive cone of sums-of-squares, there exist $\alpha_{a,b,i},\beta_{a,b,i}\in \mbb{R}_{\geq 0}$ such that $v_{a,b,i}^\ast v_{a,b,i}\leq \alpha_{a,b,i}$ and $w_{a,b,i} w_{a,b,i}^\ast\leq\beta_{a,b,i}$. Now, let $\tau$ be a tracial state on $\Mor^a_\pi(G,A)$. Then,
    \begin{align*}
        \sum_{a,b\in A}\norm{[p^x_{a},p^y_{b}]}_\tau^2&\leq\sum_{a,b\in A}N_{a,b}\sum_{i=1}^{N_{a,b}}\norm{v_{a,b,i}r_{a,b,i}w_{a,b,i}}_\tau^2\\
        &=\sum_{a,b\in A}N_{a,b}\sum_{i=1}^{N_{a,b}}\tau(r_{a,b,i}^\ast v_{a,b,i}^\ast v_{a,b,i}r_{a,b,i}w_{a,b,i}w_{a,b,i}^\ast)\\
        &\leq\sum_{a,b\in A}N_{a,b}\sum_{i=1}^{N_{a,b}}\alpha_{a,b,i}\beta_{a,b,i}\tau(r_{a,b,i}^\ast r_{a,b,i})\\
        &\leq\sum_{a,b\in A}N_{a,b}\sum_{i=1}^{N_{a,b}}\alpha_{a,b,i}\beta_{a,b,i}m\defect(\tau),
    \end{align*}
    where $m=\sum_{R\in\sigma}|R^G|$. Write $C=m\sum_{a,b\in A}N_{a,b}\sum_{i=1}^{N_{a,b}}\alpha_{a,b,i}\beta_{a,b,i}$. Now, let $\varepsilon>0$ and let $\delta=\frac{\varepsilon}{C}$. By the above, if $\tau$ is a tracial state such that $\defect(\tau)<\delta$, then $\sum_{a,b}\norm{[p^x_{a},p^y_{b}]}_\tau^2<\varepsilon$.
\end{proof}

\begin{lemma}
    Suppose $(G,x,y)$ is an oracular algebraic commutativity gadget for $A$. Then it is an $c-v$-robust commutativity gadget.
\end{lemma}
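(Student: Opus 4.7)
The plan is to mirror the proof of the preceding lemma, making two substitutions throughout: replace the assignment algebra $\Mor^a_{\mbb{u}}(G,A)$ by the constraint-variable algebra $\Mor^{c-v}_{\mbb{u}}(G,A)$, and replace $\Mor^+(G,A)$ by $\Mor^{o+}(G,A)$. The bridge between the hypothesis and the weighted algebra is the identity noted earlier in the excerpt, namely
\[
    \Mor^{o+}(G,A)=C_u^\ast(\Mor^{c-v}_{\mbb{u}}(G,A)/\ang*{\supp\mu_{c-v,\mbb{u}}}),
\]
which holds under the mild full-support condition satisfied by the uniform distribution. Since $(G,x,y)$ is an oracular algebraic commutativity gadget, $[p_{xa},p_{yb}]=0$ in $\Mor^{o+}(G,A)$ for all $a,b\in A$, so $[p^x_a,p^y_b]$ is an infinitesimal element of the quotient of $\Mor^{c-v}_{\mbb{u}}(G,A)$ by the ideal generated by $\supp\mu_{c-v,\mbb{u}}$.

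Lifting, I extract $N_{a,b}\in\N$, elements $r_{a,b,i}\in\supp\mu_{c-v,\mbb{u}}$ (each of the shape $\Phi^{R,\mathbf{g}}_{\mathbf{a}}(1-p^{g_i}_{a_i})$), and $v_{a,b,i},w_{a,b,i}\in\Mor^{c-v}_{\mbb{u}}(G,A)$ such that
\[
    [p^x_a,p^y_b]-\sum_{i=1}^{N_{a,b}} v_{a,b,i}\,r_{a,b,i}\,w_{a,b,i}\in I(\Mor^{c-v}_{\mbb{u}}(G,A)).
\]
By archimedeanness, I then pick $\alpha_{a,b,i},\beta_{a,b,i}\in\R_{\geq 0}$ with $v_{a,b,i}^\ast v_{a,b,i}\leq\alpha_{a,b,i}\cdot 1$ and $w_{a,b,i}w_{a,b,i}^\ast\leq\beta_{a,b,i}\cdot 1$.

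For any tracial state $\tau$, infinitesimals contribute nothing to $\norm*{\cdot}_\tau$, so applying Cauchy--Schwarz to the sum of $N_{a,b}$ terms and then the same trace-cyclic estimate as in the previous lemma yields
\[
    \norm*{[p^x_a,p^y_b]}_\tau^2\leq N_{a,b}\sum_{i=1}^{N_{a,b}}\alpha_{a,b,i}\beta_{a,b,i}\,\tau(r_{a,b,i}^\ast r_{a,b,i}).
\]
Since each $r_{a,b,i}$ has weight $\mu_{c-v,\mbb{u}}(r_{a,b,i})=\tfrac{1}{m\,\ar(R)}\geq\tfrac{1}{mL}$ with $m=\sum_{R\in\sigma}|R^G|$ and $L=\max_{R\in\sigma}\ar(R)$, we have $\tau(r_{a,b,i}^\ast r_{a,b,i})\leq mL\cdot\defect(\tau)$. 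Summing over $a,b\in A$ gives $\sum_{a,b}\norm*{[p^x_a,p^y_b]}_\tau^2\leq C\cdot\defect(\tau)$ for an absolute constant $C$ depending only on $G$, $A$, and the witness data, and I finish by setting $\delta=\varepsilon/C$.

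The argument involves no genuinely new step beyond the previous lemma; the only subtlety is to check that the quotient-and-complete identification above really does allow the lift of an infinitesimal element to a sum $\sum_i v_i r_i w_i$ modulo $I(\Mor^{c-v}_{\mbb{u}}(G,A))$, but this is identical bookkeeping to the preceding lemma, so I do not anticipate any serious obstacle.
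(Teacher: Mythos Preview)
Your proposal is correct and takes essentially the same approach as the paper: the paper omits the proof entirely, stating that it ``follows along exactly the same lines as the previous lemma,'' and your write-up makes precisely the two substitutions you name (assignment algebra $\to$ constraint-variable algebra, $\Mor^+ \to \Mor^{o+}$) and tracks the constant through the altered weight $\tfrac{1}{m\,\ar(R)}$.
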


We omit the proof as it follows along exactly the same lines as the previous lemma.

\begin{question}
    Do there exist (oracular) commutativity gadgets or robust commutativity gadgets that are not algebraic?
\end{question}

\subsection{Relating oracular and non-oracular commutativity gadgets} \label{sec:rel-oracular}

\begin{lemma}\label{lem:classical-to-oracular}
    Suppose $(G,x,y)$ is a commutativity gadget (\textit{resp.}  $a$-robust commutativity gadget, algebraic commutativity gadget) for $A$ such that $\pi_{a,b}\in\mor^c(G,A)$ for all $a,b\in A$. Then, $(G,x,y)$ is an oracular commutativity gadget (\textit{resp.} $c-v$-robust commutativity gadget, oracular algebraic commutativity gadget).
\end{lemma}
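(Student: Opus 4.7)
The plan is to exploit the canonical surjection $q\colon\Mor^+(G,A)\twoheadrightarrow\Mor^{o+}(G,A)$ together with the hypothesis that the witnessing morphisms $\pi_{a,b}$ are classical. The two basic tools are: any classical $\ast$-homomorphism automatically factors through $q$, because the extra oracular commutators $[p_{x_ib},p_{x_jb'}]$ vanish in any commutative codomain; and any oracular morphism lifts to a non-oracular one by precomposition with $q$.

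For condition (i) of an oracular commutativity gadget, fix $a,b\in A$. Since $\pi_{a,b}\colon\Mor^+(G,A)\to C(X)$ lands in a commutative algebra, it descends through $q$ to a classical $\widetilde\pi_{a,b}\colon\Mor^{o+}(G,A)\to C(X)$. Composing with any character $\chi$ of $C(X)$ then yields a deterministic morphism $f_{a,b}=\chi\circ\widetilde\pi_{a,b}\in\mor^{od}(G,A)\subseteq\mor^{oqa}(G,A)$ that still satisfies $f_{a,b}(p_{xa})=f_{a,b}(p_{yb})=1$. For condition (ii), given any $\pi\in\mor^{oqa}(G,A)$, the $\ast$-homomorphism $\pi\circ q$ lies in $\mor^{qa}(G,A)$ with the same values on the generators, so the non-oracular commutativity gadget hypothesis directly gives $[\pi(p_{xa}),\pi(p_{yb})]=[\pi q(p_{xa}),\pi q(p_{yb})]=0$. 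These two arguments handle the (non-robust) commutativity gadget case completely, and for the algebraic case they are supplemented by the observation that the relation $[p_{xa},p_{yb}]=0$ in $\Mor^+(G,A)$ trivially passes to the quotient $\Mor^{o+}(G,A)$.

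The only case that requires a genuine calculation is upgrading $a$-robustness to $c-v$-robustness. Here one considers the natural $\ast$-homomorphism $\alpha\colon\Mor^a_{\mbb{u}}(G,A)\to\Mor^{c-v}_{\mbb{u}}(G,A)$ fixing each $p^x_a$ on generators. Using the resolution of identity $\sum_{\mathbf{a}'\in R^A}\Phi^{R,\mathbf{x}}_{\mathbf{a}'}=1$ together with the operator identity $p^{x_i}_{a_i}(1-p^{x_i}_{a_i'})=p^{x_i}_{a_i}$ whenever $a_i\ne a_i'$, an expansion parallel to the $c-c$/$c-v$-robust equivalence lemma yields an estimate of the form $\alpha(D^a_{\mbb{u}}(G,A))\lesssim C\cdot D^{c-v}_{\mbb{u}}(G,A)$ for a constant $C$ depending only on $|A|$ and the maximum arity in $\sigma$. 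Since $\alpha$ fixes $p^x_a$ and $p^y_b$, the quantity $\sum_{a,b}\norm{[p^x_a,p^y_b]}_\tau^2$ is invariant under pullback along $\alpha$, so applying the $a$-robust hypothesis to $\tau\circ\alpha$ for any tracial state $\tau$ on $\Mor^{c-v}_{\mbb{u}}(G,A)$ with small defect delivers the desired $c-v$-robust bound. The main technical step is this $C$-homomorphism estimate; everything else reduces to the formal interplay between $\Mor^+$ and its oracular quotient.
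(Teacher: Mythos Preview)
Your proposal is correct and follows essentially the same approach as the paper: the quotient map $q$ handles conditions (i) and (ii) of the (algebraic) oracular gadget exactly as you describe, and for the robust case the paper likewise uses the inclusion $\alpha\colon\Mor^a_{\mbb{u}}(G,A)\hookrightarrow\Mor^{c-v}_{\mbb{u}}(G,A)$ and proves the $C$-homomorphism estimate (with $C=2L^2$, $L$ the maximum arity) via a telescoping expansion of $p^{a_1}_{b_1}\cdots p^{a_n}_{b_n}$ against the $\Phi^{R,\mathbf{a},i}_{b_i}$, just as you anticipate. The only cosmetic difference is that the paper simply observes the classical $\pi_{a,b}$ factors through $\Mor^{o+}(G,A)$ and stops there, whereas you additionally evaluate at a character to land in $\mor^{od}(G,A)$; both are fine.
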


This does not necessarily capture all possible examples of commutativity gadget, but it does capture all the examples we know.

\begin{question}
    Are there relational structures that admit a commutativity gadget but not one satisfying the conditions of \cref{lem:classical-to-oracular}? Are there relational structures that admit a commutativity gadget but no oracular commutativity gadget?
\end{question}

\begin{proof}[Proof of \cref{lem:classical-to-oracular}]
    First, if $(G,x,y)$ is a commutativity gadget with $\pi_{a,b}$ classical, then $\pi_{a,b}$ factors through $\Mor^{o+}(G,A)$. Hence, $(G,x,y)$ satisfies property (i) of an oracular commutativity gadget. Next, to show property (ii), suppose that $\pi\in\mor^{oqa}(G,A)$. Then, by precomposing with the quotient map $\Mor^+(G,A)\rightarrow\Mor^{o+}(G,A)$, this induces a representation of $\Mor^{+}(G,A)$ and by the commutativity gadget property, we have that $[\pi(p_{xa}),\pi(p_{yb})]=0$ for all $a,b\in A$. The same proof with $\pi$ replaced by the identity also shows that if $(G,x,y)$ is an algebraic commutativity gadget under the same assumption, then it is also an oracular algebraic commutativity gadget.

    Now, suppose $(G,x,y)$ is an $a$-robust commutativity gadget. Let $\tau$ be a tracial state on $\Mor^{c-v}_{\mbb{u}}(G,A)$. By precomposing with the inclusion map $\Mor_{\mbb{u}}^a(G,A)\rightarrow\Mor^{c-v}_{\mbb{u}}(G,A)$, $\tau$ induces a tracial state on $\Mor_{\mbb{u}}^a(G,A)$, which we refer to as $\tau'$. We have that
    \begin{align*}
        \defect(\tau')&=\sum_{R\in\sigma,\mathbf{a}\in R^A,\mathbf{b}\notin R^B}{\mbb{u}}(R,\mathbf{a})\norm{p^{a_1}_{b_1}\cdots p^{a_{\ar(R)}}_{b_{\ar(R)}}}_\tau^2\\
        &=\sum_{R\in\sigma,\mathbf{a}\in R^A,\mathbf{b}\notin R^B}{\mbb{u}}(R,\mathbf{a})\big\|(p^{a_1}_{b_1}-\Phi^{R,\mathbf{a},1}_{b_1})p^{a_2}_{b_2}\cdots p^{a_{\ar(R)}}_{b_{\ar(R)}}+\Phi^{R,\mathbf{a},1}_{b_1}(p^{a_2}_{b_2}-\Phi^{R,\mathbf{a},2}_{b_2})p^{a_3}_{b_3}\cdots p^{a_{\ar(R)}}_{b_{\ar(R)}}\\
        &\hspace{4.5cm}+\ldots+\Phi^{R,\mathbf{a},1}_{b_1}\cdots\Phi^{R,\mathbf{a},\ar(R)-1}_{b_{\ar(R)-1}}(p^{a_{\ar(R)}}_{b_{\ar(R)}}-\Phi^{R,\mathbf{a},\ar(R)}_{b_{\ar(R)}})+\Phi^{R,\mathbf{a}}_{\mathbf{b}}\big\|_\tau^2\\
        &\leq \sum_{R\in\sigma,\mathbf{a}\in R^A,b\in B}{\mbb{u}}(R,\mathbf{a})\ar(R)\sum_{i=1}^{\ar(R)}\norm{p^{a_i}_{b}-\Phi^{R,\mathbf{a},i}_{b}}_\tau^2\\
        &=2\sum_{R\in\sigma,\mathbf{a}\in R^A,\mathbf{b}\in R^B,i\in[\ar(R)]}{\mbb{u}}(R,\mathbf{a})\ar(R)\norm{\Phi^{R,\mathbf{a}}_{\mathbf{b}}(1-p^{a_i}_{b_i})}_\tau^2\leq2L^2\defect(\tau),
    \end{align*}
    where $L=\max_{R\in\sigma}\ar(R)$. Now, let $\varepsilon>0$. As $(G,x,y)$ is an $a$-robust commutativity gadget, there exists $\delta>0$ such that if $\varphi$ is a tracial state on $\Mor^a_{\mbb{u}}(G,A)$ such that $\defect(\varphi)<\delta$, then $\sum_{a,b\in A}\norm{[p^x_{a},p^y_{b}]}_\varphi^2<\varepsilon$. Suppose that $\tau$ is a tracial state on $\Mor^{c-v}_{\mbb{u}}(G,A)$ such that $\defect(\tau)<\frac{\delta}{2L^2}$. By the above, $\tau'$ is a tracial state on $\Mor^{c-v}_{\mbb{u}}(G,A)$ such that $\defect(\tau')<\delta$, and hence $\sum_{a,b\in A}\norm{[p^x_{a},p^y_{b}]}_\tau^2=\sum_{a,b\in A}\norm{[p^x_{a},p^y_{b}]}_{\tau'}^2<\varepsilon$. So $(G,x,y)$ is a $c-v$-robust commutativity gadget.
\end{proof}

\begin{definition}\label{def:oracularisable}
    A relational structure $A$ over a signature $\sigma$ is called \emph{oracularisable} if $\mor^+(B,A)=\mor^{o+}(B,A)$ for all $B$ over $\sigma$.
\end{definition}

\begin{lemma}
    Let $A$ be an oracularisable relational structure. Then $(G,x,y)$ is an oracular (algebraic) commutativity gadget iff $(G,x,y)$ is a (algebraic) commutativity gadget.
\end{lemma}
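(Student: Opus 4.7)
The plan is to observe that oracularisability makes the two $C^{*}$-algebras $\Mor^{+}(G,A)$ and $\Mor^{o+}(G,A)$ literally coincide, after which both the gadget and algebraic gadget equivalences become tautological. First I would unpack what oracularisability says: by definition $\mor^{o+}(G,A)$ is a quantum subspace of $\mor^{+}(G,A)$, so there is a canonical surjective $\ast$-homomorphism $\Mor^{+}(G,A)\twoheadrightarrow\Mor^{o+}(G,A)$ given by imposing the additional oracular commutation relations. Applying the defining equality $\mor^{+}(B,A)=\mor^{o+}(B,A)$ (as quantum spaces) with $B=G$ and invoking Gelfand duality for quantum subspaces, this surjection must be an isomorphism of $C^{*}$-algebras. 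In particular, the oracular commutators $[p_{a_{i}b},p_{a_{j}b'}]$ already vanish in $\Mor^{+}(G,A)$.

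Having identified the two algebras, I would then note that the sets of $\ast$-homomorphisms into any fixed $\mc{A}\in qa$ coincide, so $\mor^{qa}(G,A)=\mor^{oqa}(G,A)$. Condition (i) in the definition of a commutativity gadget (the existence of $\pi_{a,b}$ hitting $p_{xa}$ and $p_{yb}$) and condition (ii) (vanishing of $[\pi(p_{xa}),\pi(p_{yb})]$ for every $\pi$ in the corresponding morphism class) therefore become identical statements in the oracular and non-oracular cases: the existential quantifier ranges over the same set, and the universal quantifier ranges over the same set with the same predicate. This gives the equivalence $(G,x,y)$ is a commutativity gadget iff it is an oracular commutativity gadget.

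For the algebraic version, the additional condition $[p_{xa},p_{yb}]=0$ is an equation in $\Mor^{+}(G,A)$ for algebraic commutativity gadgets and in $\Mor^{o+}(G,A)$ for oracular algebraic commutativity gadgets; since these algebras are the same, the conditions are the same. Combined with the previous paragraph, this yields the algebraic equivalence as well. No substantive obstacle arises, since oracularisability is by design the condition that collapses the two quantum spaces; the only subtlety is the bookkeeping step of translating equality of quantum spaces into equality of the associated $C^{*}$-algebras of functions.
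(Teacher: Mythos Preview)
Your proposal is correct and follows essentially the same approach as the paper's proof, which consists of a single sentence noting that every $\ast$-representation of $\Mor^{+}(G,A)$ is a $\ast$-representation of $\Mor^{o+}(G,A)$ and vice versa. You have simply unpacked this in more detail, making explicit that the equality of quantum spaces forces the canonical surjection $\Mor^{+}(G,A)\to\Mor^{o+}(G,A)$ to be an isomorphism and then reading off the consequences.
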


\begin{proof}
    The proof follows directly from the fact that every $\ast$-representation of $\Mor^+(G,A)$ is a $\ast$-representation of $\Mor^{o+}(G,A)$ and vice versa.
\end{proof}

\subsection{Commutativity gadgets and complexity}
\label{sec:comm-gadgets-complexity}

The following result is implicit in \cite{CM24}.

\begin{theorem}\label{thm:cm24}
    Suppose that $A$ is a relational structure such that $\CSP(A)$ is $\tsf{NP}$-complete.
    \begin{itemize}
        \item If $A$ has an $a$-robust commutativity gadget, then there exists $s<1$ such that there is a polynomial-time reduction from the halting problem to $\SuccinctCSP_a(A)_{1,s}^\ast$.

        \item If $A$ has a $c-v$-robust commutativity gadget, then there exists $s<1$ such that there is a polynomial-time reduction from the halting problem to $\SuccinctCSP_{c-v}(A)_{1,s}^\ast$.

        \item If $A$ has a $c-c$-robust commutativity gadget, then there exists $s<1$ such that there is a polynomial-time reduction from the halting problem to $\SuccinctCSP_{c-c}(A)_{1,s}^\ast$.
    \end{itemize}
\end{theorem}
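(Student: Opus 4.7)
The plan is to compose three reductions, parameterised by the model $w \in \{c-c, c-v, a\}$. First, I would invoke \cite{CM24} to obtain, for each model $w$, an $\tsf{NP}$-complete relational structure $B_w$ that admits a $w$-robust commutativity gadget and from which the halting problem polynomial-time reduces to $\SuccinctCSP_w(B_w)^\ast_{1,s'}$ for some fixed $s' < 1$. Concretely, one may take $B_w$ to be any $\tsf{NP}$-complete boolean CSP for $w \in \{c-c, c-v\}$, and $K_3$ or any $\tsf{NP}$-complete two-variable-falsifiable boolean CSP for $w = a$; these are precisely the cases for which \cite{CM24} already establishes halting-problem hardness using its explicit commutativity gadget constructions.

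Second, since $\CSP(A)$ is $\tsf{NP}$-complete, the CSP dichotomy theorem \cite{Bul17,Zhu17} provides a primitive positive (pp) reduction from $\CSP(B_w)$ to $\CSP(A)$: each $B_w$-constraint is expressed as a conjunction of $A$-constraints on the original variables together with a bounded number of auxiliary variables. At the classical level, this pp-reduction is gap-preserving up to a constant factor, and it has succinctness-preserving size blow-up, so it maps succinct instances of $\SuccinctCSP_w(B_w)$ to succinct instances of $\SuccinctCSP_w(A)$.

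Third, I would lift the pp-reduction to the entangled setting using the commutativity gadget. Given a succinct instance $X$ of $\SuccinctCSP_w(B_w)^\ast_{1,s'}$, construct a succinct instance $Y$ of $\SuccinctCSP_w(A)$ by replacing each $B_w$-constraint of $X$ with its $A$-conjunction under the pp-reduction, then splicing in a fresh copy of the $w$-robust commutativity gadget $(G, x_G, y_G)$ for every pair of $A$-variables that appear together in a conjunction (by identifying the distinguished variables $x_G, y_G$ with that pair). Completeness of the reduction is immediate from property~(i) of the commutativity gadget: a character on $\Mor^w_{\pi}(X, B_w)$ of small defect extends to a character on the reduced algebra with small defect, using the classical $\pi_{a,b}$ on each gadget copy. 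For soundness, a finite-dimensional tracial state $\tau$ on the reduced weighted algebra with sufficiently small defect must, by the $w$-robust property of $(G, x_G, y_G)$, yield small commutator norms $\norm{[P^x_a, P^y_b]}_\tau$ for every pair of variables joined by a gadget. Combining these bounds with the conjunction-splitting machinery of \cite{MS24,CM24} (which turns small commutators into approximate factorisations of products of PVMs), one extracts from $\tau$ a tracial state on $\Mor^w_{\pi}(X, B_w)$ of controlled defect, contradicting the no-ness of $X$.

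The hard part will be the third step: carefully tracking the defect through the commutativity-gadget insertion and the Mastel--Slofstra conjunction-splitting argument, and choosing the gadget-insertion weights so that the quantitative $\varepsilon$-$\delta$ dependence in the definition of a $w$-robust commutativity gadget translates into a fixed gap $s < 1$ on $Y$ from the gap $s' < 1$ on $X$. This is precisely where the robustness condition (as opposed to merely being a commutativity gadget) becomes essential, since non-robust commutativity gadgets would give only qualitative statements about perfect strategies, insufficient for the gapped halting-problem reduction.
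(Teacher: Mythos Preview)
Your outline is plausible but takes a substantially heavier route than the paper. The paper does not pass through a separate hard structure $B_w$ or any pp-reduction between different alphabets. Instead it enlarges the signature of $A$ itself by a single trivially satisfiable binary relation $E^{A'} = A \times A$. The resulting structure $A'$ is automatically non-TVF, so \cite{CM24} (Theorem~4.14 and Corollary~4.15) already gives halting-problem hardness of $\SuccinctCSP_w(A')^\ast_{1,s'}$ as a black box, with no commutativity gadget needed at that stage. The commutativity gadget enters only in the second step, reducing $A'$-instances to $A$-instances: each $E$-constraint $(b_1,b_2)$ is replaced by a fresh copy of $(G,x,y)$ with $x,y$ identified to $b_1,b_2$. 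Completeness uses the $\pi_{a_1,a_2}$ to fill in the gadget copies; soundness uses the robustness bound to turn small gadget-defect into small commutator norms, which in turn bound the defect of the $E$-constraints in $A'$. No pp-reduction theory and no explicit invocation of \cite{MS24} conjunction-splitting is needed in this proof (that machinery is already absorbed into the cited black-box result for non-TVF structures).

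Your route can likely be made to work, but it reopens the internals of \cite{CM24} rather than invoking it. Two points would need tightening: first, ``the CSP dichotomy theorem provides a pp-reduction'' is imprecise---what you need is that an $\tsf{NP}$-complete $A$ pp-constructs $B_w$, which comes from the algebraic theory (e.g.\ Barto--Opr\v{s}al--Pinsker), not from the dichotomy statement itself, and pp-constructions may involve pp-powers, not merely pp-definitions; second, lifting a full pp-reduction to the entangled setting with gap preservation is exactly the technical core of \cite{CM24}, so you would essentially be rederiving that paper's main argument rather than citing it. The paper's trick of adjoining the trivial relation $E$ sidesteps both issues.
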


\begin{proof}
    We show the case that $A$ has a $c-v$-robust commutativity gadget here. The other two cases are analogous, using instead of the constraint-variable algebra the constraint-constraint algebra in the case of the $c-c$-robust commutativity gadget and the $a+comm$ algebra (see~\cite{CM24} for the definition) in the case of the $a$-robust commutativity gadget.

    Let $\sigma'=\sigma\cup\{E\}$ with $\ar(E)=2$; and let $A'$ be the relational structure with alphabet $A$, and constraints $R^{A'}=R^A$ for $R\in\sigma$ and $E^{A'}=A\times A$. Due to the addition of the relation $E$, $A'$ is a non-TVF relational structure so, by~\cite{CM24}, there exists $s'<1$ such that there is a polynomial-time reduction from the halting problem to $\SuccinctCSP_{c-v}(A')_{1,s'}^\ast$. Let $(G,x,y)$ be the $c-v$-robust commutativity gadget, and let $\delta:[0,1]\rightarrow[0,1]$ be a convex function such that $\delta(\varepsilon)\leq\varepsilon$ and if $\defect(\tau)<\delta(\varepsilon)$ for a finite-dimensional tracial state on $\Mor_{\mbb{u}}^{c-v}(G,A)$, then $\sum_{a,b\in A}\norm{[p^x_a,p^y_b]}_\tau<\varepsilon$ (we know such a modulus function $\delta$ exists by taking any modulus function $\delta_0$ and letting $\delta$ be the function whose epigraph is the convex hull of the epigraph of $\varepsilon\mapsto\min\{\delta_0(\varepsilon),\varepsilon\}$). Let $m=\sum_{R\in\sigma}|R^G|$.
    
    Now, suppose $M'$ is an instance of $\SuccinctCSP_{c-v}(A')_{1,s'}^\ast$, that samples from a probability distribution $\pi'$ on $\bigcup_{R\in\sigma'}\{R\}\times R^{B'}$ for a relational structure $B'$ over $\sigma'$. Now, let $B$ be the relational structure over $\sigma$ with alphabet $B'\cup\set*{z^{a,b}}{(a,b)\in E^{B'},z\in G}/\sim$, where $\sim$ is the equivalence relation given by $x^{a,b}\sim a$ and $y^{a,b}\sim b$ for all $(a,b)\in E^{B'}$, which we use implicitly; and relations $R^B=R^{B'}\cup\set*{\mathbf{z}^{a,b}}{R\in\sigma,\,\mathbf{z}\in R^G,\,(a,b)\in E^{B'}}$ (where $\mathbf{z}^{a,b}=(z_1^{a,b},\ldots,z_{\ar(R)}^{a,b})$). Then, define the probability distribution $\pi$ on $\bigcup_{R\in\sigma}\{R\}\times R^B$ as $\pi(R,\mathbf{x})=\pi'(R,\mathbf{x})$ for $R\in\sigma$, and $\pi(R,\mathbf{z}^{a,b})=\frac{\pi(E,(a,b))}{m}$ for all $(a,b)\in E$, $R\in\sigma$, and $\mathbf{z}\in R^G$. Therefore, there exists an instance $M$ of $\SuccinctCSP_{c-v}(A)_{1,s}$ that samples $\pi$.

    Next, any representation $\varphi'\in\mor^{oqa}(B',A')$ induces a representation $\varphi\in\mor^{oqa}(B,A)$. To do so, take $\varphi(p_{ba})=\varphi'(p_{ba})\otimes 1$ and $\varphi(p_{z^{b_1,b_2}a})=\sum_{a_1,a_2\in A}\varphi'(p_{b_1a_1}p_{b_2a_2})\otimes\pi_{a_1,a_2}(p_{za})$, where we assume without loss of generality that the $\pi_{a_1,a_2}$ all act on the same Hilbert space. This is well-defined on the generators as, for $(b_1,b_2)\in E^{B'}$, 
    \begin{align*}
    \varphi(p_{x^{b_1,b_2}a})&=\sum_{a_1,a_2\in A}\varphi'(p_{b_1a_1}p_{b_2a_2})\otimes\pi_{a_1,a_2}(p_{xa})\\
    &=\sum_{a_1,a_2\in A}\varphi'(p_{b_1a_1}p_{b_2a_2})\otimes\delta_{a_1,a}1\\
    &=\sum_{a_2\in A}\varphi'(p_{b_1a}p_{b_2a_2})\otimes1\\
    &=\varphi'(p_{b_1a})\otimes 1=\varphi(p_{b_1a}),
    \end{align*}and identically $\varphi'(p_{y^{b_1,b_2}a})=\varphi(p_{b_2a})$. $\varphi$ also extends to a representation of $\Mor^+(B,A)$, as it satisfies all the relations. First, $\varphi(p_{ba})$ and $\varphi(p_{z^{b_1,b_2}a})$ are projections by construction, and
    \begin{align*}
        &\sum_{a\in A}\varphi(p_{ba})=\sum_{a\in A}\varphi'(p_{ba})\otimes 1=1\\
        &\sum_{a\in A}\varphi'(p_{z^{b_1,b_2}a})=\sum_{a,a_1,a_2\in A}\varphi'(p_{b_1a_1}p_{b_2a_2})\otimes\pi_{a_1,a_2}(p_{xa})=\sum_{a_1,a_2\in A}\varphi'(p_{b_1a_1}p_{b_2a_2})\otimes1=1.
    \end{align*}
    Next, note that by construction, if $\mathbf{b}\in R^{B}$, either $\mathbf{b}=\mathbf{z}^{b,b'}$ for some $(b,b')\in E^{B'}$, or $b_i\in B'$ for all $i$. For both cases, taking $\mathbf{a}\notin R^A$, we get, writing $n=\ar(R)$, $\varphi(p_{b_1a_1})\cdots\varphi(p_{b_na_n})=\varphi'(p_{b_1a_1})\cdots\varphi'(p_{b_na_n})\otimes 1=0$ and
    \begin{align*}
        \varphi(p_{z^{b,b'}_1a_1})\cdots\varphi(p_{z^{b,b'}_na_n})&=\hspace{-0.5cm}\sum_{a^1_1,a^2_1,\ldots,a^1_n,a^2_n\in A}\hspace{-0.5cm}\varphi'(p_{ba^1_{1}}p_{b'a^2_1}\cdots p_{ba^1_n}p_{b'a^2_n})\otimes\pi_{a^1_1,a^2_1}(p_{z_1a_1})\cdots\pi_{a^1_n,a^2_n}(p_{z_na_n})\\
        &=\sum_{a,a'\in A}\varphi'(p_{ba}p_{b'a'})\otimes\pi_{a,a'}(p_{z_1a_1}\cdots p_{z_na_n})=0.
    \end{align*}
    Finally, for any $a,a'\in A$ and $1\leq i,j\leq n$, $[\varphi(p_{b_ia}),\varphi(p_{b_ja'})]=[\varphi(p_{b_ia}),\varphi(p_{b_ja'})]\otimes 1=0$ and
    \begin{align*}
        [\varphi(p_{z^{b,b'}_ia}),\varphi(p_{z^{b,b'}_ja'})]&=\sum_{a_1,a_2,a_3,a_4\in A}[\varphi'(p_{ba_1}p_{b'a_2})\otimes\pi_{a_1,a_2}(p_{z_ia}),\varphi'(p_{ba_3}p_{b'a_4})\otimes\pi_{a_3,a_4}(p_{z_ia'})]\\
        &=\sum_{a_1,a_2\in A}\varphi'(p_{ba_1}p_{b'a_2})\otimes[\pi_{a_1,a_2}(p_{z_ia}),\pi_{a_1,a_2}(p_{z_ja'})]=0.
    \end{align*}
    Hence, we have that if $M'$ is a yes instance of $\SuccinctCSP_{c-v}(A')_{1,s'}$, it is mapped to a yes instance of $\SuccinctCSP_{c-v}(A)_{1,s}$ (for any $s$).
    
    Finally, let $\tau$ be a finite-dimensional trace on $\Mor^{c-v}_\pi(B,A)$. This has a finite-dimensional GNS representation $(\varphi,\ket{\psi})$. Take $\varphi'$ to be the representation of $\Mor^{c-v}_{\pi'}(B',A')$ induced by $\varphi'(p^b_a)=\varphi(p^b_a)$, $\varphi'(\Phi^{R,\mathbf{b}}_{\mathbf{a}})=\varphi(\Phi^{R,\mathbf{b}}_{\mathbf{a}})$ for $R\in\sigma$ and $\{\varphi'(\Phi^{E,b_1,b_2}_{a_1a_2})\}_{a_1,a_2\in A}$ are PVMs in $\varphi(\Mor^{c-v}_{\pi'}(B',A'))$ that minimise $\sum_{a_1,a_2\in A}\braket{\psi}{\varphi'(\Phi^{E,b_1,b_2}_{a_1a_2})\varphi((1-p^{b_1}_{a_1})+(1-p^{b_2}_{a_2}))}{\psi}$, which exist as the space of PVMs in a finite-dimensional $C^\ast$-algebra is compact. This induces a trace $\tau'$ on $\Mor^{c-v}_{\pi'}(B',A')$ as $\tau'(x)=\braket{\psi}{\varphi'(x)}{\psi}$. Then, we have that
    \begin{align*}
        \defect(\tau')\leq&\sum_{\substack{R\in\sigma,i\in\ar(R)\\\mathbf{b}\in R^{B'},\mathbf{a}\in R^A}}\frac{\pi(R,\mathbf{b})}{\ar(R)}\norm{\Phi^{R,\mathbf{b}}_{\mathbf{a}}(1-p^{b_i}_{a_i})}_\tau^2\\
        &+\sum_{\substack{(b_1,b_2)\in E^{B'}\\(a_1,a_2)\in A^2}}\frac{\pi(E,b_1,b_2)}{2}\tau\parens*{p^{b_1}_{a_1}p^{b_2}_{a_2}p^{b_1}_{a_1}((1-p^{b_1}_{a_1})+(1-p^{b_2}_{a_2}))},
    \end{align*}
    as the PVMs are extremal among the POVMs. Now,
    \begin{align*}
        \norm{[p^{b_1}_{a_1},p^{b_2}_{a_2}]}_\tau^2&=\tau((p^{b_1}_{a_1}p^{b_2}_{a_2}-p^{b_2}_{a_2}p^{b_1}_{a_1})(p^{b_2}_{a_2}p^{b_1}_{a_1}-p^{b_1}_{a_1}p^{b_2}_{a_2}))\\
        &=2\parens*{\tau(p^{b_1}_{a_1}p^{b_2}_{a_2})-\tau(p^{b_1}_{a_1}p^{b_2}_{a_2}p^{b_1}_{a_1}p^{b_2}_{a_2})},
    \end{align*}
    so
    \begin{align*}
        \defect(\tau')\leq\sum_{\substack{R\in\sigma,i\in\ar(R)\\\mathbf{b}\in R^{B'},\mathbf{a}\in R^A}}\frac{\pi(R,\mathbf{b})}{\ar(R)}\norm{\Phi^{R,\mathbf{b}}_{\mathbf{a}}(1-p^{b_i}_{a_i})}_\tau^2+\sum_{\substack{(b_1,b_2)\in E^{B'}\\(a_1,a_2)\in A^2}}\frac{\pi(E,b_1,b_2)}{4}\norm{[p^{b_1}_{a_1},p^{b_2}_{a_2}]}_\tau^2.
    \end{align*}
    By the commutativity gadget property, $$\sum_{a_1,a_2}\norm{[p^{b_1}_{a_1},p^{b_2}_{a_2}]}_\tau^2\leq\delta^{-1}\parens[\Bigg]{\frac{1}{m}\sum_{\substack{R\in\sigma,i\in\ar(R)\\\mathbf{z}\in R^G,\mathbf{a}\in R^A}}\frac{1}{\ar(R)}\norm{\Phi^{R,\mathbf{z}^{b_1,b_2}}_{\mathbf{a}}(1-p^{z^{b_1,b_2}_i}_{a_i})}_\tau^2}.$$
    Therefore, using Jensen's inequality,
    \begin{align*}
        \defect(\tau')&\leq\sum_{\substack{R\in\sigma,i\in\ar(R)\\\mathbf{b}\in R^{B'},\mathbf{a}\in R^A}}\frac{\pi(R,\mathbf{b})}{\ar(R)}\norm{\Phi^{R,\mathbf{b}}_{\mathbf{a}}(1-p^{b_i}_{a_i})}_\tau^2\\
        &\qquad+\sum_{(b_1,b_2)\in E^{B'}}\frac{\pi(E,b_1,b_2)}{4}\delta^{-1}\parens[\Bigg]{\frac{1}{m}\sum_{\substack{R\in\sigma,i\in\ar(R)\\\mathbf{z}\in R^G,\mathbf{a}\in R^A}}\frac{1}{\ar(R)}\norm{\Phi^{R,\mathbf{z}^{b_1,b_2}}_{\mathbf{a}}(1-p^{z^{b_1,b_2}_i}_{a_i})}_\tau^2}\\
        &\leq\delta^{-1}\parens[\Bigg]{\sum_{\substack{R\in\sigma,i\in\ar(R)\\\mathbf{b}\in R^{B},\mathbf{a}\in R^A}}\frac{\pi(R,\mathbf{b})}{\ar(R)}\norm{\Phi^{R,\mathbf{b}}_{\mathbf{a}}(1-p^{b_i}_{a_i})}_\tau^2}=\delta^{-1}(\defect(\tau)).
    \end{align*}
    Using this, we know that if $M'$ is a no instance of $\SuccinctCSP_{c-v}(A')^\ast_{1,s'}$, then $\defect(\tau')>1-s'$ for any trace $\tau'$ on $\Mor_{\pi'}^{c-v}(B',A')$. But then, we must have that $\defect(\tau)>\delta(1-s')$ for any trace $\tau$ on $\Mor_\pi^{c-v}(B,A)$, by the above upper bound. Hence, taking $s=1-\delta(1-s')$, $M$ is a no instance of $\SuccinctCSP_{c-v}(A)_{1,s}^\ast$, completing the proof.
\end{proof}

\subsection{A no-go theorem}\label{sec:nogo}

\begin{theorem}\label{thm:nogo}
    Let $A$ be a relational structure over a signature $\sigma$
    \begin{enumerate}[(i)]
        \item If $\enmo^{qa}(A)\neq\enmo^{c}(A)$, then there is no commutativity gadget for $A$.

        \item If $\enmo^{oqa}(A)\neq\enmo^{c}(A)$, then there is no oracular commutativity gadget for $A$.
    \end{enumerate}
\end{theorem}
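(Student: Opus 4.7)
The plan is to prove the contrapositive of both parts simultaneously: assuming $A$ admits a (oracular) commutativity gadget $(G,x,y)$, I will show that every $\varphi \in \enmo^{qa}(A)$ (resp.\ $\enmo^{oqa}(A)$) factors through a commutative $C^{\ast}$-algebra, hence is classical, giving $\enmo^{qa}(A)=\enmo^{c}(A)$ (resp.\ $\enmo^{oqa}(A)=\enmo^{c}(A)$). The key idea is to feed $\varphi$ into the second slot of the cocategorical composition from \cref{sec:cocat}: this converts the hypothetical ``quantumness'' of $\varphi$ into a commutator between the $x$- and $y$-PVMs of a morphism out of $G$, which property (ii) of the gadget then forbids.

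Fix such a $\varphi$ and arbitrary $a_0, b_0 \in A$. Property (i) supplies $\pi_{a_0,b_0} \in \mor^{qa}(G,A)$ (resp.\ $\mor^{oqa}(G,A)$) with $\pi_{a_0,b_0}(p_{xa_0}) = \pi_{a_0,b_0}(p_{yb_0}) = 1$. Because $\{p_{xa}\}_{a \in A}$ and $\{p_{yb}\}_{b \in A}$ are PVMs summing to $1$, these two normalizations force $\pi_{a_0,b_0}(p_{xa}) = \delta_{a,a_0}\cdot 1$ and $\pi_{a_0,b_0}(p_{yb}) = \delta_{b,b_0}\cdot 1$. Setting
\[
\psi \;:=\; \varphi \circ \pi_{a_0,b_0} \;=\; (\pi_{a_0,b_0} \otimes \varphi)\,\Delta^{A}_{G,A},
\]
this composition lies in $\mor^{qa}(G,A)$ (resp.\ $\mor^{oqa}(G,A)$) since the classes $qa$ and $oqa$ are closed under the min-tensor product. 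Applying the explicit formula $\Delta^{A}_{G,A}(p_{ga}) = \sum_{a'\in A} p_{ga'} \otimes p_{a'a}$ and collapsing by the delta factors above yields
\[
\psi(p_{xa}) \;=\; 1 \otimes \varphi(p_{a_0 a}), \qquad \psi(p_{yb}) \;=\; 1 \otimes \varphi(p_{b_0 b}).
\]

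Now property (ii) of the commutativity gadget, applied to $\psi$, forces $[\psi(p_{xa}),\psi(p_{yb})] = 0$, which unpacks to $[\varphi(p_{a_0 a}), \varphi(p_{b_0 b})] = 0$. Since $a_0, a, b_0, b \in A$ were arbitrary, every pair of generators of $\End^{+}(A)$ (resp.\ $\End^{o+}(A)$) has commuting images under $\varphi$, so $\varphi(\End^{+}(A))$ is commutative and $\varphi$ is classical, completing the contrapositive. I expect no serious obstacle beyond the cocategorical bookkeeping that is already in place, namely closure of $qa$ and $oqa$ under min-tensor product together with the explicit formula for the cocomposition on generators; everything else is a direct unwinding of the definitions.
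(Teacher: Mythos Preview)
Your proof is correct and takes essentially the same approach as the paper's: compose $\varphi$ with $\pi_{a_0,b_0}$ in the $qa$-morphism category, collapse the $x$- and $y$-fibres via the delta factors, and invoke property~(ii) of the gadget; the paper merely phrases this as a direct contradiction rather than a contrapositive. One small terminological slip: ``$oqa$'' is not a separate class of $C^*$-algebras---the oracular case still uses the class $qa$ but composes in the cocategory $\mathcal{R}^{o+}(\sigma)$, so the closure you need is again that of $qa$ under the min-tensor product.
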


\begin{proof}
    By the hypothesis of (i), there exist $\pi_0\in\enmo^{qa}(A)$ and $a_1,b_1,a_2,b_2\in A$ such that $\pi_0(p_re{a_1b_1})$ and $\pi_0(a_2b_2)$ do not commute. Now suppose, for a contradiction, that $A$ admits a commutativity gadget $(G,x,y)$. In particular, this implies that there exists $\pi_{a_1,a_2}\in\mor^{qa}(G,A)$ such that $\pi_{a_1,a_2}(p_{xa_1})=\pi_{a_1,a_2}(p_{ya_2})=1$. Then, composing the morphisms gives $\pi:=\pi_0\circ\pi_{a_1,a_2}\in\mor^{qa}(G,A)$. This satisfies
    \begin{align*}
        &\pi(p_{xb_1})=\sum_{a\in A}\pi_{a_1,a_2}(p_{xa})\otimes\pi_0(p_{ab_1})=1\otimes\pi_0(p_{a_1b_1}),\\
        &\pi(p_{yb_2})=\sum_{a\in A}\pi_{a_1,a_2}(p_{ya})\otimes\pi_0(p_{ab_2})=1\otimes\pi_0(p_{a_2b_2}).
    \end{align*}
    Hence, by construction, $\pi(p_{xb_1})$ and $\pi(p_{yb_2})$ do not commute, giving the wanted contradiction.

    For (ii), note that the same proof holds with $qa$ replaced with $oqa$.
\end{proof}

\section{Examples and constructions} \label{sec:examples}

Graphs provide a wide variety of well-studied examples of CSPs. In this section, we study existence and nonexistence (using \cref{thm:nogo}) of commutativity gadgets for a variety of graph CSPs.

In \cref{sec:k-col}, we show that $k$-colouring does not admit a commutativity gadget for $k\geq 4$, but that it does in the oracular model. In \cref{sec:schmidt}, we prove a criterion that allows us to find non-classical endomorphisms, and use this to find examples of graphs with no commutativity gadget. In \cref{sec:more}, we show that commutativity gadgets extend to categorical powers. In \cref{sec:classical-endo}, we study the quantum endomorphism monoids of some families of graphs, and show that odd cycles and odd graphs admit only classical endomorphisms. In \cref{sec:graph-oracularisability}, we show that graph CSPs are oracularisable if and only if they do not admit a four-cycle.

\subsection{The case of \texorpdfstring{$k$}{\textit{k}}-colouring}\label{sec:k-col}

The problem of $k$-colouring a graph is the CSP induced by the complete graph $K_k$.

\begin{proposition}
    For $k\geq 4$, the complete graph $K_k$ does not admit a commutativity gadget.
\end{proposition}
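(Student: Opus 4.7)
My plan is to apply Theorem~\ref{thm:nogo}(i) by producing a non-classical quantum endomorphism of $K_k$.

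First, I would identify $\enmo^+(K_k)$ with the quantum permutation group $S_k^+$. By definition, $\End^+(K_k)$ is presented by projections $p_{ab}$ with row-PVM relations $\sum_b p_{ab}=1$ together with the edge-relations $p_{a_1b}p_{a_2b}=0$ for $a_1\neq a_2$, since the non-edges of $K_k$ are exactly the diagonal pairs. Thus for each fixed $b$ the family $\{p_{ab}\}_a$ is pairwise orthogonal, so $q_b := \sum_a p_{ab}$ is itself a projection. The key computation is
\[
\sum_b q_b \;=\; \sum_{a,b} p_{ab} \;=\; k\cdot 1,
\]
and since $q_b \leq 1$, the positive elements $1 - q_b$ sum to zero in a $C^*$-algebra, forcing $q_b = 1$ for every $b$. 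So the column-PVM relations $\sum_a p_{ab} = 1$ hold automatically; $\End^+(K_k)$ then satisfies precisely the magic-square relations defining $C(S_k^+)$, and the additional adjacency-preserving relations of $\aut^+(K_k)$ are already implied by orthogonality. This yields $\enmo^+(K_k) = \aut^+(K_k) = S_k^+$.

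The next step is to invoke Wang's theorem~\cite{Wan98}: for $k \geq 4$, $S_k^+$ is non-classical, i.e.\ $C(S_k^+)$ is non-commutative. I would then promote this into an element of $\mor^{qa}(K_k, K_k)$ by choosing a non-commutative representation of $C(S_k^+)$ into a Connes-embeddable algebra. A natural choice is the GNS representation associated to the (tracial) Haar state, whose image is the finite von Neumann algebra $L^\infty(S_k^+)$; alternatively, for $k > 4$ one can compose with the surjection $C(S_k^+) \to C(S_4^+)$ obtained by fixing $k-4$ points and then use a finite-dimensional non-commutative representation of $C(S_4^+)$ built from two non-commuting projections as in Wang's original construction. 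Either way $\enmo^{qa}(K_k) \neq \enmo^c(K_k)$, and Theorem~\ref{thm:nogo}(i) then rules out any commutativity gadget for $K_k$.

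The main subtlety lies in deriving the column-PVM relations in $\End^+(K_k)$: these are not imposed a priori on an endomorphism monoid but arise here from the extreme density of edges in the complete graph. The remainder of the argument is a direct appeal to standard facts about quantum permutation groups.
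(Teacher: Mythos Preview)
Your proof is correct and follows essentially the same route as the paper: produce a non-classical quantum endomorphism of $K_k$ and invoke Theorem~\ref{thm:nogo}(i). The paper does not bother identifying $\End^+(K_k)$ with $C(S_k^+)$; it simply writes down the explicit two-dimensional representation (the standard $4\times 4$ magic unitary built from $\ketbra{0},\ketbra{1},\ketbra{+},\ketbra{-}$, with all remaining generators sent to $0$), which is exactly the ``finite-dimensional non-commutative representation of $C(S_4^+)$'' you describe as your second option. Your derivation of the column relations is correct and is the same mechanism as in Lemma~\ref{lem:projector-sums}, but it is not actually needed once one checks the explicit matrices directly. One caution: your first option via the Haar state requires knowing that $L^\infty(S_k^+)$ is Connes-embeddable, which is true but is an additional nontrivial input; the finite-dimensional route avoids this entirely, and that is what the paper does.
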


\begin{proof}
    By definition, $\aut^+(K_k)=S^+_k$. It is a standard result in the theory of quantum groups that $S_k^+$ is noncommutative for $k\geq 4$~\cite{Wan98}. Further, we can explicitly show that in fact $\End^+(K_k)$ has a finite-dimensional noncommutative representation. It is clear that the following map extends to a noncommutative representation $\pi:\End^+(A)\rightarrow B(\C^2)$: $\pi(p_{00})=\pi(p_{11})=\ketbra{0}$, $\pi(p_{01})=\pi(p_{10})=\ketbra{1}$, $\pi(p_{22})=\pi(p_{33})=\ketbra{+}$, $\pi(p_{23})=\pi(p_{32})=\ketbra{-}$, and $\pi(p_{ij})=0$ otherwise. As such $\enmo^{qa}(K_k)\neq\enmo^c(K_k)$, so by \cref{thm:nogo}, there is no commutativity gadget for $K_k$.
\end{proof}

However, this no-go result does not extend to the oracular model.

\begin{proposition}\label{prop:k-col-yesgo}
    For all $k\in\N$, the complement of the cycle graph $(\overline{C}_{2k},0,1)$ is an oracular algebraic commutativity gadget for $K_k$.
\end{proposition}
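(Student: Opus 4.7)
I would verify the two conditions of an oracular algebraic commutativity gadget in turn.

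\textbf{Part 1 (witness morphisms).} For any $a,b\in K_k$, I construct a classical graph homomorphism $f_{a,b}\colon \overline{C}_{2k}\to K_k$ with $f_{a,b}(0)=a$ and $f_{a,b}(1)=b$. The vertex set $\Z_{2k}$ admits two natural partitions into $k$ pairs of $C_{2k}$-consecutive vertices: the ``even'' pairing $\{0,1\},\{2,3\},\dots,\{2k-2,2k-1\}$ and the ``shifted'' pairing $\{2k-1,0\},\{1,2\},\{3,4\},\dots,\{2k-3,2k-2\}$. Each pair consists of two vertices adjacent in $C_{2k}$, hence non-adjacent in $\overline{C}_{2k}$, so collapsing each pair to one color yields a graph homomorphism $\overline{C}_{2k}\to K_k$. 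For $a=b$, use the even pairing with $\{0,1\}\mapsto a$; for $a\neq b$, use the shifted pairing with $\{2k-1,0\}\mapsto a$ and $\{1,2\}\mapsto b$. In each case the remaining $k-1$ or $k-2$ pairs receive the remaining colors bijectively. By \cref{lem:classical-to-oracular}, these classical maps produce the required $\pi_{a,b}\in\mor^{oqa}(\overline{C}_{2k},K_k)$.

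\textbf{Part 2 (algebraic commutation).} I show $[p_{0,a},p_{1,b}]=0$ in $\Mor^{o+}(\overline{C}_{2k},K_k)$ for all $a,b\in K_k$. The key structural observation is that $E=\{0,2,\dots,2k-2\}$ and $O=\{1,3,\dots,2k-1\}$ are both $k$-cliques in $\overline{C}_{2k}$, since distinct same-parity elements of $\Z_{2k}$ differ by an even amount $\geq 2$ and are therefore non-consecutive in $C_{2k}$. Within each clique the oracular relation forces all $p_{v,c}$ to pairwise commute, and a double-counting argument (using that $\sum_c p_{v,c}=1$ for each $v\in E$, the clique orthogonality $p_{v,c}p_{v',c}=0$ for $v\neq v'$ in $E$, and $|E|=k$) then gives the column PVM $\sum_{v\in E}p_{v,c}=1$ for each color $c$, making $(p_{v,c})_{v\in E,c\in K_k}$ a matrix of pairwise commuting projections with all rows and columns PVMs; the same holds for $O$. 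Since each $v_i\in E$ has exactly two non-neighbors in $O$, namely $w_{i-1}$ and $w_i$, the column PVM of $O$ combined with the homomorphism orthogonality $p_{v_i,c}p_{w_j,c}=0$ for $j\neq i,i-1$ yields the key absorption identity
\[
 p_{v_i,c}=p_{v_i,c}(p_{w_{i-1},c}+p_{w_i,c})=(p_{w_{i-1},c}+p_{w_i,c})\,p_{v_i,c},
\]
and symmetrically $p_{w_j,c}=p_{w_j,c}(p_{v_j,c}+p_{v_{j+1},c})=(p_{v_j,c}+p_{v_{j+1},c})\,p_{w_j,c}$.

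Applied at $(v_0,w_0)=(0,1)$ for $a\neq b$, inserting the absorption for $p_{0,a}$ and using $p_{1,a}p_{1,b}=0$ gives $p_{0,a}p_{1,b}=p_{0,a}p_{2k-1,a}p_{1,b}$; inserting the absorption for $p_{1,b}$ and using $p_{0,a}p_{0,b}=0$ gives $p_{0,a}p_{1,b}=p_{2,b}p_{0,a}p_{1,b}$. Combining both reductions with the commutations $[p_{2k-1,a},p_{1,b}]=0$ (clique $O$), $[p_{0,a},p_{2,b}]=0$ (clique $E$), and $[p_{2,b},p_{2k-1,a}]=0$ (oracular at the bipartite edge $v_1\sim w_{k-1}$, valid for $k\geq 3$) allows one to rearrange both $p_{0,a}p_{1,b}$ and its adjoint $p_{1,b}p_{0,a}$ into a common sandwiched form $p_{0,a}p_{2k-1,a}p_{2,b}p_{1,b}$ and its symmetric counterpart, yielding $[p_{0,a},p_{1,b}]=0$. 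The case $a=b$ uses the same absorption together with the reflection automorphism $v\mapsto -v\pmod{2k}$ of $\overline{C}_{2k}$, which fixes $0$ and swaps $1\leftrightarrow 2k-1$, to relate $[p_{0,a},p_{1,a}]$ and $[p_{0,a},p_{2k-1,a}]$ in a way complementary to the absorption relation $[p_{0,a},p_{1,a}+p_{2k-1,a}]=0$. The trivial cases $k\leq 2$ are handled directly since $\overline{C}_4$ is a disjoint union of two edges, making $\Mor^{o+}$ split into commuting tensor factors.

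The main obstacle is completing the case $a=b$ rigorously: the absorption alone only gives the identity $[p_{0,a},p_{1,a}]=-[p_{0,a},p_{2k-1,a}]$, so forcing both commutators to vanish individually requires additional input---either iterating the absorption in both $v$ and $w$ directions until the commutator collapses to a sandwich supported on mutually orthogonal projections, or leveraging the reflection symmetry in conjunction with the $a\neq b$ result for a nearby colored pair.
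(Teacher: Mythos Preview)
Your Part 1 is correct and is exactly the construction in the paper.

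Your Part 2 has a genuine gap, and not only in the case $a=b$ where you already flag it. Your absorption identities and column-PVM facts are all correct (and they match the paper's \cref{lem:comm-core}), but the claimed ``common sandwiched form'' for $a\neq b$ does not close. From your reductions one gets
\[
p_{0,a}p_{1,b}=p_{0,a}p_{2k-1,a}p_{2,b}p_{1,b}
\quad\text{and}\quad
p_{1,b}p_{0,a}=p_{1,b}p_{2,b}p_{2k-1,a}p_{0,a},
\]
and these two expressions are simply adjoints of each other; the three commutations you list let you shuffle the outer factors $p_{2,b}$ and $p_{2k-1,a}$ around but never past the inner block $p_{0,a}p_{1,b}$ (resp.\ $p_{1,b}p_{0,a}$). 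So the rearrangement is circular and you have not shown $p_{0,a}p_{1,b}$ is self-adjoint.

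The missing idea is to iterate your absorption relation all the way around the cycle rather than one step. Your identity $[p_{0,c},p_{1,c'}]=-[p_{0,c},p_{2k-1,c'}]$ together with its even-clique analogue $[p_{0,c},p_{2k-1,c'}]=-[p_{2k-2,c},p_{2k-1,c'}]$ gives $[p_{(2i)c},p_{(2i+1)c'}]=[p_{0,c},p_{1,c'}]$ for every $i$ (this is the paper's \cref{lem:ratcheting}). Now use the column PVM $\sum_i p_{(2i)c}=1$ and the shift $i\mapsto i+1$ to write
\[
[p_{0,c},p_{1,c'}]=\sum_i p_{(2i)c}\,[p_{(2i+2)c},p_{(2i+3)c'}]
=\sum_i [\,p_{(2i)c}p_{(2i+2)c},\,p_{(2i+3)c'}\,]=0,
\]
where the middle step uses $[p_{(2i)c},p_{(2i+3)c'}]=0$ (oracular at the edge $2i\sim 2i+3$) and the last step uses the same-colour orthogonality $p_{(2i)c}p_{(2i+2)c}=0$ in the even clique. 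This handles all pairs $c,c'$ uniformly, so there is no separate $a=b$ case.
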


Note that for $k=3$, $(\overline{C}_{6},0,1)$ is Ji's triangular prism gadget~\cite{Ji13}. See \cref{fig:gadgetk4} for a visualisation.

\begin{figure}
    \centering
    \begin{tikzpicture}[3d view={30}{10}]
        \draw (0,0,-2) -- (2.31,0,-2) -- (1.15,0,0) -- (0,0,-2) (0,8,-2) -- (2.31,8,-2) -- (1.15,8,0) -- (0,8,-2) (0,0,-2) -- (0,8,-2) (2.31,0,-2) -- (2.31,8,-2) (1.15,0,0) -- (1.15,8,0);
        \fill (1.15,0,0) circle (3pt) node[above left] {$0$};
        \fill (1.15,8,0) circle (3pt) node[above right] {$3$};
        \fill (0,0,-2) circle (3pt) node[below left] {$2$};
        \fill (0,8,-2) circle (3pt) node[above left] {$5$};
        \fill (2.31,0,-2) circle (3pt) node[below right] {$4$};
        \fill (2.31,8,-2) circle (3pt) node[below right] {$1$};
    \end{tikzpicture}
    \begin{tikzpicture}[3d view={30}{10}]
        \draw (0,0,0) -- (0,0,-2) -- (2,0,-2) -- (2,0,0) -- (0,0,0) -- (2,0,-2) (2,0,0) -- (0,0,-2) (0,8,0) -- (0,8,-2) -- (2,8,-2) -- (2,8,0) -- (0,8,0) -- (2,8,-2) (2,8,0) -- (0,8,-2) (0,8,0) -- (0,0,-2) -- (0,8,-2) -- (2,0,-2) -- (2,8,-2) -- (2,0,0) -- (2,8,0) -- (0,0,0) -- (0,8,0);
        \fill (0,0,0) circle (3pt) node[above left] {$0$};
        \fill (0,8,0) circle (3pt) node[above right] {$3$};
        \fill (0,0,-2) circle (3pt) node[below left] {$6$};
        \fill (0,8,-2) circle (3pt) node[below] {$1$};
        \fill (2,0,-2) circle (3pt) node[below right] {$4$};
        \fill (2,8,-2) circle (3pt) node[below right] {$7$};
        \fill (2,0,0) circle (3pt) node[above] {$2$};
        \fill (2,8,0) circle (3pt) node[below right] {$5$};
    \end{tikzpicture}
\caption{The graphs $\overline{C}_6$ and $\overline{C}_8$ presented as prisms}
\label{fig:gadgetk4}
\end{figure}
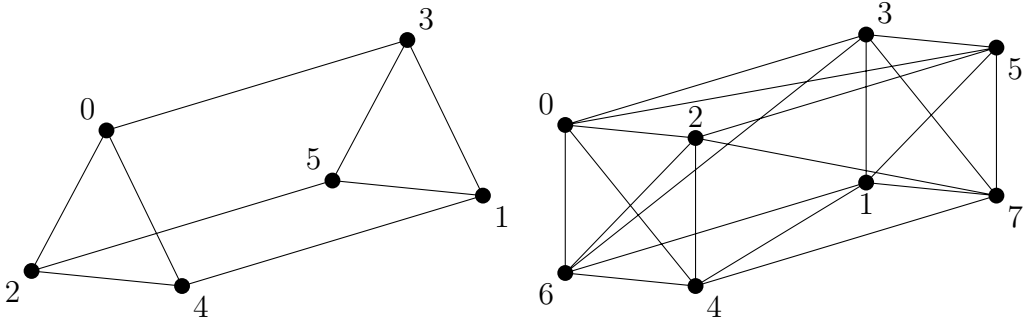

\begin{lemma}\label{lem:comm-core}
    $\End^{o+}(K_k)$ is commutative and $\sum_{a\in K_k}p_{ab}=1$ for all $b\in\Z_k$.
\end{lemma}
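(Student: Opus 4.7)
The plan is to split the lemma into two essentially independent claims. The sum statement $\sum_a p_{ab} = 1$ follows from just the non-oracular homomorphism relations of $K_k$ by an orthogonality-plus-counting argument (so it already holds in the larger algebra $\End^+(K_k)$). Commutativity then follows almost tautologically from the oracular relations combined with the fact that $R^{K_k}$ contains every off-diagonal pair.

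For the sum statement, I would begin by observing that $R^{K_k} = \{(a,a') \in \Z_k^2 : a \neq a'\}$. Specialising the homomorphism relation $p_{a_1 b_1} p_{a_2 b_2} = 0$ (for $\mathbf{a} \in R^{K_k}$, $\mathbf{b} \notin R^{K_k}$) to $b_1 = b_2 = b$ yields $p_{ab} p_{a'b} = 0$ whenever $a \neq a'$. Taking adjoints, the family $\{p_{ab}\}_{a \in \Z_k}$ consists of pairwise orthogonal projections for each fixed $b$, so $q_b := \sum_a p_{ab}$ is itself a projection, and in particular $q_b \leq 1$. Swapping the order of summation and using the row PVM relation $\sum_b p_{ab} = 1$ gives $\sum_b q_b = \sum_a \sum_b p_{ab} = k$. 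Since these $k$ projections sum to $k$ while each is bounded above by $1$, we must have $q_b = 1$ for every $b$.

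For commutativity, it suffices to check that every pair of generators commutes. When $a = a'$, the projections $p_{ab}$ and $p_{ab'}$ belong to the common PVM $\{p_{ac}\}_c$, so $p_{ab} p_{ab'} = \delta_{bb'} p_{ab} = p_{ab'} p_{ab}$. When $a \neq a'$, the pair $(a, a')$ lies in $R^{K_k}$, and the oracular commutation relation from Definition~\ref{def:quantum-endomorphism-monoid} gives $[p_{ab}, p_{a'b'}] = 0$ for all $b, b'$. As every pair of generators commutes, $\End^{o+}(K_k)$ is abelian.

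There is no real obstacle in either part; the only mildly non-routine ingredient is the counting argument that upgrades column-orthogonality to column-completeness, and the rest is pure bookkeeping with the defining relations. Conceptually, commutativity of $\End^{o+}(K_k)$ is a formal consequence of the edge relation of $K_k$ being the complement of the diagonal, so the oracular relations directly identify all off-diagonal generator pairs.
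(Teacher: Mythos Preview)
The proposal is correct and takes essentially the same approach as the paper. Both arguments obtain commutativity directly from the oracular relations (every off-diagonal pair is an edge of $K_k$), and both derive the column-sum identity from the orthogonality $p_{ab}p_{a'b}=0$ for $a\neq a'$ together with a positivity/counting step; the only cosmetic difference is that the paper packages this last step as the computation $\sum_b(1-\sum_a p_{ab})^2=0$ and appeals to the hereditary property, whereas you phrase it as ``$k$ projections each $\leq 1$ summing to $k$''.
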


\begin{proof}
    Every pair of distinct vertices in $K_k$ is connected by an edge, hence by construction every pair of generators in $\End^{o+}(K_k)$ commute. So $\End^{o+}(K_k)$ is a commutative $C^\ast$ algebra, proving the first part of the lemma.

    Next, consider the following:
    \begin{align*}
        \sum_{b\in K_k}\parens*{1-\sum_{a\in K_k}p_{ab}}^2&=\sum_{b\in K_k}\parens*{1-2\sum_{a\in K_k}p_{ab}+\sum_{a,a'\in K_k}p_{ab}p_{a'b}}\\
        &=k+\sum_{b\in K_k}\parens*{-\sum_{a\in K_k}p_{ab}+\sum_{a\sim_{K_k} a'}p_{ab}p_{a'b}}\\
        &=k-\sum_{a\in K_k}\sum_{b\in K_k}p_{ab}=k-k=0.
    \end{align*}
    As $C^\ast$-algebras are hereditary, $\sum_{a\in K_k}p_{ab}=1$ for all $b$.
\end{proof}

\begin{lemma}\label{lem:ratcheting}
    In $\Mor^{o+}(\overline{C}_{2k}, K_k)$, $[p_{(2a)b},p_{(2a+1)c}]=[p_{0b},p_{1c}]$ for all $a,b,c\in\Z_k$.
\end{lemma}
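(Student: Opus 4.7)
The strategy is to ratchet the commutator around $\overline{C}_{2k}$ in steps of two, using the oracular commutations between $p_{(2a)b}$ and projections at odd vertices far from $2a$.

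First I would extend the row-sum argument from \cref{lem:comm-core} to show that $\sum_{a\in\Z_k}p_{(2a)b}=1$ for every $b$ and $\sum_{a\in\Z_k}p_{(2a+1)c}=1$ for every $c$ inside $\Mor^{o+}(\overline{C}_{2k},K_k)$. The even vertices form a $k$-clique in $\overline{C}_{2k}$, because no two even integers in $\Z_{2k}$ are consecutive mod $2k$; hence for fixed $b$ the projections $\{p_{(2a)b}\}_{a\in\Z_k}$ are pairwise orthogonal, and expanding $\sum_b(1-\sum_a p_{(2a)b})^2$ exactly as in the proof of \cref{lem:comm-core} yields $0$. Hereditariness of $C^\ast$-algebras then gives the claimed identity, and the argument for the odd vertices is identical.

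Next I would read off the non-edges of $\overline{C}_{2k}$: since vertex $v\in\Z_{2k}$ fails to be adjacent only to $v\pm 1$, the oracular commutation relations in $\Mor^{o+}(\overline{C}_{2k},K_k)$ yield $[p_{(2a)b},p_{(2a'+1)c}]=0$ for all $a'\in\Z_k\setminus\{a,a-1\}$, and symmetrically $[p_{(2a+1)c},p_{(2a')b}]=0$ for all $a'\in\Z_k\setminus\{a,a+1\}$. Using the first row-sum identity, writing
\begin{align*}
    p_{(2a)b} \;=\; \sum_{a'\in\Z_k}p_{(2a)b}\,p_{(2a'+1)c} \;=\; \sum_{a'\in\Z_k}p_{(2a'+1)c}\,p_{(2a)b}
\end{align*}
and subtracting leaves only the two terms with $a'\in\{a,a-1\}$, giving
\begin{align*}
    [p_{(2a)b},p_{(2a+1)c}]+[p_{(2a)b},p_{(2a-1)c}]=0.
\end{align*}
The symmetric decomposition of $p_{(2a+1)c}$ via $\sum_{a'}p_{(2a')b}=1$ yields $[p_{(2a+1)c},p_{(2a)b}]+[p_{(2a+1)c},p_{(2a+2)b}]=0$.

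Setting $A_a=[p_{(2a)b},p_{(2a+1)c}]$, the second identity rearranges to $[p_{(2a+2)b},p_{(2a+1)c}]=-A_a$, and then applying the first identity at index $a+1$ gives $[p_{(2a+2)b},p_{(2a+3)c}]=-[p_{(2a+2)b},p_{(2a+1)c}]=A_a$; that is, $A_{a+1}=A_a$. Induction on $a\in\Z_k$ closes the loop back to $A_0$. The main obstacle is really just careful index bookkeeping: once one isolates that the only "bad" odd neighbors of $2a$ are $2a\pm 1$, all other commutators in the decompositions vanish automatically, and the single-step shift $A_{a+1}=A_a$ drops out as a formal consequence.
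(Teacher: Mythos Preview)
Your proof is correct and follows essentially the same approach as the paper: both establish the row-sum identities $\sum_a p_{(2a)b}=\sum_a p_{(2a+1)c}=1$ via the $k$-clique structure on the even/odd vertices and \cref{lem:comm-core}, then use that the only odd non-neighbours of $2a$ are $2a\pm 1$ (and dually) to collapse the commutator sum to two terms and ratchet $A_a=A_{a+1}$. The only cosmetic difference is that the paper ratchets in the negative direction starting from the specific pair $(0,1)$, whereas you parametrize by $a$ and shift forward.
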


\begin{proof}
    For all $i\neq j$, $2i+1\sim_{\overline{C}_{2k}}2j+1$. As such, there is a $\ast$-homomorphism $\End^{o+}(K_k)\rightarrow \Mor^{o+}(\overline{C}_{2k},K_k)$ defined on the generators as $p_{ab}\mapsto p_{(2a+1)b}$. Then, by \cref{lem:comm-core}, the $p_{(2a+1)b}$ commute and $\sum_{a\in\Z_k}p_{(2a+1)b}=1$. This implies that $\sum_{a\in\Z_k}[p_{0b},p_{(2a+1)c}]=[p_{0b},1]=0$. For all $a\neq 0,-1$, $0\sim_{\overline{C}_{2k}}2a+1$, giving $[p_{0b},p_{(2a+1)c}]=0$. Hence, $[p_{0b},p_{1c}]=-[p_{0b},p_{(-1)c}]$.

    In the same way, for all $i\neq j$, $2i\sim_{\overline{C}_{2k}}2j$, so $\sum_{a\in\Z_k}p_{(2a)b}=1$. Taking the commutator with $p_{(-1)c}$ gives $\sum_{a\in\Z_k}[p_{(2a)b},p_{(-1)c}]=0$. As $2a\sim_{\overline{C}_{2k}}-1$ for all $a\neq0,-1$, this implies that $[p_{0b},p_{(-1)c}]=-[p_{(-2)b},p_{(-1)c}]$. Putting these together gives $[p_{0b},p_{1c}]=[p_{(-2)b},p_{(-1)c}]$. Proceeding inductively gives the result.
\end{proof}

\begin{proof}[Proof of \cref{prop:k-col-yesgo}]
    We need to show that $(\overline{C}_{2k},0,1)$ satisfies the two conditions of an oracular algebraic commutativity gadget. For the first condition, consider the maps $f,g:\overline{C}_{2k}\rightarrow K_k$ defined as $f(2a)=f(2a+1)=a$ and $g(2a)=g(2a-1)=a$. These are graph homomorphisms as the vertices of $\overline{C}_{2k}$ that map to the same vertex of $K_k$ are not connected by an edge. Now, let $\sigma_a$ be a permutation of $\Z_k$ that maps $0$ to $a$, and for $a\neq b$ let $\sigma_{a,b}$ be a permutation of $\Z_k$ that maps $0$ to $a$ and $1$ to $b$. Now, for any $a,b\in K_k$, we take $\pi_{a,b}$ as the classical homomorphism $\sigma_a\circ f$ if $a=b$, and $\sigma_{a,b}\circ g$ if $a\neq b$.

    To show the second condition, we need that $[p_{0b},p_{1c}]=0$ for all $b,c\in K_k$. In fact,
    \begin{align*}
        [p_{0b},p_{1c}]&=\sum_{a\in\Z_k}p_{(2a)b}[p_{0b},p_{1c}]\\
        &=\sum_{a\in\Z_k}p_{(2a)b}[p_{(2a+2)b},p_{(2a+3)c}]\\
        &=\sum_{a\in\Z_k}[p_{(2a)b}p_{(2a+2)b},p_{(2a+3)c}]\\
        &=\sum_{a\in\Z_k}[0,p_{(2a-1)c}]=0,
    \end{align*}
    where the first line follows from $\sum_{a\in\Z_k}p_{(2a)b}=1$, the second line follows from \cref{lem:ratcheting}, the third line follows from the fact that $2a\sim_{\overline{C}_{2k}}2a+3$ and hence $p_{(2a)b}$ commutes with $p_{(2a+3)c}$, and the last line follows from the $k$-colouring condition.
\end{proof}

\begin{corollary}
    For every $k\geq 3$, there exists $s<1$ such that there is a polynomial-time reduction from the halting problem to $\SuccinctCSP_{c-c}(K_k)_{1,s}^\ast$ and $\SuccinctCSP_{c-v}(K_k)_{1,s}^\ast$.
\end{corollary}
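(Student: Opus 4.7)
The plan is to assemble the corollary directly from the machinery already in place, with essentially no new content; it is a bookkeeping result chaining together the preceding theorem and proposition.

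First, I would invoke the Hell--Ne\v{s}et\v{r}il theorem from the introduction: since $K_k$ is not bipartite for $k\geq 3$, the classical problem $\CSP(K_k)$ is $\tsf{NP}$-complete. This verifies the hypothesis on classical hardness needed by \cref{thm:cm24}.

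Next, I would feed in the commutativity gadget. By \cref{prop:k-col-yesgo}, for every $k\geq 3$ the triple $(\overline{C}_{2k},0,1)$ is an oracular algebraic commutativity gadget for $K_k$. The lemma in \cref{sec:robust} stating that every oracular algebraic commutativity gadget is a $c$-$v$-robust commutativity gadget then upgrades it to a $c$-$v$-robust commutativity gadget, and the equivalence lemma (showing $c$-$c$-robust iff $c$-$v$-robust) further provides a $c$-$c$-robust commutativity gadget. At this point both hypotheses needed for the last two bullets of \cref{thm:cm24} are satisfied.

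Finally, applying \cref{thm:cm24} with $A=K_k$ produces some $s_{c\text{-}v}<1$ and some $s_{c\text{-}c}<1$ with polynomial-time reductions from the halting problem to $\SuccinctCSP_{c\text{-}v}(K_k)^\ast_{1,s_{c\text{-}v}}$ and $\SuccinctCSP_{c\text{-}c}(K_k)^\ast_{1,s_{c\text{-}c}}$ respectively. Taking $s=\max\{s_{c\text{-}v},s_{c\text{-}c}\}<1$ and noting that the promise problem becomes easier as $s$ increases (any reduction into the smaller-gap version is also a reduction into the larger-gap version) gives the single value of $s$ asserted in the statement. The main (and really only) obstacle I anticipate is a purely formal one: making sure that the definitions of the $c$-$v$ and $c$-$c$ weighted algebras are compatible with the probability-distribution conventions used in \cref{thm:cm24}, but this is handled by the uniform distribution $\mbb{u}$ appearing in the succinct promise problem and in the robustness definitions, so no new estimate is needed.
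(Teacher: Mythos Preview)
Your proposal is correct and follows exactly the route the paper takes: the paper's proof is the single sentence ``The corollary follows by using \cref{prop:k-col-yesgo} in \cref{thm:cm24},'' and you have simply unpacked the implicit steps (NP-completeness of $\CSP(K_k)$ via Hell--Ne\v{s}et\v{r}il, upgrading the oracular algebraic gadget to a $c$-$v$-robust and hence $c$-$c$-robust one, and merging the two soundness parameters via the monotonicity of the promise).
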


The corollary follows by using \cref{prop:k-col-yesgo} in \cref{thm:cm24}.

\begin{question}
    Is non-oracular entangled $k$-colouring also undecidable for all $k$?
\end{question}

\subsection{Constructing graphs with no commutativity gadget}\label{sec:schmidt}

In this section, we work out a technique for showing that graphs have nonclassical quantum endomorphisms, and use this to construct small examples of graphs admitting no commutativity gadget. To do so, we generalise a result of Schmidt~\cite{Sch20}, used to show that quantum automorphism groups of graphs are nonclassical.

\begin{theorem}[Schmidt criterion \cite{Sch20}]
    Let $G$ be a graph and let $\sigma,\tau\in\aut(G)\backslash\{\id\}$.
    \begin{enumerate}
        \item If $\sigma$ and $\tau$ have disjoint supports, then $\aut^+(G)\neq\aut(G)$.

        \item If $\sigma$ and $\tau$ have disconnected supports, then $\aut^{o+}\neq\aut(G)$.
    \end{enumerate}
\end{theorem}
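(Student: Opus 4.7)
The plan is to exhibit, in each case, an explicit non-commutative finite-dimensional $*$-representation of $\Aut^+(G)$ (respectively $\Aut^{o+}(G)$); since $C(\aut(G))$ is commutative, this witnesses that these algebras are strictly larger than their classical counterparts. Fix non-commuting projections $P, Q \in B(\C^2)$, say $P = \ket{0}\bra{0}$ and $Q = \ket{+}\bra{+}$, so $[P,Q] \neq 0$. I would define a map $\pi$ on the generators $p_{uv}$ by setting $\pi(p_{vv}) = 1$ for $v \notin \supp(\sigma) \cup \supp(\tau)$; $\pi(p_{vv}) = 1-P$ and $\pi(p_{\sigma^{-1}(v), v}) = P$ for $v \in \supp(\sigma)$; $\pi(p_{vv}) = 1-Q$ and $\pi(p_{\tau^{-1}(v), v}) = Q$ for $v \in \supp(\tau)$; and $\pi(p_{uv}) = 0$ otherwise. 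Disjointness of the supports makes these cases mutually exclusive.

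The next step is to check that $\pi$ extends to a $*$-homomorphism on $\Aut^+(G)$. Each $\pi(p_{uv})$ is a self-adjoint idempotent by construction, and the row and column sums both equal $1$ by direct inspection in each regime for the fixed index. The substantive verification is the adjacency-preservation relation $\pi(p_{uv}) \pi(p_{u'v'}) = 0$ whenever $u \sim u'$ and $v \nsim v'$. When $u, u'$ both lie in $\supp(\sigma)$ (or both in $\supp(\tau)$), the mixed products $(1-P)P = P(1-P) = 0$ already kill the potentially problematic cross-terms, reducing the check to the cases $(v,v') = (u, u')$ or $(\sigma(u), \sigma(u'))$, both of which respect adjacency because $\sigma \in \aut(G)$. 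When $u \in \supp(\sigma)$ and $u' \in \supp(\tau)$, no product vanishes automatically; however, the fact that $\sigma$ fixes $\supp(\tau)$ pointwise and $\tau$ fixes $\supp(\sigma)$ pointwise (both consequences of disjointness) yields the chain $u \sim u' \Rightarrow \sigma(u) \sim u' \Rightarrow \sigma(u) \sim \tau(u')$, so all four nonzero products respect adjacency. Cases involving a fixed vertex are analogous. Non-commutativity of the image is then immediate from $[\pi(p_{vv}), \pi(p_{v'v'})] = [1-P, 1-Q] = [P,Q] \neq 0$ for $v \in \supp(\sigma)$ and $v' \in \supp(\tau)$, proving (1).

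For (2), the same $\pi$ descends to $\Aut^{o+}(G)$ provided it satisfies the additional oracular commutation relations $[\pi(p_{uv}), \pi(p_{u'v'})] = 0$ whenever $u \sim u'$ or $v \sim v'$. The only pairs whose images can fail to commute are those with one of $u, u'$ in $\supp(\sigma)$ and the other in $\supp(\tau)$; for nonzero images this also forces one of $v, v'$ in $\supp(\sigma)$ and the other in $\supp(\tau)$, after which the disconnected-supports hypothesis rules out both $u \sim u'$ and $v \sim v'$, so no commutation relation is imposed on these pairs. The main obstacle is the mixed $\sigma$-$\tau$ adjacency verification in part (1), which is precisely where the structural content of disjointness enters; once that chain of implications is in hand, the rest is routine bookkeeping.
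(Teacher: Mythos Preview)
Your construction is correct and is essentially the same as the paper's approach. The paper does not prove this particular theorem directly (it is cited from \cite{Sch20}), but it proves the generalization to endomorphisms (\cref{thm:endo-schmidt}) via the same idea: a $\ast$-homomorphism into $C^\ast(\Z_2\ast\Z_2)$ sending $p_{xy}$ to $\delta_{x,y}(p_0+q_0-1)+\delta_{\sigma(x),y}p_1+\delta_{\tau(x),y}q_1$, followed by the representation $p_0\mapsto\ketbra{0}$, $q_0\mapsto\ketbra{+}$. Your direct $B(\C^2)$ map is exactly this composition, and your chain $u\sim u'\Rightarrow\sigma(u)\sim u'\Rightarrow\sigma(u)\sim\tau(u')$ is the specialization to automorphisms of what the paper packages as the WAC condition (cf.\ \cref{lem:wac-classes}(i)). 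One small omission: for $\Aut^+(G)$ you must also verify the transposed relation $\pi(p_{uv})\pi(p_{u'v'})=0$ when $v\sim v'$ and $u\nsim u'$, but this follows by the identical argument with $\sigma^{-1},\tau^{-1}$ in place of $\sigma,\tau$.
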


We extend this by considering graph endomorphisms that are not permutations, which will give rise to more general quantum endomorphisms.

\begin{definition}
    Let $X$ be a set and let $f:X\rightarrow X$ be a function. The \emph{support of $f$} is the set $\supp(f)=\set*{x\in X}{f(x)\neq x}$.
\end{definition}

This agrees with the notion of support for permutations, but not the notion for real- or complex-valued functions, where the support is the set of points that are not mapped to $0$. The results of~\cite{Sch20} hold generally for permutations, but extending them to non-invertible functions requires an additional assumption on them.

\begin{definition}
    Let $f,g\in\enmo(G)$. We say $f$ and $g$ are \emph{weakly adjacency congruent (WAC)} if, for all $x\in\supp(f)$ and $y\in\supp(g)$ such that $x\sim_G y$, it holds that $f(x)\sim_G g(y)$. This induces a relation on $G$, denoted $f\approx_Gg$.
\end{definition}

The WAC relation is a weaker form of the adjacency relation on the exponential graph $G^G$, where $f\sim_{G^G} g$ if $x\sim_G y$ implies $f(x)\sim_G g(y)$. We work out some properties of the WAC relation.

\begin{lemma} Let $G$ be a graph.\label{lem:wac-props}
    \begin{enumerate}[(i)]
        \item The WAC relation is reflexive and symmetric.
        \item For all $f\in\enmo(G)$, $\id\approx_G f$.
        \item Let $f,g\in\enmo(G)$ with disjoint supports. Suppose that for all $x\sim_G y$ such that $x,g(y)\in\supp(f)$ and $y,f(x)\in\supp(g)$, it holds that $f(x)\sim_G g(y)$. Then, $f\approx_G g$.
        \item If $f\approx_G g$, then $f^i\approx_Gg^j$ for all $i,j\in\N$.
    \end{enumerate}
\end{lemma}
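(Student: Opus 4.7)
The plan is to address each of the four claims in order, using only the endomorphism property of $f,g$ and the symmetry of the adjacency relation; parts (i)--(iii) should be essentially bookkeeping, while (iv) is where genuine care is required.

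For (i), reflexivity $f\approx_G f$ will follow immediately from $f\in\enmo(G)$: if $x,y\in\supp(f)$ and $x\sim_G y$, then $f(x)\sim_G f(y)$ because $f$ preserves edges, regardless of support. For symmetry, given $f\approx_G g$ and any $x\in\supp(g)$, $y\in\supp(f)$ with $x\sim_G y$, I would observe that the pair $(y,x)$ satisfies $y\in\supp(f)$, $x\in\supp(g)$, and $y\sim_G x$ by symmetry of $\sim_G$; applying the defining implication of $f\approx_G g$ then yields $f(y)\sim_G g(x)$, which is the required relation after reversing the edge. For (ii), $\supp(\id)=\varnothing$ makes the universally quantified hypothesis of the WAC condition vacuously satisfied, so $\id\approx_G f$ holds trivially.

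For (iii), I would fix $x\in\supp(f)$ and $y\in\supp(g)$ with $x\sim_G y$ and split into cases. Disjointness gives $g(x)=x$ and $f(y)=y$ immediately. If both $g(y)\in\supp(f)$ and $f(x)\in\supp(g)$, the extra hypothesis of (iii) delivers $f(x)\sim_G g(y)$ directly. Otherwise, suppose $g(y)\notin\supp(f)$: then $f(g(y))=g(y)$; applying $g$ to $x\sim_G y$ gives $x=g(x)\sim_G g(y)$, and then applying $f$ yields $f(x)\sim_G f(g(y))=g(y)$. The remaining case $f(x)\notin\supp(g)$ is handled symmetrically by first applying $f$ and then $g$ to $x\sim_G y$, exploiting $g(f(x))=f(x)$ and $f(y)=y$.

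For (iv), the plan is induction exploiting the key containment $\supp(f^n)\subseteq\supp(f)$, which holds because $f(x)=x$ forces $f^n(x)=x$. The base cases $i=0$ or $j=0$ reduce to (ii) since $f^0=g^0=\id$. For the inductive step I would start from $x\in\supp(f^i)$ and $y\in\supp(g^j)$ with $x\sim_G y$: both lie in $\supp(f)$ and $\supp(g)$ respectively, so a single application of $f\approx_G g$ yields $f(x)\sim_G g(y)$. I would then iterate: whenever the current iterates $f^k(x),g^l(y)$ still lie in $\supp(f),\supp(g)$, apply WAC again to advance both indices; and whenever one iterate escapes its support, use that it is then fixed by the corresponding map (so subsequent powers collapse) combined with the endomorphism property of the other map to continue advancing the remaining index. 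The main obstacle will be controlling this case analysis and ensuring the target relation $f^i(x)\sim_G g^j(y)$ is reached despite the possibly asymmetric exponents $i$ and $j$; the containment $\supp(f^n)\subseteq\supp(f)$ is what keeps WAC applicable at each step.
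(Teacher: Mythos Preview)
Your arguments for (i)--(iii) are correct and essentially identical to the paper's.

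For (iv), your sketch has a real gap in the ``escape'' case. Once the iteration reaches $f^k(x)\sim_G g^l(y)$ with $f^k(x)\notin\supp(f)$ (so that $f^i(x)=f^k(x)$) but $l<j$, you propose to keep advancing the $g$-side alone via ``the endomorphism property of the other map.'' But applying $g$ to both sides of $f^k(x)\sim_G g^l(y)$ yields $g(f^k(x))\sim_G g^{l+1}(y)$, and nothing guarantees $g(f^k(x))=f^k(x)$; so the left side drifts and you never reach $f^i(x)\sim_G g^j(y)$. The paper takes a different route---a minimal-counterexample argument that invokes the inductive hypothesis $f^{i-1}\approx_G g^{j-1}$ on the original pair $(x,y)$ and then applies $f\approx_G g$ once to $(f^{i-1}(x),g^{j-1}(y))$---but in fact neither argument can be completed, because statement (iv) is false as written. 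Take $G=C_4$ and $f=g$ the rotation $i\mapsto i+1$. Then $f\approx_G g$ by reflexivity, yet with $x=0$, $y=1$ one has $x\in\supp(f^2)$, $y\in\supp(g)$, $x\sim_G y$, while $f^2(0)=g(1)=2$ is not adjacent to itself; hence $f^2\not\approx_G g$. The paper's own argument breaks at the step that applies $f^{i-1}\approx_G g^{j-1}$ to $(x,y)$: this requires $x\in\supp(f^{i-1})$ and $y\in\supp(g^{j-1})$, and the latter is vacuously impossible when $j=1$ since $\supp(g^0)=\varnothing$.
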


\begin{proof}
    \begin{enumerate}[(i)]
        \item Let $x,y\in\supp(f)$. If $x\sim_G y$, then $f(x)\sim_Gf(y)$, since $f$ is an endomorphism. Hence $f\approx_Gf$. Also, the definition of $f\approx_G g$ is symmetric, so it implies $g\approx_G f$.

        \item $\id\approx_G f$ vacuously, as the support of $\id$ is empty.

        \item Suppose that $f$ and $g$ satisfy the condition in the statement. Let $x\sim_G y$ such that $x\in\supp(f)$ and $y\in\supp(g)$. We need to show that $f(x)\sim_Gg(y)$. Consider first the case that $g(y)$ is a fixed point of $f$. As the supports of $f$ and $g$ are disjoint, we must have that $x$ is a fixed point of $g$. As $g$ is an endomorphism, this implies that $g(y)\sim_Gg(x)=x$. Next, as $f$ is an endomorphism, $g(y)=f(g(y))\sim_Gf(x)$. The case that $f(x)$ is a fixed point of $g$ follows by symmetry. Finally, the case that $g(y)\in\supp(f)$ and $f(x)\in\supp(g)$ follows by hypothesis. Hence $f\approx_Gg$.

        \item Suppose, for a contradiction, that $f\approx_G g$ but $f^i\not\approx_G g^j$ for some $(i,j)$. Then, there exist $x\sim_G y$ such that $x\in\supp(f)$, $y\in\supp(g)$, and $f^i(x)\nsim_G g^j(y)$. We may assume that $(i,j)$ is minimal in the sense that $f^{i'}\approx_Gg^{j'}$ for all $(i',j')$ such that $(i'\leq i\land j'<j)\lor(i'<i\land j'\leq j)$. We may also assume that $f^{i'}(x)\in\supp(f)$ for all $i'<i$, since if $f^{i'}(x)$ is a fixed point, then $f^{i'}(x)=f^{i'+1}(x)=\ldots=f^i(x)$, so we may replace $i$ with $i'$. In the same way, we may assume $g^{j'}(x)\in\supp(g)$ for all $j'<j$. With this, since $f^{i-1}\approx_G g^{j-1}$, we have $f^{i-1}(x)\sim_G g^{j-1}(y)$. Since we also have $f^{i-1}(x)\in\supp(f)$ and $g^{j-1}(y)\in\supp(g)$, this implies that $f^i(x)\sim_G g^j(y)$, since $f$ and $g$ are WAC, giving the wanted contradiction.\qedhere
    \end{enumerate}
\end{proof}

\begin{lemma}\label{lem:wac-classes}
    Suppose that $f,g\in\enmo(G)$ with disjoint supports.
    \begin{enumerate}[(i)]
        \item If $f$ and $g$ commute, then $f\approx_G g$.
        \item If $f$ and $g$ have disconnected supports, then $f\approx_Gg$.
    \end{enumerate}
\end{lemma}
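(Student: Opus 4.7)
My plan is to handle part (ii) directly from the definition of WAC, and to handle part (i) by invoking \cref{lem:wac-props}(iii) and showing that its hypothesis is satisfied vacuously.

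For part (ii), the hypothesis that $\supp(f)$ and $\supp(g)$ are disconnected means that there is no edge $x \sim_G y$ with $x \in \supp(f)$ and $y \in \supp(g)$. Looking at the definition $f \approx_G g$, this is a universally-quantified implication over such pairs $(x,y)$ with $x \sim_G y$; since the quantification range is empty, the statement is vacuously true.

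For part (i), I would appeal to \cref{lem:wac-props}(iii). It is enough to verify the hypothesis of that statement: for every edge $x \sim_G y$ with $x, g(y) \in \supp(f)$ and $y, f(x) \in \supp(g)$, we have $f(x) \sim_G g(y)$. I claim that no such pair $(x,y)$ exists. Indeed, suppose $x \in \supp(f)$, $y \in \supp(g)$, and $f(x) \in \supp(g)$. Since $\supp(f) \cap \supp(g) = \varnothing$, the point $x$ lies outside $\supp(g)$, so $g(x) = x$. Using the hypothesis $fg = gf$, we compute
\begin{align*}
    g(f(x)) = f(g(x)) = f(x),
\end{align*}
which contradicts $f(x) \in \supp(g)$. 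Hence the hypothesis of \cref{lem:wac-props}(iii) holds vacuously, and we conclude $f \approx_G g$.

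The argument is short in both cases; the only subtlety is to recognise that \cref{lem:wac-props}(iii) already shoulders the work of combining WAC with the interaction of $f$ and $g$ on fixed points, so that for commuting $f,g$ with disjoint supports the verification reduces to a single three-line calculation using $g(x) = x$ and $fg = gf$.
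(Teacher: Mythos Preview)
Your proof is correct. Part (ii) is identical to the paper's argument: the WAC condition is vacuous when the supports are disconnected.

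For part (i) you take a genuinely different route from the paper. The paper verifies the WAC condition directly: given $x\sim_G y$ with $x\in\supp(f)$ and $y\in\supp(g)$, it observes that $g(x)=x$ and $f(y)=y$ (by disjointness of supports), and then uses that $gf=fg$ is an endomorphism to get
\[
f(x)=f(g(x))=g(f(x))\sim_G g(f(y))=g(y).
\]
Your approach instead invokes \cref{lem:wac-props}(iii) and shows that its residual hypothesis is vacuous: you only need to handle the case where additionally $f(x)\in\supp(g)$ and $g(y)\in\supp(f)$, and you observe that $f(x)\in\supp(g)$ is already impossible because $g(f(x))=f(g(x))=f(x)$. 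This is a nice use of modularity: the case analysis that the paper does inline (what happens when $f(x)$ or $g(y)$ is a fixed point) has already been absorbed into the proof of \cref{lem:wac-props}(iii), so you avoid repeating it. The paper's direct argument, on the other hand, is self-contained and makes no appeal to that earlier lemma. Both are short; yours is slightly slicker given the infrastructure already in place.
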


In particular, (i) implies that a pair of permutations with disjoint supports are WAC.

\begin{proof}
    \begin{enumerate}[(i)]
        \item Let $x\sim_G y$ such that $x\in\supp(f)$ and $y\in\supp(g)$. Then, as the supports of $f$ and $g$ are disjoint, $x$ is a fixed point of $g$ and $y$ is a fixed point of $f$, giving $g(x)=x$ and $f(y)=y$. Also, $g\circ f=f\circ g$ is an endomorphism, so $g(f(x))\sim_G g(f(y))$. Putting these together with the commutation assumption gives
        $$f(x)=f(g(x))=g(f(x))\sim_Gg(f(y))=g(y).$$
        \item In this case, if $x\sim_Gy$, we cannot simultaneously have $x\in\supp(f)$ and $y\in\supp(g)$. Thus we vacuously get $f\approx_G g$.\qedhere
    \end{enumerate}
\end{proof}

\begin{proposition}[Schmidt criterion for endomorphisms]\label{thm:endo-schmidt}
    Suppose that $f,g\in\enmo(G)$ with nonempty supports.
    \begin{enumerate}[(i)]
        \item If $f$ and $g$ have disjoint supports and are WAC, then $\enmo^{q}(G)\neq\enmo^c(G)$.
        \item If $f$ and $g$ have disconnected supports, then $\enmo^{oq}(G)\neq\enmo^c(G)$.
    \end{enumerate}
\end{proposition}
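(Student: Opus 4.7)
The plan is to extend Schmidt's criterion for quantum automorphism groups to endomorphisms. Fix two non-commuting projections $P, Q \in B(\C^2)$ (for instance two rank-one projections onto linearly independent, non-orthogonal unit vectors). I would try to define a $\ast$-homomorphism $\pi:\End^+(G)\to B(\C^2)$ by setting, on generators,
\[
\pi(p_{ij}) = \begin{cases} P & \text{if } i \in \supp(f),\ j=i, \\ 1-P & \text{if } i \in \supp(f),\ j=f(i), \\ Q & \text{if } i \in \supp(g),\ j=i, \\ 1-Q & \text{if } i \in \supp(g),\ j=g(i), \\ 1 & \text{if } i \text{ is fixed by both and } j=i, \\ 0 & \text{otherwise}. \end{cases}
\]
Disjointness of the supports makes this assignment unambiguous, each image is a projection, and $\sum_j \pi(p_{ij}) = 1$ in every case, so it remains only to verify the graph relations in order to invoke the universal property of $\End^+(G)$.

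The substantive step is verifying $\pi(p_{uv})\pi(p_{u'v'}) = 0$ whenever $u \sim_G u'$ and $v \nsim_G v'$. For each pair $u \sim_G u'$ I would enumerate the (at most four) $(v,v')$ giving a nonzero product and check that $v \sim_G v'$. The ``intra-support mixed'' products vanish by the orthogonality $P(1-P) = 0 = Q(1-Q)$; the pure products $(u,u')$ and $(f(u),f(u'))$ land on adjacent pairs because $f$ is an endomorphism, and similarly for $g$; and when one of $u, u'$ is fixed by both maps, disjointness of supports reduces everything to the endomorphism property of the other (for instance if $u'$ is fixed and $u \in \supp(f)$, then $f(u') = u'$ forces $f(u) \sim f(u') = u'$). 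The one genuinely new case is $u \in \supp(f)$, $u' \in \supp(g)$ with $u \sim_G u'$: the four required adjacencies $u \sim u'$, $u \sim g(u')$, $f(u) \sim u'$, and $f(u) \sim g(u')$ reduce via disjointness and the endomorphism property to the single remaining condition $f(u) \sim_G g(u')$, which is exactly WAC. Irreflexivity of $G$ additionally rules out the ``collapse'' coincidences like $f(u) = f(u')$ or $f(u) = g(u')$ (the latter being incompatible with the WAC conclusion $f(u) \sim g(u')$), which would otherwise create spurious nonzero products.

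To conclude (i), pick $i \in \supp(f)$ and $j \in \supp(g)$---both exist and are distinct by disjointness---so that $\pi(p_{ii}) = P$ and $\pi(p_{jj}) = Q$ do not commute; thus $\pi$ has noncommutative image and witnesses $\enmo^q(G) \neq \enmo^c(G)$. For (ii), the same $\pi$ additionally satisfies the oracular commutation relation $[\pi(p_{uv}), \pi(p_{u'v'})] = 0$ whenever $u \sim_G u'$: the disconnected-supports hypothesis forbids any edge between $\supp(f)$ and $\supp(g)$, so whenever $u \sim_G u'$ both images lie in one of the commutative subalgebras generated by $P$ alone, by $Q$ alone, or in $\C \cdot 1$. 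The main obstacle is the expanded case analysis in the middle step: the non-bijectivity of $f, g$ introduces new potential coincidences among $u, u', f(u), g(u')$, and the WAC hypothesis is precisely the input that handles the one adjacency obstruction Schmidt's automorphism argument got for free from bijectivity.
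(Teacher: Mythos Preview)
Your proposal is correct and follows essentially the same argument as the paper. The only cosmetic difference is that the paper first defines the map into the universal algebra $C^\ast(\Z_2\ast\Z_2)$ via the single closed formula $\varphi(p_{xy})=\delta_{x,y}(p_0+q_0-1)+\delta_{f(x),y}p_1+\delta_{g(x),y}q_1$ and only afterwards composes with a concrete two-dimensional representation, whereas you go straight to $B(\C^2)$ with explicit projections $P,Q$; the case analysis for the graph relations and the use of WAC for the mixed-support term are identical.
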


\begin{proof}
    For (i), consider the $C^\ast$-algebra $$C^\ast(\Z_2\ast\Z_2)=\gen*{p_0,p_1,q_0,q_1}{p_i^2=p_i^\ast=p_i,q_i^2=q_i^\ast=q_i,p_0+p_1=q_0+q_1=1}.$$
    Let $\varphi:\End^+(G)\rightarrow C^\ast(\Z_2\ast\Z_2)$ be the $\ast$-homomorphism defined on the generators as $\varphi(p_{xy})=\delta_{x,y}(p_0+q_0-1)+\delta_{f(x),y}p_1+\delta_{g(x),y}q_1$. We want to show that this is a well-defined $\ast$-homomorphism by showing that the $\varphi(p_{xy})$ satisfy the relations of $\End^+(G)$. If $x$ is a fixed point of $f$, the $\varphi(p_{xy})=\delta_{x,y}q_0+\delta_{g(x),y}q_1$, which is a projection. On the other hand, if $x$ is not a fixed point of $f$, then it must be a fixed point of $g$, so $\varphi(p_{xy})=\delta_{x,y}p_0+\delta_{f(x),y}p_1$. Also,
    $$\sum_{y\in G}\varphi(p_{xy})=(p_0+q_0-1)+p_1+q_1=1.$$
    So $\{\varphi(p_{xy})\}_{y\in G}$ is a PVM for each $x$. For the remaining relations, let $x\sim_G x'$ and $y\nsim_Gy'$. We need to show that $\varphi(p_{xy})\varphi(p_{x'y'})=0$. Since $f$ and $g$ are endomorphisms, $f(x)\sim_G f(x')$ and $g(x)\sim_G g(x')$. First, suppose that $x$ and $x'$ are fixed points of $f$. Then,
    $$\varphi(p_{xy})\varphi(p_{x'y'})=(\delta_{x,y}q_0+\delta_{g(x),y}q_1)(\delta_{x',y'}q_0+\delta_{g(x'),y'}q_1)=\delta_{x,y}\delta_{x',y'}q_0+\delta_{g(x),y}\delta_{g(x'),y'}q_1=0,$$
    as $x\sim_G x'$ and $g(x)\sim_Gg(x')$ but $y\nsim_Gy'$. Next, if $x,x'\in\supp(f)$, then they are fixed points of $g$, so we have by symmetry that $\varphi(p_{xy})\varphi(p_{x'y'})=0$. Lastly, suppose that $x\in\supp(f)$ and $x'$ is a fixed point of $f$ (the case that $x$ is a fixed point of $f$ and $x'\in\supp(f)$ will follow by symmetry). Then,
    $$\varphi(p_{xy})\varphi(p_{x'y'})=\delta_{x,y}\delta_{x',y'}p_0q_0+\delta_{x,y}\delta_{g(x'),y'}p_0q_1+\delta_{f(x),y}\delta_{x',y'}p_1q_0+\delta_{f(x),y}\delta_{g(x'),y'}p_1q_1.$$
    The first term is $0$ as $x\sim_Gx'$ but $y\nsim_Gy'$. The second term is $0$ as $x=g(x)\sim_Gg(x')$, and similarly for the third term as $f(x)\sim_G f(x')=x'$. Finally, for the final term, if $x'$ is also a fixed point of $g$, then it is $0$ for the same reason as the third term; else $x'\in\supp(g)$, so by WAC property, $f(x)\sim_Gg(x')$. Hence $\varphi(p_{xy})\varphi(p_{x'y'})=0$, so $\varphi$ is a $\ast$-homomorphism.

    Composing $\varphi$ with a $\ast$-homomorphism to a finite-dimensional $C^\ast$-algebra gives an element of $\enmo^{q}(G)$. Since the supports of $f$ and $g$ are disjoint and nonempty, there exist distinct $x\in\supp(f)$ and $y\in\supp(g)$. Then, $\varphi(p_{xf(x)})=p_0$ and $\varphi(p_{yg(y)})=q_0$. By composing with the representation extending $p_0\mapsto\ketbra{0}$ and $q_0\mapsto\ketbra{+}$, we get a non-classical quantum endomorphism.

    For (ii), we want to show that $\varphi$ defined in the same way on $\End^{o+}(G)$ also extends to a $\ast$-homomorphism. Due to \cref{lem:wac-classes}(ii), $f$ and $g$ with disconnected supports are WAC, so all the relations but the commutativity relations hold by the proof of (i). To show these, let $x\sim_G x'$ and let $y,y'\in G$. First, if $x,x'\in\supp(f)$, $\varphi(p_{xy}),\varphi(p_{x'y'})\in\spn\{p_0,p_1\}$ so they commute. Similarly, if $x$ and $x'$ are fixed points of $f$, the $\varphi(p_{xy}),\varphi(p_{x'y'})\in\spn\{q_0,q_1\}$, so they commute. Finally, in the case that $x\in\supp(f)$ and $x'$ is a fixed point of $f$, we know that $x'$ must also be a fixed point of $g$, as it is adjacent to an element of $\supp(f)$. As such, $x'$ is a fixed point of both $f$ and $g$, so $\varphi(p_{x'y'})=\delta_{x',y'}$, which is central. Hence, $\varphi:\End^{o+}(G)\rightarrow C^\ast(\Z_2\ast\Z_2)$ is a $\ast$-homomorphism, and we can find that $\enmo^{oq}(G)\neq\enmo^c(G)$ by the same argument as for (i).
\end{proof}

\begin{example}
    \label{xmpl:Schmidt}
    \begin{enumerate}[(a)]
        \item\label{xmpl:diamond} The \emph{diamond graph} is the graph $D$ defined by $V(D)=[4]$ and $$E(D)=\set*{\{1,2\},\{2,3\},\{3,4\},\{4,1\},\{2,4\}}.$$ See also \cref{fig:diamond}. Consider the maps $f,g:[4]\rightarrow[4]$ defined as
        \begin{align*}
            f=\begin{pmatrix}1&2&3&4\\1&2&1&4\end{pmatrix}\qquad g=\begin{pmatrix}1&2&3&4\\3&2&3&4\end{pmatrix}.
        \end{align*}
        It is easy to check that these are endomorphisms of $D$ with disconnected supports. Hence by \cref{thm:endo-schmidt} and \cref{thm:nogo}, $D$ has no oracular or non-oracular commutativity gadget. This gives an example of a relational structure with alphabet size $4$ with no commutativity gagdets.

        \item Let $D' = C_6 + \{0,3\}$ be the graph obtained from $C_6$ by adding the edge $\{0,3\}$; see \cref{fig:diamond}.
        Consider the maps $f,g : \Z_6 \to \Z_6$ given by
        \begin{align*}
            f = \begin{pmatrix} 0 & 1 & 2 & 3 & 4 & 5 \\ 0 & 1 & 2 & 3 & 2 & 1 \end{pmatrix} \qquad g = \begin{pmatrix} 0 & 1 & 2 & 3 & 4 & 5 \\ 0 & 5 & 4 & 3 & 4 & 5 \end{pmatrix}.
        \end{align*}
        Again, it is easy to see that these are endomorphisms of $D'$ with disconnected supports.
        By \cref{thm:endo-schmidt} and \cref{thm:nogo}, $D'$ has no oracular or non-oracular commutativity gadget.
        But since $D'$ is $2$-connected and outerplanar with $|V(D')| \neq 4$, it follows from \cite[Corollary~5.4]{DFKRZ25} that $D'$ has no quantum symmetry; that is, $\aut^+(D') = \aut(D')$.
        This shows that even graphs without quantum symmetry can satisfy the Schmidt criterion for endomorphisms, and therefore not admit commutativity gadgets.
    \end{enumerate}
\end{example}

\begin{figure}
    \centering
    \begin{tikzpicture}[scale=1.5,vertex/.style={circle,fill,inner sep=1.3pt}]
        \draw (0,0) node[vertex] (v1) {} node[above] {$1$};
        \draw (1,-1) node[vertex] (v2) {} node[right] {$2$};
        \draw (0,-2) node[vertex] (v3) {} node[below] {$3$};
        \draw (-1,-1) node[vertex] (v4) {} node[left] {$4$};
        \draw (v2) -- (v3) -- (v4) -- (v1) -- (v2) -- (v4);
        \begin{scope}[xshift=4cm,yshift=-1cm]
            \draw (90:1cm) node[vertex] (v0) {} node[above] {$0$};
            \draw (30:1cm) node[vertex] (v1) {} node[above right] {$1$};
            \draw (-30:1cm) node[vertex] (v2) {} node[below right] {$2$};
            \draw (-90:1cm) node[vertex] (v3) {} node[below] {$3$};
            \draw (-150:1cm) node[vertex] (v4) {} node[below left] {$4$};
            \draw (150:1cm) node[vertex] (v5) {} node[above left] {$5$};
            \draw (v0) -- (v1) -- (v2) -- (v3) -- (v4) -- (v5) -- (v0) -- (v3);
        \end{scope}
    \end{tikzpicture}
\caption{The diamond graph $D$ (left) and the graph $D' = C_6 + \{0,3\}$ (right).}
\label{fig:diamond}
\end{figure}
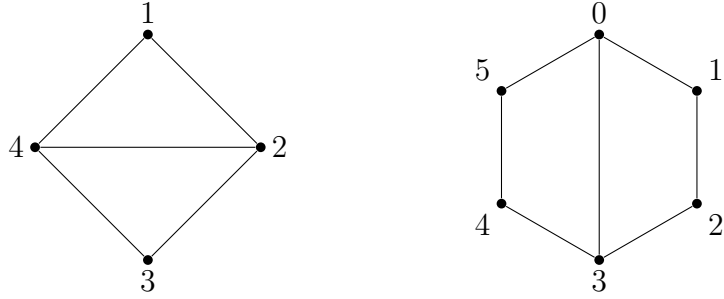

\cref{xmpl:Schmidt}\labelcref{xmpl:diamond} shows that there is a relational structure with $\tsf{NP}$-complete CSP on an alphabet of size $4$ that does not admit an oracular commutativity gadget. On the other hand, due to~\cite{CM24}, every $\tsf{NP}$-complete relational structure with alphabet size $2$ has an oracular commutativity gadget. However, the situation in the intermediate case of alphabet size $3$ is not clear. Since the only non-bipartite graph on $3$ vertices is the triangle $K_3$, every graph with $\tsf{NP}$-complete CSP on $3$ vertices has both a commutativity gadget and an oracular commutativity gadget (also, every the entangled CSP of a bipartite graph is already in $\tsf{P}$; see~\cite{BZ25} or \cref{sec:bipartite}). We do not know if this extends to non-graph relational structures with alphabet size $3$.

\begin{question}
    Is there an $\tsf{NP}$-complete relational structure with alphabet size $3$ that has no commutativity gadget or oracular commutativity gadget?
\end{question}

\subsection{More oracular commutativity gadgets}\label{sec:more}

Using a result of~\cite{HMPS19} characterising the quantum morphisms to categorical products of graphs, we can extend the validity of oracular commutativity gadgets to categorical powers.

\begin{theorem}[\cite{HMPS19} Theorem 8.11]
    Let $G,H,K$ be graphs. Then,
    $$\Mor^{o+}(G,H\times K)\cong\Mor^{o+}(G,H)\otimes_{\max}\Mor^{o+}(G,K),$$
    where $\times$ is the categorical product.
\end{theorem}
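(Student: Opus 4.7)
My plan is to construct mutually inverse $\ast$-homomorphisms between the two sides using their universal properties.

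First, I would define the forward map $\varphi \colon \Mor^{o+}(G, H \times K) \to \Mor^{o+}(G, H) \otimes_{\max} \Mor^{o+}(G, K)$ on generators by $\varphi(p_{x,(h,k)}) = p_{x,h} \otimes p_{x,k}$, and verify its well-definedness by checking the defining relations of $\Mor^{o+}(G, H \times K)$. The PVM condition reduces to $\sum_{(h,k)} p_{x,h} \otimes p_{x,k} = \bigl(\sum_h p_{x,h}\bigr) \otimes \bigl(\sum_k p_{x,k}\bigr) = 1 \otimes 1$. For the orthogonality relation, when $x \sim_G x'$ and $(h,k) \nsim_{H \times K} (h',k')$ --- which by definition of the categorical product means $h \nsim_H h'$ or $k \nsim_K k'$ --- at least one of the tensor-factor products $p_{x,h} p_{x',h'}$ or $p_{x,k} p_{x',k'}$ vanishes in its respective algebra, killing the product. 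For the oracular relation when $x \sim_G x'$, both underlying commutators vanish in the respective factor algebras, so the commutator in the tensor product vanishes as well.

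Second, I would build the candidate inverse out of two component maps. Define $\psi_H \colon \Mor^{o+}(G, H) \to \Mor^{o+}(G, H \times K)$ by $\psi_H(p_{x,h}) = \sum_{k \in K} p_{x,(h,k)}$ and $\psi_K$ symmetrically, and verify each is a well-defined $\ast$-homomorphism. The marginals form PVMs, the orthogonality relations lift because $h \nsim_H h'$ forces $(h,k) \nsim_{H \times K} (h',k')$ for every $k, k'$, and the oracular relation in $\Mor^{o+}(G, H \times K)$ transfers to marginals directly. To assemble $\psi_H$ and $\psi_K$ into a single map $\psi \colon \Mor^{o+}(G, H) \otimes_{\max} \Mor^{o+}(G, K) \to \Mor^{o+}(G, H \times K)$ via the universal property of the maximal tensor product, I would need to verify that their images commute elementwise. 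On generators, this boils down to
\[
\Bigl[\,\sum_k p_{x,(h,k)},\ \sum_{h'} p_{x',(h',k')}\Bigr] = 0 \quad \text{in } \Mor^{o+}(G, H \times K)
\]
for all $x, x' \in G$, $h \in H$, $k' \in K$. The cases $x = x'$ and $x \sim_G x'$ are immediate from PVM orthogonality at a single vertex and from the oracular relation, respectively. Once commuting images are secured, I would finish by checking $\varphi \circ \psi = \id$ and $\psi \circ \varphi = \id$ on generators; both reduce to the PVM identity $\psi_H(p_{x,h})\psi_K(p_{x,k}) = \bigl(\sum_{k'} p_{x,(h,k')}\bigr)\bigl(\sum_{h'} p_{x,(h',k)}\bigr) = p_{x,(h,k)}$ at the single vertex $x$.

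The hard part will be establishing the marginal commutation for distinct non-adjacent $x, x'$ in $G$, since the defining relations of $\Mor^{o+}(G, H \times K)$ enforce commutation only across pairs adjacent in $G$. Handling this case requires exploiting the stringency of the categorical product --- where an edge in $H \times K$ demands adjacency in both coordinates at once --- together with the orthogonality and PVM structures to derive the required commutation as a consequence of the defining relations. This is where the technical core of the argument lies.
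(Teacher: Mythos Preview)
The paper does not give its own proof of this statement; it cites \cite{HMPS19} (Theorem~8.11) and remarks only that the result there, stated for the semi-pre-$C^*$-algebra $\mc{A}_{lc}(G,H)$, passes to the universal $C^*$-completion. There is thus no argument in the paper against which to compare yours.

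As for your plan: the gap you isolate --- commutation of $\psi_H(p_{x,h})$ with $\psi_K(p_{x',k'})$ for distinct \emph{non-adjacent} $x,x'\in G$ --- is not a technicality waiting to be filled in. The defining relations of $\Mor^{o+}(G,H\times K)$ impose nothing whatsoever between the PVMs at non-adjacent vertices, so there is no mechanism by which the categorical-product structure on the target alone can force these marginals to commute. For a clean obstruction, take $G$ to be two isolated vertices and $H=K=K_2$: then $\Mor^{o+}(G,H\times K)\cong\C^4*\C^4$, and the two marginals $\sum_k p_{x,(h,k)}$ and $\sum_{h'}p_{x',(h',k')}$ are nontrivial projections lying in distinct free factors, hence do not commute. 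Connectedness of $G$ does not rescue the argument either: for $G=P_2$ and $H=K=K_4$, on any fixed eigenblock of the middle PVM $p_{1,\cdot}$ the endpoint PVMs $p_{0,\cdot}$ and $p_{2,\cdot}$ are unconstrained relative to one another, and a generic choice of the $p_{2,\cdot}$-PVM on that block makes the $H$-marginal at vertex~$0$ fail to commute with the $K$-marginal at vertex~$2$. In either case your forward map $\varphi$ annihilates a nonzero commutator and so cannot be injective. You should consult \cite{HMPS19} directly for the precise statement and hypotheses, rather than attempting to reconstruct the isomorphism purely from the presentation of $\Mor^{o+}$.
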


In~\cite{HMPS19}, this was shown at the level of the semi-pre-$C^\ast$-algebra generated by the relations of $\Mor^{o+}(G,H)$, which they call $\mc{A}_{lc}(G,H)$. The above follows by taking the universal $C^\ast$-algebra in the sense of~\cite{Oza13}.

\begin{theorem}\label{thm:comm-cat}
    Let $H$ and $K$ be graphs. If $(G,x,y)$ is an oracular (algebraic) commutativity gadget for both $H$ and $K$, then it is an oracular (algebraic) commutativity gadget for $H\times K$.
\end{theorem}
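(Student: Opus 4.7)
The plan is to leverage the tensor product decomposition $\Mor^{o+}(G,H\times K)\cong\Mor^{o+}(G,H)\otimes_{\max}\Mor^{o+}(G,K)$ stated just before. I will write $p^H_{zh}$ and $p^K_{zk}$ for the generators of $\Mor^{o+}(G,H)$ and $\Mor^{o+}(G,K)$; under the isomorphism, the generator $p_{z(h,k)}$ of $\Mor^{o+}(G,H\times K)$ corresponds to $p^H_{zh}\otimes p^K_{zk}$. Both conditions of an (algebraic) oracular commutativity gadget will then reduce to the corresponding conditions for $H$ and $K$ separately, using the universal property of the max-tensor product in one direction and the canonical max-to-min quotient in the other.

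For the non-algebraic statement, I will handle property (i) by taking, for any $(a_1,a_2),(b_1,b_2)\in V(H\times K)$, the representations $\pi^H_{a_1,b_1}:\Mor^{o+}(G,H)\to\mc{A}^H$ and $\pi^K_{a_2,b_2}:\Mor^{o+}(G,K)\to\mc{A}^K$ supplied by the gadget assumptions, with $\mc{A}^H,\mc{A}^K\in qa$. The algebraic tensor representation extends to the max-tensor product, and composing with the canonical surjection $\mc{A}^H\otimes_{\max}\mc{A}^K\twoheadrightarrow\mc{A}^H\otimes_{\min}\mc{A}^K$ gives a representation valued in a $qa$-algebra (since $qa$ is closed under $\otimes_{\min}$, as noted in the remark of \cref{sec:cocat}). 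Under this representation, $p_{x(a_1,a_2)}\mapsto 1\otimes 1=1$ and $p_{y(b_1,b_2)}\mapsto 1\otimes 1=1$, as required.

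For property (ii), I will take an arbitrary $\pi\in\mor^{oqa}(G,H\times K)$, viewed via the isomorphism as a $\ast$-homomorphism $\Mor^{o+}(G,H)\otimes_{\max}\Mor^{o+}(G,K)\to\mc{C}$ with $\mc{C}\in qa$. The universal property of $\otimes_{\max}$ then yields $\ast$-homomorphisms $\pi^H:\Mor^{o+}(G,H)\to\mc{C}$ and $\pi^K:\Mor^{o+}(G,K)\to\mc{C}$ with commuting ranges, both lying in $\mor^{oqa}$ of their respective targets. The oracular commutativity gadget assumptions give $[\pi^H(p^H_{xa_1}),\pi^H(p^H_{yb_1})]=0$ and $[\pi^K(p^K_{xa_2}),\pi^K(p^K_{yb_2})]=0$; combined with the commuting ranges (which give $[\pi^H(\cdot),\pi^K(\cdot)]=0$), expanding the four-term commutator $[\pi^H(p^H_{xa_1})\pi^K(p^K_{xa_2}),\pi^H(p^H_{yb_1})\pi^K(p^K_{yb_2})]$ and rearranging yields $[\pi(p_{x(a_1,a_2)}),\pi(p_{y(b_1,b_2)})]=0$, as desired.

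For the algebraic case the argument is purely algebraic: in $\Mor^{o+}(G,H)\otimes_{\max}\Mor^{o+}(G,K)$ one computes directly
\begin{align*}
(p^H_{xa_1}\otimes p^K_{xa_2})(p^H_{yb_1}\otimes p^K_{yb_2}) &= p^H_{xa_1}p^H_{yb_1}\otimes p^K_{xa_2}p^K_{yb_2} \\
&= p^H_{yb_1}p^H_{xa_1}\otimes p^K_{yb_2}p^K_{xa_2} = (p^H_{yb_1}\otimes p^K_{yb_2})(p^H_{xa_1}\otimes p^K_{xa_2}),
\end{align*}
using the algebraic commutation relations from the gadget assumptions on $H$ and $K$; translating through the isomorphism gives $[p_{x(a_1,a_2)},p_{y(b_1,b_2)}]=0$ in $\Mor^{o+}(G,H\times K)$. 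The only point requiring care is bookkeeping of the $qa$ condition across tensor products; this is the main obstacle, and it is resolved by the min-tensor closure of $qa$ together with the standard universal property of the maximal tensor product.
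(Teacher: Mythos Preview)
Your proof is correct and follows essentially the same approach as the paper: both use the tensor decomposition to split property (i) via a tensor of the given $\pi_{a,b}$'s and property (ii) via the restrictions $a\mapsto\pi(a\otimes 1)$ and $b\mapsto\pi(1\otimes b)$, then conclude the commutator vanishes from the separate gadget hypotheses plus the commuting ranges. Your treatment is in fact slightly more careful than the paper's in one place --- you explicitly pass through the max-to-min quotient to land in a $qa$-algebra for property (i), whereas the paper asserts $\pi_{(v,w),(v',w')}\in\mor^{oqa}(G,H\times K)$ without comment --- and you handle the algebraic case by a direct element computation rather than by specialising to $\pi=\id$, but these are purely cosmetic differences.
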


\begin{proof}
    The isomorphism $\iota:\Mor^{o+}(G,H\times K)\rightarrow\Mor^{o+}(G,H)\otimes_{\max}\Mor^{o+}(G,K)$ from \cite{HMPS19} is defined on the generators as $\iota(p_{a(v,w)})=p_{av}\otimes p_{aw}$.
    
    First, we show that property (i) of a commutativity gadget. Let $(v,w),(v',w')\in H\times K$. Since $(G,x,y)$ is a commutativity gadget for both $H$ and $K$, so there exist $\pi_{v,v'}\in\mor^{oqa}(G,H)$ such that $\pi_{v,v'}(p_{xv})=\pi_{v,v'}(p_{yv'})=1$, and $\pi_{w,w'}\in\mor^{oqa}(G,K)$ such that $\pi_{w,w'}(p_{xw})=\pi_{w,w'}(p_{yw'})=1$. Let $\pi_{(v,w),(v',w')}=(\pi_{v,v'}\otimes\pi_{w,w'})\circ\iota$. By construction, $\pi_{(v,w),(v',w')}\in\mor^{oqa}(G,H\times K)$; and $\pi_{(v,w),(v',w')}(p_{x(v,w)})=\pi_{v,v'}(p_{xv})\otimes\pi_{w,w'}(p_{xw})=1$ and $\pi_{(v,w),(v',w')}(p_{y(v',w')})=\pi_{v,v'}(p_{yv'})\otimes\pi_{w,w'}(p_{yw'})=1$.

    To show property (ii), let $\pi\in\mor^{oqa}(G,H\times K)$; write $\mc{M}$ for the codomain of $\pi$. Define $\pi_1:\Mor^+(G,H)\rightarrow\mc{M}$ as $\pi_1(a)=\pi(\iota^{-1}(a\otimes 1))$. By construction, $\pi_1\in\mor^{oqa}(G,H)$ and since $(G,x,y)$ is a commutativity gadget for $H$, $[\pi_1(p_{xv}),\pi_1(p_{yv'})]=0$ for all $v,v'\in H$. Similarly, define $\pi_2:\Mor^+(G,K)\rightarrow\mc{M}$ as $\pi_2(a)=\pi(\iota^{-1}(1\otimes a))$, which is an element of $\mor^{oqa}(G,K)$ and therefore $[\pi_2(p_{xw}),\pi_2(p_{yw'})]=0$ for all $w,w'\in K$. We have that $\pi(p_{x(v,w)})=\pi(\iota^{-1}(p_{xv}\otimes p_{xw}))=\pi_1(p_{xv})\pi_2(p_{xw})$ and similarly $\pi(p_{y(v',w')})=\pi_1(p_{yv'})\pi_2(p_{yw'})$. The images of $\pi_1$ and $\pi_2$ commute, so we get the commutation
    \begin{align*}
        [\pi(p_{x(v,w)}),\pi(p_{y(v',w')})]&=[\pi_1(p_{xv})\pi_2(p_{xw}),\pi_1(p_{yv'})\pi_2(p_{yw'})]\\
        &=\pi_1(p_{xv})\pi_1(p_{yv'})\pi_2(p_{xw})\pi_2(p_{yw'})-\pi_1(p_{yv'})\pi_1(p_{xv})\pi_2(p_{yw'})\pi_2(p_{xw})\\
        &=[\pi_1(p_{xv}),\pi_1(p_{yv'})]\pi_2(p_{xw})\pi_2(p_{yw'})+\pi_1(p_{yv'})\pi_1(p_{xv})[\pi_2(p_{xw}),\pi_2(p_{yw'})]\\
        &=0.
    \end{align*}
    By using $\pi=\id$, the proof extends to oracular algebraic commutativity gadgets as well.
\end{proof}

\begin{corollary}
    Let $k\geq 3$ and $n\in\N$. $K_k^{\times n}$ has an oracular algebraic commutativity gadget, and therefore there is $s<1$ such that there is a polynomial-times reduction from the halting problem to $\SuccinctCSP_{c-c}(K_k^{\times n})_{1,s}^\ast$ and $\SuccinctCSP_{c-v}(K_k^{\times n})_{1,s}^\ast$.
\end{corollary}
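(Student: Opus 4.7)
The plan is to combine Proposition \ref{prop:k-col-yesgo} with Theorem \ref{thm:comm-cat} to build the gadget and then invoke Theorem \ref{thm:cm24} for the complexity-theoretic conclusion. The argument is essentially a bookkeeping exercise; no real obstacle arises.

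First I would induct on $n$. The base case $n = 1$ is exactly Proposition \ref{prop:k-col-yesgo}: $(\overline{C}_{2k}, 0, 1)$ is an oracular algebraic commutativity gadget for $K_k$. For the inductive step, suppose $(\overline{C}_{2k}, 0, 1)$ is an oracular algebraic commutativity gadget for $K_k^{\times n}$. Since it is also one for $K_k$, Theorem \ref{thm:comm-cat} applied with $H = K_k^{\times n}$ and $K = K_k$ shows that it is an oracular algebraic commutativity gadget for $K_k^{\times n} \times K_k = K_k^{\times(n+1)}$, completing the induction.

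Next I would check that $\CSP(K_k^{\times n})$ is $\tsf{NP}$-complete so that Theorem \ref{thm:cm24} applies. By the Hell--Ne\v{s}et\v{r}il theorem, it suffices to show $K_k^{\times n}$ is non-bipartite. For $k \geq 3$ the three diagonal vertices $(0,\ldots,0),(1,\ldots,1),(2,\ldots,2)$ form a triangle in $K_k^{\times n}$, since in each coordinate the values $0,1,2$ are pairwise distinct and hence pairwise adjacent in $K_k$. This gives an odd cycle, so $K_k^{\times n}$ is non-bipartite.

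Finally, by the implications established in \cref{sec:robust} and \cref{sec:rel-oracular} (oracular algebraic gadgets are $c$-$v$-robust, and $c$-$v$-robustness is equivalent to $c$-$c$-robustness), the gadget $(\overline{C}_{2k}, 0, 1)$ is both a $c$-$v$-robust and a $c$-$c$-robust commutativity gadget for $K_k^{\times n}$. Applying the corresponding clauses of Theorem \ref{thm:cm24} yields the desired polynomial-time reductions from the halting problem to $\SuccinctCSP_{c-c}(K_k^{\times n})^\ast_{1,s}$ and $\SuccinctCSP_{c-v}(K_k^{\times n})^\ast_{1,s}$ for some $s < 1$, completing the proof.
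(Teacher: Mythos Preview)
Your proposal is correct and follows essentially the same approach as the paper: combine \cref{prop:k-col-yesgo} and \cref{thm:comm-cat} to obtain the gadget, then apply \cref{thm:cm24}. You are in fact more careful than the paper's one-line proof, since you explicitly verify the $\tsf{NP}$-completeness hypothesis of \cref{thm:cm24} via the Hell--Ne\v{s}et\v{r}il theorem and spell out the chain of implications from oracular algebraic to $c$-$v$- and $c$-$c$-robust.
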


The result follows by using \cref{thm:cm24} with the commutativity gadget for $K_k^{\times n}$ which follows from \cref{prop:k-col-yesgo} and \cref{thm:comm-cat}.

\subsection{Graphs with only classical endomorphisms}\label{sec:classical-endo}

In this section, we study the quantum endomorphism monoids of several well-known graphs. We find a variety of graphs with only classical endomorphisms, which therefore provide ideal candidates for the study of commutativity gadgets. We rely on results finding graphs that have no quantum symmetry, \textit{i.e.} $\aut^+(G)=\aut(G)$.

\begin{theorem}[\cite{BB07,Sch18,Sch20}]\label{thm:nosym}
    For $n\in\N$, the odd cycles $C_{2n+1}$, the Petersen graph $P$, and the odd graphs $O_n=K(2n-1,n-1)$ have no quantum symmetry.
\end{theorem}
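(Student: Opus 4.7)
My plan is to verify each of the three families separately, but a common framework unifies the three arguments: the goal in every case is to show that the entries of the magic unitary $u = (p_{xy})_{x,y \in V(G)}$ generating $\Aut^+(G)$ pairwise commute, since $\aut^+(G) = \aut(G)$ is equivalent to commutativity of $\Aut^+(G)$. The essential algebraic tool I would use is the intertwiner identity $uA = Au$, where $A$ is the adjacency matrix of $G$; unpacked on generators, this yields
\[
    \sum_{z \sim_G y} p_{xz} \;=\; \sum_{z \sim_G x} p_{zy} \qquad \text{for all } x,y \in V(G).
\]
Combined with the magic unitary relations $p_{xy}^2 = p_{xy}^\ast = p_{xy}$, $p_{xy}p_{xy'} = \delta_{y,y'}p_{xy}$, and the row/column sum conditions, this identity propagates relations between generators that are linked by short walks. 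The project is then to exploit the specific distance-regular or distance-transitive structure of each family in order to turn these propagated relations into pairwise commutativity.

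For the odd cycles $C_{2n+1}$, I would index vertices by $\Z_{2n+1}$ and rewrite the intertwiner identity as the two-term relation $p_{x,y-1} + p_{x,y+1} = p_{x-1,y} + p_{x+1,y}$. Iterating and using the fact that $C_{2n+1}$ has finite odd walk girth (is not bipartite), I would show that $p_{xy}$ is determined by the signed difference $y - x \bmod (2n+1)$, i.e.\ $p_{xy}p_{x'y'} = 0$ unless $y - x \equiv \pm(y' - x') \pmod{2n+1}$, after which commutativity falls out from the fact that the surviving generators live in a commutative subalgebra of "distance projections"; this is the Banica--Bichon route of~\cite{BB07}.

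For the Petersen graph $P$ and the odd graphs $O_n$, I would follow Schmidt's strategy from~\cite{Sch18,Sch20}, exploiting distance-regularity more heavily. The key leverage is the odd girth: $O_n$ has odd girth $2n-1$, and $P = O_3$ has odd girth $5$, so triangles and short odd cycles are absent. Combined with strong regularity of the Petersen graph (parameters $(10,3,0,1)$: adjacent vertices share no common neighbors, non-adjacent vertices share exactly one), this forces many triple products $p_{xy}p_{x'y'}p_{x''y''}$ to vanish, which in turn forces the distance-$d$ magic unitary matrices $u^{(d)}$ (projecting onto pairs at graph distance $d$) to have commuting entries. From the commuting-distance-matrix picture I would derive pairwise commutativity of the $p_{xy}$ themselves by expanding the generators in terms of distance projections, using distance-transitivity of $P$ and $O_n$ to ensure that the expansion coefficients are scalars.

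The main obstacle I expect is precisely this propagation step: deducing global commutativity $[p_{xy}, p_{x'y'}] = 0$ from the local, adjacency-derived relations. The combinatorial identities that make this work---vanishing of triple products from absence of odd short cycles, equality of walk counts forced by distance-regularity, and rigidity of common neighborhoods---are delicate and depend sharply on the specific graph. Indeed, for nearby graphs such as $C_{2n}$ or $K_n$ with $n \geq 4$ quantum symmetry exists, so any proof must use the odd girth (respectively, the strongly regular parameters) in a way that does not generalise; locating the exact point where the odd parity is used is the crux of each argument.
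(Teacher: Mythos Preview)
The paper does not provide its own proof of this statement: Theorem~\ref{thm:nosym} is stated with the citation \cite{BB07,Sch18,Sch20} and is used as a black box (in the proof of Corollary~\ref{cor:core-classical}). There is therefore no proof in the paper against which to compare your proposal.

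For what it is worth, your outline is broadly consonant with the strategies of the cited sources: the argument for odd cycles does go through the circulant/distance structure as in \cite{BB07}, and Schmidt's treatment of the Petersen and odd graphs in \cite{Sch18,Sch20} does exploit distance-regularity and the intertwiner relation $uA = Au$. But your text is a plan, not a proof: the ``propagation step'' you yourself flag as the main obstacle is essentially the entire substance of each of the three arguments, and the concrete combinatorial lemmas that make it go through are not supplied in your proposal. Since the present paper treats this theorem purely as imported background, no more than the citation is required.
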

Note that $O_2=C_3=K_3$ the triangle and $O_3=P$ the Petersen graph.

We use the idea of cores to relate nonexistence of quantum symmetries and transformations. Recall that a relational structure $A$ is a \emph{core} if $\aut(A)=\enmo(A)$.

\begin{definition}
    A relational structure $A$ is a \emph{quantum core} if $\aut^+(A)=\enmo^+(A)$ and an \emph{oracular quantum core} if $\aut^{o+}(A)=\enmo^{o+}(A)$.
\end{definition}

\begin{lemma}\label{lem:cores}
    Let $A$ be a relational structure.
    \begin{enumerate}[(i)]
        \item If $A$ is a quantum core, then $A$ is an oracular quantum core.
        \item If $A$ is an oracular quantum core, then $A$ is a core.
    \end{enumerate}
\end{lemma}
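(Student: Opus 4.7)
The plan is to handle (ii) first via a short character comparison, and then attack (i) by a direct case analysis inside $\End^+(A) = \Aut^+(A)$. For (ii), using the isomorphism $\mc{R}^{od}(\sigma) \cong \mc{R}(\sigma)$ established earlier in the excerpt, a character of $\End^{o+}(A)$ corresponds to a classical endomorphism in $\enmo(A)$, and a character of $\Aut^{o+}(A)$ to a classical automorphism in $\aut(A)$. The oracular quantum core hypothesis $\End^{o+}(A) = \Aut^{o+}(A)$ forces the two sets of characters to coincide, giving $\enmo(A) = \aut(A)$; hence $A$ is a core.

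For (i), assume $\End^+(A) = \Aut^+(A)$, so that the full magic square of $\Aut^+(A)$ is available inside $\End^+(A)$: the column PVM relations $\sum_c p_{ca} = 1$ for every $a \in A$, and the column endo relations $p_{c_1 a_1} \cdots p_{c_n a_n} = 0$ whenever $\mathbf{a} \in R^A$ and $\mathbf{c} \notin R^A$ (with $n = \ar(R)$). Under this identification $\End^{o+}(A)$ and $\Aut^{o+}(A)$ are quotients of $\End^+(A)$ by the row-commutation ideal and the row-plus-column-commutation ideal respectively, so I must show that each column commutation $[p_{ba_i}, p_{b'a_j}]$ (for $R \in \sigma$, $\mathbf{a} \in R^A$, distinct positions $i, j$, and any $b, b' \in A$; WLOG $i < j$) already vanishes in $\End^{o+}(A)$.

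My plan is to split on whether the pair $(b,b')$ extends at positions $(i,j)$ to some $\mathbf{c} \in R^A$ with $c_i = b$ and $c_j = b'$. If it does, the row commutation for $\mathbf{c}$ holds in $\End^{o+}(A)$ by definition, so $[p_{c_i x}, p_{c_j y}] = 0$ for all $x, y \in A$; specialising $x = a_i$, $y = a_j$ yields $[p_{ba_i}, p_{b'a_j}] = 0$ at once. If no such extension exists, I insert the column-PVM identities $1 = \sum_{c_k} p_{c_k a_k}$ for each $k \neq i, j$ at the appropriate positions ($k < i$ before $p_{ba_i}$, $i < k < j$ between the two factors, $k > j$ after $p_{b'a_j}$) to obtain
\[
    p_{ba_i}\, p_{b'a_j} \;=\; \sum_{\mathbf{c}:\, c_i = b,\, c_j = b'} p_{c_1 a_1}\, p_{c_2 a_2} \cdots p_{c_n a_n}
\]
in natural order. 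Column endo kills every summand with $\mathbf{c} \notin R^A$, and by hypothesis no surviving $\mathbf{c}$ lies in $R^A$, so the right-hand side is zero. Applying the $\ast$-involution (the generators are self-adjoint projections) then gives $p_{b'a_j}\, p_{ba_i} = (p_{ba_i}\, p_{b'a_j})^\ast = 0$ as well, and the commutator vanishes.

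The main subtlety I anticipate, and which this case analysis is designed to avoid, is that the antipode $S(p_{ab}) = p_{ba}$ of the CMQG $\End^+(A) = \Aut^+(A)$ does not descend to $\End^{o+}(A)$: it sends the row-commutation ideal to the column-commutation ideal, and these need not coincide a priori, so one cannot simply apply $S$ to derive the column commutations from the row ones. The case split sidesteps the antipode entirely and instead exploits the column PVM and column endo relations that the quantum core hypothesis puts at our disposal.
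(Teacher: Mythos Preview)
Your proof is correct and essentially matches the paper's. For (i), both arguments split into a row-commutation case and a column-PVM-plus-column-endo case (your split on whether $(b,b')$ extends to a tuple of the \emph{same} relation $R$ at positions $(i,j)$ is slightly tighter than the paper's split on whether $(b,b')$ appears in \emph{any} relation tuple at any positions, but the ingredients and logic are identical); for (ii), your character comparison is exactly the paper's abelianization/Gelfand duality argument phrased in terms of points rather than algebras.
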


\begin{proof}
    \begin{enumerate}[(i)]
        \item Let $A$ be a quantum core. Since the generators of $\End^{o+}(A)$ satisfy all the relations of $\End^+(A)$, they also satisfy all the relations of $\Aut^+(A)$. It remains to show that they satisfy the additional relations of $\Aut^{o+}(A)$, namely $[p_{ba_i},p_{b'a_j}]=0$ for all $R\in\sigma$, $1\leq i\leq j\leq\ar(R)$, $\mathbf{a}\in R^A$, and $b,b'\in A$. The case $b=b'$ follows from the fact that $\{p_{ba}\}_{a\in A}$ is a PVM. Else, suppose first that there exists $S\in\sigma$, $i'\neq j'\in[\ar(S)]$, and $\mathbf{b}\in S^A$ such that $b=b_{i'}$ and $b'=b_{j'}$. Then, by the commutation relations of $\End^{o+}(A)$, $[p_{ba_i},p_{b'a_j}]=[p_{b_{i'}a_i},p_{b_{j'}a_j}]=0$. If not, for all $S\in\sigma$, $i'\neq j'\in[\ar(S)]$, and $\mathbf{b}\in A^{\ar(S)}$, if $b_{i'}=b$ and $b_{j'}=b'$, then $\mathbf{b}\notin S^A$. As such, using the relations of $\Aut^+(A)$,
        \begin{align*}
            p_{ba_i}p_{b'a_j}&=\hspace{-0.15cm}\sum_{\substack{b_1,\ldots,b_{i-1}\in A,\\b_{i+1},\ldots,b_{j-1}\in A,\\b_{j+1},\ldots,b_{\ar(R)}\in A}}\hspace{-0.15cm}p_{b_1a_1}\cdots p_{b_{i-1}a_{i-1}}p_{ba_i}p_{b_{i+1}a_{i+1}}\cdots p_{b_{j-1}a_{j-1}}p_{b'a_j}p_{b_{j+1}a_{j+1}}\cdots p_{b_{\ar(R)}a_{\ar(R)}}\\
            &=\sum_{\mathbf{b}\in A^{\ar(R)}:\,b_i=b,b_j=b'}p_{b_1a_1}\cdots p_{b_{\ar(R)}a_{\ar(R)}}\\
            &=\sum_{\mathbf{b}\in R^A:\,b_i=b,b_j=b'}p_{b_1a_1}\cdots p_{b_{\ar(R)}a_{\ar(R)}}=0,
        \end{align*}
        where the first line follows by $\sum_{b\in A}p_{ba}=1$, the second line by $p_{b_1a_1}\cdots p_{b_na_n}=0$ for all $\mathbf{b}\notin R^A$, and the last line as the sum is empty by the argument above. Then, we also have $p_{b'a_j}p_{ba_i}=(p_{ba_i}p_{b'a_j})^\ast=0$, so $p_{ba_i}$ and $p_{b'a_j}$ commute.

        \item $\End(A)=C(\enmo(A))$ is the quotient of $\End^{o+}(A)$ by the ideal $\gen{[p_{ab},p_{a'b'}]}{a,b,a',b'\in A}$ in $\End^{o+}(A)$. In the same way, $\Aut(A)=C(\aut(A))$ is the quotient of $\Aut^{o+}(A)$ by the ideal with the same generators $\gen{[p_{ab},p_{a'b'}]}{a,b,a',b'\in A}\subseteq\Aut^{o+}(A)$. But, since $A$ is an oracular quantum core, so $\End^{o+}(A)=\Aut^{o+}(A)$. Therefore, $\End(A)=\Aut(A)$, so by Gelfand duality $\enmo(A)=\aut(A)$.
    \end{enumerate}
\end{proof}

\begin{theorem}\label{thm:core-examples}
    Let $n\in\N$. $C_{2n+1}$, $O_n$, and $K(3n+2,n+1)$ are quantum cores.
\end{theorem}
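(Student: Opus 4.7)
The plan is to prove, for each graph $G \in \{C_{2n+1}, O_n, K(3n+2, n+1)\}$, that every quantum endomorphism is a quantum automorphism, by first showing that $\End^+(G)$ is commutative and then combining this with the classical core property and the absence of quantum symmetry (\cref{thm:nosym}) to conclude the quantum core property.

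For the commutativity step, I would show that any two generators $p_{uv}$ and $p_{u'v'}$ of $\End^+(G)$ commute. The PVM relations give commutation when $u = u'$, and the adjacency relations $p_{ua}p_{u'b} = 0$ (when $u \sim u'$ but $a \not\sim b$) give the starting constraints when the graph relations interact. For $C_{2n+1}$, I would propagate commutation relations around the cycle by iteratively inserting resolutions of identity $\sum_w p_{xw} = 1$ at cyclically adjacent indices, using the fact that the odd girth $2n+1$ equals the vertex count to close up the product and reduce to terms corresponding to classical rotations and reflections. This is in spirit of the ``ratcheting'' argument used in \cref{lem:ratcheting} for $k$-colouring, but adapted so that it forces full commutativity rather than just commutativity between two specific PVMs. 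For $O_n = K(2n-1, n-1)$ and $K(3n+2, n+1)$, I would use the set-intersection structure of Kneser graphs together with an analogous walk-based propagation, leveraging the controlled odd girth (which is $2n-1$ for $O_n$) to identify enough walks through any fixed vertex to derive commutations.

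Once commutativity is established, Gelfand duality gives $\End^+(G) \cong C(X)$ where $X$ is the set of characters of $\End^+(G)$, which correspond bijectively to classical endomorphisms $\enmo(G)$. Since each of the three graph families is a classical core, $\enmo(G) = \aut(G)$, so $\End^+(G) \cong C(\aut(G))$. For $C_{2n+1}$ and $O_n$, \cref{thm:nosym} gives $\aut^+(G) = \aut(G)$, so $\Aut^+(G) \cong C(\aut(G)) \cong \End^+(G)$ and $G$ is a quantum core. For $K(3n+2, n+1)$, a parallel argument for $\Aut^+(G)$ would be needed, most likely following from the same combinatorial analysis applied at the level of the quantum automorphism algebra or by a direct Schmidt-type criterion ruling out non-classical automorphisms.

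The main obstacle is the commutativity step, particularly for the Kneser graphs. While the cycle case admits a transparent walk-based argument dictated entirely by the cyclic structure and the odd girth coinciding with the vertex count, the enumeration of closed walks and the propagation of commutation relations in $O_n$ and $K(3n+2, n+1)$ is considerably more delicate. One must use the intersection properties of Kneser graphs to derive commutation between generators corresponding to vertices that are far apart in the graph, which may require an auxiliary lemma controlling walks in Kneser graphs with prescribed endpoints. Additionally, one needs to verify that enough pairs $(u,u')$ can be "connected" through the graph to derive the full set of commutativity relations, rather than only those between adjacent or nearby vertices.
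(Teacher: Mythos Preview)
Your plan takes a substantially different route from the paper, and the detour introduces a real gap.

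The paper does \emph{not} attempt to show that $\End^+(G)$ is commutative. Instead, it verifies directly that the two additional relations defining $\Aut^+(G)$ as a quotient of $\End^+(G)$ already hold in $\End^+(G)$: namely $\sum_a p_{ab}=1$ for all $b$, and $p_{ab}p_{a'b'}=0$ whenever $b\sim_G b'$ but $a\nsim_G a'$. Both are obtained from the single walk lemma (\cref{lem:we-use-this-one-all-the-time}): if there is a walk of length $\ell$ from $u$ to $u'$ but none from $v$ to $v'$, then $p_{uv}p_{u'v'}=0$. For each of the three graph families one only needs two combinatorial facts --- any pair of distinct vertices is joined by a short odd walk while there is no short odd closed walk, and any non-adjacent pair is joined by a short even walk while no adjacent pair is --- which are supplied by \cref{lem:odd}, \cref{lem:kneser}, and \cref{lem:cyclic-walks}. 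The column-sum relation then follows from the orthogonality $p_{ab}p_{a'b}=0$ together with \cref{lem:projector-sums}. No ratcheting, no propagation of commutators, and no appeal to \cref{thm:nosym} is needed.

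Your approach, by contrast, aims to prove full commutativity of $\End^+(G)$ and then invoke classical core plus absence of quantum symmetry. This is strictly stronger than the statement of the theorem: ``quantum core'' means $\End^+(G)=\Aut^+(G)$, not that either algebra is commutative. For $C_{2n+1}$ and $O_n$ the two conclusions happen to coincide because of \cref{thm:nosym}, but for $K(3n+2,n+1)$ you yourself note that no such result is available, and indeed the paper never claims these Kneser graphs lack quantum symmetry. If some $K(3n+2,n+1)$ has $\aut^+\neq\aut$, your commutativity step would be false for that graph, so the approach cannot succeed uniformly. Even for the first two families, the commutativity argument you sketch (``propagate commutation relations around the cycle \ldots\ in the spirit of \cref{lem:ratcheting}'') is not a proof: \cref{lem:ratcheting} moves a single specific commutator along $\overline{C}_{2k}$ using column-sum identities that are already known there, whereas in $\End^+(C_{2n+1})$ you do not yet have $\sum_a p_{ab}=1$, and it is precisely this relation that the paper establishes first via the walk lemma.
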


\begin{corollary}\label{cor:core-classical}
    For $n\in \N$, $C_{2n+1}$ and $O_n$ have only classical endomorphisms.
\end{corollary}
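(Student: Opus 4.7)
The plan is to obtain the corollary by a direct chaining of the two results just invoked, together with the inclusions from \cref{lem:cores}. There is nothing new to prove: the content is entirely in \cref{thm:core-examples} and \cref{thm:nosym}.

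First I would apply \cref{thm:core-examples} to conclude that for $A \in \{C_{2n+1}, O_n\}$ one has $\enmo^+(A) = \aut^+(A)$, equivalently $\End^+(A) = \Aut^+(A)$ as $C^\ast$-algebras. Next I would apply \cref{thm:nosym} to conclude $\aut^+(A) = \aut(A)$, equivalently $\Aut^+(A) = \Aut(A) = C(\aut(A))$, which is commutative. Combining these two equalities yields
\begin{equation*}
    \End^+(A) \;=\; \Aut^+(A) \;=\; C(\aut(A)),
\end{equation*}
so $\End^+(A)$ is commutative and $\enmo^+(A) = \aut(A)$ as quantum spaces.

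To finish, I need the classical identification $\aut(A) = \enmo(A)$. This is immediate from \cref{lem:cores}: a quantum core is an oracular quantum core by part (i), and an oracular quantum core is a (classical) core by part (ii). Applied to $A \in \{C_{2n+1}, O_n\}$ (which are quantum cores by \cref{thm:core-examples}), this gives $\enmo(A) = \aut(A)$. Putting everything together,
\begin{equation*}
    \enmo^+(A) \;=\; \aut^+(A) \;=\; \aut(A) \;=\; \enmo(A),
\end{equation*}
which is exactly the statement that $A$ has only classical endomorphisms.

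There is no genuine obstacle here; the entire content of the corollary is packaged inside \cref{thm:core-examples} (whose proof is the real work) and \cref{thm:nosym} (quoted from the literature). The only thing to be careful about is distinguishing the two notions in play — being a quantum core ($\aut^+ = \enmo^+$) and having no quantum symmetry ($\aut^+ = \aut$) — and observing that together they force $\End^+(A)$ to be the commutative algebra of functions on the finite automorphism group, at which point classicality of all endomorphisms is automatic.
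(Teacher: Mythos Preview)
Your proof is correct and follows exactly the same chain of equalities as the paper: $\enmo^+(A) = \aut^+(A)$ by \cref{thm:core-examples}, then $\aut^+(A) = \aut(A)$ by \cref{thm:nosym}, and finally $\aut(A) = \enmo(A)$ via \cref{lem:cores}. The only difference is cosmetic --- you spell out the $C^\ast$-algebra side and the commutativity remark, which the paper omits.
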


\begin{proof}
    By \cref{thm:core-examples}, $\enmo^+(G)=\aut^+(G)$ for $G=C_{2n+1},O_n$. Further, by \cref{thm:nosym}, $\aut^+(G)=\aut(G)$. Finally, using \cref{lem:cores}, it follows that $\aut(G)=\enmo(G)$. Hence $\enmo^+(G)=\enmo(G)$ as wanted.
\end{proof}

We need some some lemmas to prove \cref{thm:core-examples}. First, we show the following useful lemma, which is a version of~\cite[Lemma 3.2.2]{Sch20} for general quantum morphisms.

\begin{lemma}\label{lem:we-use-this-one-all-the-time}
    Let $G$ and $H$ be graphs. Suppose that $u,u'\in G$ and $v,v'\in H$ be vertices such that there is a walk of length $\ell$ from $u$ to $u'$ but no walk of length $\ell$ from $v$ to $v'$. Then, $p_{uv}p_{u'v'}=0$ in $\Mor^+(G,H)$.
\end{lemma}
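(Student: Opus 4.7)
The plan is to prove this by induction on $\ell$.

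For the base case $\ell = 1$, a walk of length one from $u$ to $u'$ in $G$ is precisely the edge $u \sim_G u'$, and the absence of a walk of length one from $v$ to $v'$ in $H$ means $(v, v') \notin E^H$ (this permits $v = v'$, since $H$ has no loops). By the defining relations of $\Mor^+(G, H)$ applied to the arity-$2$ edge relation, $p_{uv} p_{u'v'} = 0$.

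For the inductive step, assume the statement for walks of length $\ell - 1$, and fix a walk $u = w_0 \sim_G w_1 \sim_G \cdots \sim_G w_\ell = u'$ in $G$. Insert a resolution of identity using the PVM $\{p_{w_{\ell-1} w}\}_{w \in H}$:
\begin{align*}
    p_{uv} \, p_{u'v'} = \sum_{w \in H} p_{uv} \, p_{w_{\ell-1} w} \, p_{u'v'}.
\end{align*}
For each term, consider whether there exists a walk of length $\ell - 1$ from $v$ to $w$ in $H$. If no such walk exists, then since $(u, w_1, \ldots, w_{\ell-1})$ is a walk of length $\ell - 1$ from $u$ to $w_{\ell-1}$ in $G$, the inductive hypothesis gives $p_{uv} \, p_{w_{\ell-1} w} = 0$. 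If such a walk exists but $w \sim_H v'$, then concatenating would produce a walk of length $\ell$ from $v$ to $v'$, contradicting the assumption; hence $w \nsim_H v'$, and the base-case relation applied to the edge $w_{\ell-1} \sim_G u'$ and the non-edge $(w, v')$ gives $p_{w_{\ell-1} w} \, p_{u'v'} = 0$. In either case the summand vanishes, so $p_{uv} p_{u'v'} = 0$, completing the induction.

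The argument is almost purely structural, so I do not anticipate any serious obstacle. The only point that warrants mild care is verifying that the corner cases $u = u'$ or $v = v'$ are handled correctly: in both cases the absence of self-loops in $G$ and $H$ ensures that $w_{\ell-1} \sim_G u'$ forces $w_{\ell-1} \neq u'$ and that $w \nsim_H v'$ continues to allow $w = v'$ without issue, so the edge/non-edge dichotomy applies uniformly. This costs no extra work.
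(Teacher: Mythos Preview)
Your proof is correct and uses essentially the same idea as the paper: insert resolutions of identity along the intermediate vertices of the walk in $G$, then observe that only terms corresponding to walks of length $\ell$ in $H$ survive, and there are none. The paper carries this out in one shot by inserting all $\ell-1$ intermediate PVMs simultaneously and reducing the sum to an empty sum over walks, whereas you unroll the same computation as an induction on $\ell$; the difference is purely presentational.
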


\begin{proof}
    Let $\mathbf{u}$ be the walk of length $\ell$ from $u$ to $u'$. Then,
    \begin{align*}
        p_{uv}p_{u'v'}&=\sum_{v_2,\ldots,v_{\ell}}p_{uv}p_{u_2v_2}\cdots p_{u_\ell v_\ell}p_{u'v'}\\
        &=\sum_{\substack{\mathbf{v}\in H^{\ell+1}\\v_1=v,\,v_{\ell+1}=v'\\v_i\sim_H v_{i+1}\;\forall\,i}}p_{u_1v_1}\cdots p_{u_{\ell+1}v_{\ell+1}}=0,
    \end{align*}
    as every term in the sum corresponds to a walk of length $\ell$ from $v$ to $v'$, and there are no such walks.
\end{proof}

Next, we need a general lemma about projections in $C^\ast$-algebras.

\begin{lemma}\label{lem:projector-sums}
    Let $A$ be a $C^\ast$-algebra, and $q_{ab}\in A$ be projections for $a,b\in[n]$. If $\sum_{a,b}q_{ab}=n$ and $q_{ab}q_{ab'}=0$ for all $a$ and $b\neq b'$, then $\sum_bq_{ab}=1$ for all $a$.
\end{lemma}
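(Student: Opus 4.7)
The plan is to first observe that the row-orthogonality hypothesis lets us combine the $q_{ab}$ within each row into a single projection, and then use a standard positivity argument to conclude that each of these row-sums must equal the identity.

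First I would set $p_a \coloneqq \sum_{b\in[n]} q_{ab}$ for each $a\in[n]$. Since the $q_{ab}$ for fixed $a$ are mutually orthogonal projections (by hypothesis $q_{ab}q_{ab'}=0$ for $b\neq b'$, which together with self-adjointness gives $q_{ab'}q_{ab}=0$ as well), their sum $p_a$ is again a projection: it is clearly self-adjoint, and
\begin{align*}
    p_a^2 = \sum_{b,b'} q_{ab}q_{ab'} = \sum_b q_{ab}^2 = \sum_b q_{ab} = p_a.
\end{align*}
In particular $0 \leq p_a \leq 1$ for each $a$.

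Next, the hypothesis $\sum_{a,b}q_{ab}=n$ rewrites as $\sum_a p_a = n\cdot 1$, so that
\begin{align*}
    \sum_{a=1}^n (1 - p_a) \;=\; n\cdot 1 - \sum_{a=1}^n p_a \;=\; 0.
\end{align*}
Each summand $1-p_a$ is a positive element of $A$ (as $p_a$ is a projection), and in a $C^\ast$-algebra a sum of positive elements that equals zero must have every summand equal to zero: indeed, $0 \leq 1-p_a \leq \sum_{a'}(1-p_{a'}) = 0$. Hence $p_a = 1$ for every $a$, which is exactly the statement $\sum_b q_{ab} = 1$.

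There is no real obstacle here; the key observation is just that row orthogonality promotes each row-sum to a projection, after which the hypothesis $\sum p_a = n\cdot 1$ forces saturation $p_a = 1$ by the standard fact that positive summands of zero vanish.
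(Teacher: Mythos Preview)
Your proof is correct and essentially identical to the paper's: both show that the row-orthogonality makes each row sum $p_a=\sum_b q_{ab}$ a projection, hence $\leq 1$, and then use that $\sum_a p_a = n$ forces $p_a=1$ for every $a$. The only difference is cosmetic---you spell out the vanishing of the positive summands $1-p_a$, whereas the paper phrases the same step as ``$\sum_{a,b}q_{ab}\leq n$ with equality iff $\sum_b q_{ab}=1$''.
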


\begin{proof}
    Since $q_{ab}q_{ab'}=0$, $\parens*{\sum_bq_{ab}}^2=\sum_{b,b'}q_{ab}q_{ab'}=\sum_{b}q_{ab}$, so $\sum_bq_{ab}$ is a projection. Since every projection $\leq 1$, we get $\sum_{a,b}q_{ab}\leq n$, with equality iff $\sum_bq_{ab}=1$ for all $a$. 
\end{proof}

Before going to the proof of the theorem, we need some properties of the graphs we are working with.

\begin{lemma}\label{lem:odd}
    Let $n\geq 2$.
    \begin{enumerate}[(i)]
        \item Every pair of distinct vertices in $O_n$ is connected by a path of odd length $<2n-1$.

        \item Every pair of non-adjacent vertices in $O_n$ is connected by a path of even length $<2n-2$.
    \end{enumerate}
\end{lemma}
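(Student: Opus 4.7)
The plan is to prove both parts by a constructive induction on $t = |u \cap v|$, building suitable walks and then upgrading them to paths using the odd girth. Throughout, I would work with the decomposition $[2n-1] = I \cup U \cup V \cup W$, where $I = u \cap v$, $U = u \setminus v$, $V = v \setminus u$, and $W = [2n-1] \setminus (u \cup v)$; their sizes are $t$, $n-1-t$, $n-1-t$, and $t+1$ respectively.

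For part (i), I would prove by induction on $t$ that there is an odd walk from $u$ to $v$ of length $2t+1$, which is at most $2n-3$ since $t \leq n-2$ for distinct vertices. The base case $t=0$ is the edge $u \sim v$. For the inductive step ($t \geq 1$), pick any $a \in I$ and $b \in W$ (both nonempty in this range) and set $v' = (v \setminus \{a\}) \cup \{b\}$, so that $|u \cap v'| = t - 1$. By the induction hypothesis there is an odd walk of length $2t-1$ from $u$ to $v'$. Since $v \cup v' = v \cup \{b\}$ has size $n$, the complement $\tilde v = [2n-1] \setminus (v \cup v')$ has size $n-1$ and is a common neighbour of $v'$ and $v$ in $O_n$, so appending the two edges $v' \to \tilde v \to v$ produces an odd walk of length $2t+1$ from $u$ to $v$.

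Part (ii) would follow from an analogous downward induction on $t$, yielding an even walk of length $2(n-1-t) \leq 2n-4$. The base case $t = n-1$ gives $u = v$ with the trivial path of length $0$. For $1 \leq t \leq n-2$, choose $a \in V$ and $b \in U$ (both nonempty when $t \leq n-2$) and set $v' = (v \setminus \{a\}) \cup \{b\}$, so $|u \cap v'| = t+1$. The induction provides an even walk of length $2(n-2-t)$ from $u$ to $v'$, and the same two-step extension $v' \to \tilde v \to v$ through $\tilde v = [2n-1] \setminus (v \cup v')$ yields an even walk of length $2(n-1-t)$.

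To finish, I would upgrade walks to paths of the same parity using the odd girth $2n-1$ of $O_n$. If a walk of length $\ell < 2n-1$ revisits a vertex, the closed subwalk between the two occurrences has length $< 2n-1$ and therefore cannot contain any odd cycle; hence its length is even, and shortcutting preserves parity. Iterating produces a path of the required parity whose length is still below the relevant bound ($2n-1$ in part (i) or $2n-2$ in part (ii)). The main obstacle in the construction is the parity itself: a single-step extension flips it, which forces the argument to always extend by two steps through a carefully chosen common neighbour, and the key combinatorial insight is that choosing $(a,b)$ from $I \times W$ versus $V \times U$ is precisely what drives $|u \cap v|$ downward in part (i) or upward in part (ii) so that the induction terminates at the correct base case.
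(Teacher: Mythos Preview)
Your argument is correct. Both you and the paper exploit the same decomposition $[2n-1]=I\cup U\cup V\cup W$ and obtain walks/paths of the exact same lengths, $2t+1$ in part~(i) and $2(n-1-t)$ in part~(ii). The difference lies in the execution: the paper writes down explicit paths in one shot by fixing an ordering on $I$ (respectively $U$) and listing the intermediate vertices directly, then verifies distinctness by inspection. You instead build the same objects inductively as \emph{walks}, swapping one element of $I\times W$ or $V\times U$ at a time, and then invoke the odd girth of $O_n$ to shortcut any repeated vertex while preserving parity. Your upgrade step is sound (a closed subwalk of length $<2n-1$ must be even, since odd girth $=$ odd walk girth here), and it spares you the bookkeeping of proving that the intermediate vertices are all distinct. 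The paper's explicit construction is slightly more self-contained in that it does not appeal to the odd girth at all, whereas your approach trades that verification for a clean parity argument.
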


\begin{proof}
    \begin{enumerate}[(i)]
        \item We use the Kneser graph presentation $O_n=K(2n-1,n-1)$. Let $u,v\subseteq[2n-1]$ be distinct vertices of $O_n$. Let $U=u\backslash v$, $V=v\backslash u$, $I=u\cap v$. Since $|u|=|v|=n-1$,
        $$\abs*{[2n-1]\backslash(u\cup v)}=(2n-1)-(|u|+|v|-|I|)=1+|I|.$$
        Then, there exists a set $J\subseteq[2n-1]$ of cardinality $|I|$ disjoint from $u$ and $v$; denote the remaining element $z$. Fix an order on $I$, and write $I_{\leq k}$ for the first $k$ elements and $I_{>k}$ for the remaining elements; we do similar for $J$. Now, consider the tuple $(u,v_0,u_0,\ldots,v_{|I|-1},u_{|I|-1},v_{|I|})$, where $u_i=U\cup I_{>i+1}\cup J_{\leq i}\cup\{z\}$ and $v_i=V\cup J_{>i}\cup I_{\leq i}$. This is path from $u$ to $v$ as the vertices are all distinct, $v_i\cap u_i=u_i\cap u_{i+1}=\varnothing$, and $v_{|I|}=V\cup I=v$. The path has length $2|I|+1$, which is odd; and as $u\neq v$, $|I|\leq n-2$, giving $2|I|+1\leq 2n-3<2n-1$.

        \item We use the same notation as above. If $u=v$, they are connected by a path of length $2$. Else, consider the tuple $(u_0',v_0',\ldots,u'_{|U|-1},v'_{|U|-1},u'_{|U|})$, where $u_i'=U_{>i}\cup V_{\leq i}\cup I$ and $v_i'=V_{>i+1}\cup J\cup U_{\leq i}\cup\{z\}$. This is a path from $u$ to $v$, as the vertices are distinct, $u_i'\cap v_i'=v_{i}'\cap u_{i+1}'=\varnothing$, and $u'_{|U|}=V\cup I=v$. It has length $2|U|$, which is even; and as $u$ and $v$ are not adjacent, $|I|\geq 1$, giving $|U|\leq n-2$, and hence $2|U|\leq 2n-4<2n-2$.
    \end{enumerate}
\end{proof}

\begin{lemma}\label{lem:kneser}
    Let $n>2k$.
    \begin{enumerate}
        \item $K(n,k)$ has no cycles of length $3$ iff $n<3k$.
        \item Every pair of distinct vertices of $K(n,k)$ is connected by a walk of length $3$ iff $n\geq 3k-1$.
        \item Every pair of non-adjacent vertices of $K(n,k)$ is connected by a walk of length $2$ iff $n\geq 3k-1$.
    \end{enumerate}
\end{lemma}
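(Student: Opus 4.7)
My plan is to exploit the Kneser-graph description of $K(n,k)$ throughout: vertices are $k$-subsets of $[n]$, and adjacency is disjointness. Thus a walk $(u_0,u_1,\dots,u_\ell)$ of length $\ell$ is a sequence of $k$-subsets with $u_i\cap u_{i+1}=\varnothing$ for each $i$. Each part reduces to an elementary counting argument about how many points one can ``spend'' on pairwise-disjoint $k$-subsets, with the threshold $n\ge 3k-1$ arising from a worst-case pair of vertices that intersect as much as possible.

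For part 1, a triangle in $K(n,k)$ is precisely a triple of pairwise-disjoint $k$-subsets of $[n]$, whose union has size $3k$. Hence a triangle exists if and only if $n\ge 3k$, equivalently $K(n,k)$ is triangle-free iff $n<3k$. For part 3, a walk $(u,w,v)$ of length two requires a $k$-set $w$ disjoint from $u\cup v$, which exists iff $|u\cup v|\le n-k$. Writing $|u\cup v|=2k-|u\cap v|$, this is the condition $|u\cap v|\ge 3k-n$. For non-adjacent distinct vertices $u,v$ we have $|u\cap v|\ge 1$, so the condition holds for all such pairs precisely when $1\ge 3k-n$, i.e.\ $n\ge 3k-1$. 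Conversely, if $n\le 3k-2$, the hypothesis $n>2k$ lets me pick $u,v$ with $|u\cap v|=1$ (for example $u=\{1,\dots,k\}$, $v=\{2,\dots,k+1\}$), and then $|u\cup v|=2k-1>n-k$, so no length-two walk joins them.

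The main obstacle is part 2, which I will split into two cases. If $u\sim_G v$ (so $u\cap v=\varnothing$), the walk $(u,v,u,v)$ works; this disposes of adjacent pairs automatically. For non-adjacent distinct $u,v$, I look for $w_1,w_2$ with $u\cap w_1=w_1\cap w_2=w_2\cap v=\varnothing$. Given $w_1\subseteq [n]\setminus u$ with $|w_1|=k$, a valid $w_2$ exists iff $|v\cup w_1|\le n-k$, i.e.\ $|v\cap w_1|\ge 3k-n$. Since $w_1\subseteq [n]\setminus u$, we have $v\cap w_1\subseteq v\setminus u$, so such $w_1$ exists iff $|v\setminus u|\ge 3k-n$. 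Using $|v\setminus u|=k-|u\cap v|$, this becomes $|u\cap v|\le n-2k$. The worst case is two vertices differing in a single element, where $|u\cap v|=k-1$; this is achievable because $n>2k\ge k+1$. So the condition holds uniformly over non-adjacent distinct pairs iff $k-1\le n-2k$, i.e.\ $n\ge 3k-1$. For the converse, if $n\le 3k-2$, taking $u=\{1,\dots,k\}$ and $v=\{1,\dots,k-1,k+1\}$ gives $|u\cap v|=k-1>n-2k$, ruling out any walk of length three between them. The only subtlety in this step is verifying that one can realise the required overlap $|w_1\cap(v\setminus u)|\ge 3k-n$ within the ground set $[n]\setminus u$ of size $n-k\ge k$; since the required overlap sits inside $[n]\setminus u$, we can simply greedily place $3k-n$ points of $w_1$ in $v\setminus u$ and fill the remaining $k-(3k-n)=n-2k$ slots from $[n]\setminus(u\cup v)$, whose size $n-2k+|u\cap v|$ suffices.
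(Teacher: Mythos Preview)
Your argument is correct in substance and follows the same counting approach as the paper, but your parenthetical example in part~3 is wrong: the sets $u=\{1,\dots,k\}$ and $v=\{2,\dots,k+1\}$ have $|u\cap v|=k-1$, not $1$. A correct example is $u=\{1,\dots,k\}$ and $v=\{k,k+1,\dots,2k-1\}$ (which is what the paper uses). Your surrounding computation $|u\cup v|=2k-1>n-k$ already assumes the intended $|u\cap v|=1$, so the logic stands once the example is fixed.

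For part~2 your treatment differs from the paper's and is arguably cleaner. The paper proves the forward direction via the explicit walk $(u,\ (v\setminus u)\cup S_1,\ (u\setminus v)\cup S_2,\ v)$ with $S_1,S_2$ disjoint subsets of $[n]\setminus(u\cup v)$ of size $|u\cap v|$, and handles the converse by a somewhat ad~hoc analysis of where the two intermediate vertices can lie for the specific pair $u=[k]$, $v=[k+1]\setminus\{1\}$. You instead extract the sharp criterion ``a length-$3$ walk from $u$ to $v$ exists iff $|u\cap v|\le n-2k$'' and read both directions off directly; this buys you a uniform argument for the converse and avoids the case analysis in the paper.
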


\begin{proof}
    \begin{enumerate}[(i)]
        \item Suppose first that $n<3k$. Let $u,v\subseteq[n]$ be adjacent vertices of $K(n,k)$. Then, $\abs*{[n]\backslash(u\cup v)}=n-2k<k$. Therefore, there is no subset $w\subseteq[n]$ of size $k$ such that $u\cap w=v\cap w=\varnothing$, and hence no vertex of $K(n,k)$ adjacent to both $u$ and $v$.

        Conversely, suppose that $n\geq 3k$. Then, $K(n,k)$ contains the triangle formed by the vertices $[k]$, $[2k]\backslash[k]$, and $[3k]\backslash[2k]$.

        \item Suppose $n\geq 3k-1$. Let $u,v\subseteq[n]$ be distinct vertices of $K(n,k)$. If $u$ and $v$ are adjacent, then $(u,v,u,v)$ is a walk of length $3$ from $u$ to $v$. Otherwise, if $u$ and $v$ are not adjacent, $\abs*{u\cap v}\leq k-1$, so $\abs*{n\backslash(u\cup v)}=n-2k+\abs*{u\cap v}\geq k-1+\abs*{u\cap v}\geq 2\abs*{u\cap v}$. Hence there are disjoint subsets $S_1,S_2\subseteq[n]\backslash(u\cup v)$ of size $\abs*{u\cap v}$. Then, $(u,(v\backslash u)\cup S_1,(u\backslash v)\cup S_2,v)$ is a walk from $u$ to $v$.

        For the converse, suppose $n<3k-1$. Let $u=[k]$ and $v=[k+1]\backslash\{1\}$. Let $w$ and $w'$ be vertices of $K(n,k)$ such that $u$ is adjacent to $w$ and $v$ is adjacent to $w'$. In particular as $n\leq 3k-2$, $w\subseteq A=\{k+1,\ldots,3k-2\}$ and $w'\subseteq B=\{1,k+2,\ldots,3k-2\}$. We have $\abs*{w\cap B},\abs*{w'\cap A}\geq k-1$ and $\abs*{A\cap B}=2k-3$. If $w$ and $w'$ are adjacent, then $w\cap B$ and $w'\cap A$ are disjoint in $A\cap B$. But this is impossible $\abs*{w\cap B}+\abs*{w'\cap A}>\abs*{A\cap B}$, so $w\cap w'\neq\varnothing$. Hence, there is no path of length $3$ from $u$ to $v$.

        \item Suppose that $n\geq 3k-1$. Let $u,v\subseteq[n]$ be non-adjacent vertices of $K(n,k)$. Then, $\abs*{u\cap v}\geq 1$, so $\abs*{[n]\backslash(u\cup v)}=n-2k+\abs*{u\cap v}\geq n-2k+1\geq k$. Hence there exists a vertex of $K(n,k)$ adjacent to both $u$ and $v$.

        For the converse, suppose $n<3k-1$. Then, the vertices $[k]$ and $[2k-1]\backslash[k-1]$ are not adjacent, and there is no vertex adjacent to both of them.
    \end{enumerate}
\end{proof}

\begin{lemma}\label{lem:cyclic-walks}
    Suppose $G$ is a graph with odd girth $\ell$. Then, $G$ has odd walk girth $\ell$.
\end{lemma}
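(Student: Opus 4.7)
The plan is to prove two inequalities: the odd walk girth of $G$ is at most $\ell$, and $\ell$ is at most the odd walk girth of $G$. The first is essentially by definition, since any cyclic path of odd length is in particular a cyclic walk of odd length, so the minimum over odd cyclic walks of positive length can only be smaller than or equal to the minimum over odd cyclic paths.

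For the reverse inequality, I would argue by minimality. Let $k$ denote the odd walk girth of $G$, and fix a cyclic walk $W = (x_1, \ldots, x_{k+1})$ of odd length $k$. The heart of the argument is to show that $W$ must in fact be a cyclic path; then the definition of odd girth immediately yields $k \geq \ell$. Suppose for contradiction that $W$ has a non-endpoint repetition, so that there exist $1 \leq i < j \leq k+1$ with $(i,j) \neq (1,k+1)$ and $x_i = x_j$. I would split $W$ at this repetition into the ``inner'' cyclic subwalk $W_1 = (x_i, x_{i+1}, \ldots, x_j)$ of length $j - i$ and the ``outer'' cyclic subwalk $W_2 = (x_1, \ldots, x_i, x_{j+1}, \ldots, x_{k+1})$ of length $k - (j-i)$. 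The two lengths sum to $k$, which is odd, so exactly one of $W_1, W_2$ is a cyclic walk of odd length, furnishing an odd cyclic walk strictly shorter than $k$ and contradicting the minimality of $k$.

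The only care required is in the edge cases of the splitting, which will be the main bookkeeping obstacle. The assumption $(i,j) \neq (1,k+1)$ forces $1 \leq j-i \leq k-1$, so both $W_1$ and $W_2$ have length at least $1$ and at most $k-1$. Loop-freeness of $G$ rules out any cyclic walk of length exactly $1$ (which would require $x_p \sim_G x_p$), so both subwalks have length at least $2$; since $k$ is odd, the odd one among $W_1, W_2$ then has length between $3$ and $k-2$, in particular positive and strictly less than $k$. This is exactly the contradiction needed. Hence $W$ is a cyclic path of odd length $k$, giving $k \geq \ell$ and combining with the first inequality to conclude $k = \ell$.
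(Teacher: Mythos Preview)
Your proof is correct and uses essentially the same splitting argument as the paper: the paper phrases it as a recursive descent (repeatedly splitting an odd closed walk of length $<\ell$ until a cyclic path is reached), while you phrase it as a minimality argument (the shortest odd closed walk cannot split, hence is already a cyclic path). Your treatment of the edge cases is in fact more careful than the paper's.
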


This does not hold for the even girth: in fact every graph with an edge has even walk girth $2$.

\begin{proof}
    Suppose $\mathbf{x}$ is a cyclic walk of odd length $m<\ell$. We want to extract a cycle of odd length from $\mathbf{x}$ by splitting it up wherever it intersects itself. If $\mathbf{x}$ is already a cyclic path, then we have a contradiction. If not, there exists $i<j$ such that $x_i=x_j$. As such, we can split $\mathbf{x}$ into two cyclic walks, $(x_i,\ldots,x_j)$ of length $j-i$ and $(x_1,\ldots,x_i,x_{j+1},\ldots,x_{m+1})$ of length $m-(j-i)$. Both of these have length $<m$ and one of them has odd length. We can repeat this argument on this walk of odd length recursively. As the length of the walk decreases with each step, we must eventually reach a cyclic path of odd length. This is a contradiction as noted above.
\end{proof}

\begin{proof}[Proof of \cref{thm:core-examples}]
    First, consider the case $G=C_{2n+1}$. The generators of $\End^+(C_{2n+1})$ satisfy the relations $p_{ab}p_{(a+1)b'}=0$ if $b'\neq b\pm 1$ (modulo $2n+1$). First, we want to show that for $a\neq a'$, $p_{ab}p_{a'b}=0$. There exist $k,k'\in[2n]$ such that $a'=a+k=a-k'$; and as $2n+1$ is odd, one of $k,k'$ is odd. Hence, there is a path of odd length $<2n+1$ from $a$ to $a'$. But, by \cref{lem:cyclic-walks}, there is no cyclic walk of odd length $<2n+1$ in $C_{2n+1}$, so $p_{ab}p_{a'b}=0$ by \cref{lem:we-use-this-one-all-the-time}. This also implies by \cref{lem:projector-sums} that $\sum_ap_{ab}=1$. To finish, we also need that $p_{ba}p_{b'(a+1)}=0$ for all $b'\neq b\pm 1$. In the same way as above, there is a walk of even length $<2n$ from $b$ to $b'$ in $C_{2n+1}$. However, there is no such walk from $a$ to $a+1$, giving the wanted relation.

    We proceed in the same way for the other two classes of graph. Consider the case $G=O_n$ with $n\geq 2$. Let $a,a'\in O_n$ be distinct. By \cref{lem:odd}, there is a walk of odd length $<2n-1$ from $a$ to $a'$. $O_n$ has no cycle of odd length $<2n-1$, so by \cref{lem:cyclic-walks} there is no cyclic walk of this length. Hence $p_{ab}p_{a'b}=0$. As before, this implies $\sum_{a}p_{ab}=1$. Finally, let $a\nsim_{O_n}a'$ and $b\sim_{O_n}b'$. Again using \cref{lem:odd}, there is a walk of even length $<2n-2$ from $a$ to $a'$. On the other hand, a walk of even length $<2n-2$ from $b$ to $b'$ would give rise to a cyclic walk of odd length $<2n-1$, which cannot exist. So we have $p_{ab}p_{a'b'}=0$.

    Finally, consider the case $G=K(3n+2,n+1)$. Let $a,a'$ be distinct vertices of $G$. Using \cref{lem:kneser}, there is a walk of length $3$ from $a$ to $a'$, but there is no cyclic walk of length $3$, so $p_{ab}p_{a'b}=0$. Again, this implies $\sum_ap_{ab}=1$. Finally, for $a\nsim_G a'$ and $b\sim_G b'$, there is a walk of length $2$ from $a$ to $a'$, but as there is not cycle of length $3$, there is no walk of length $2$ from $b$ to $b'$. Hence, $p_{ab}p_{a'b'}=0$.
\end{proof}

It follows from \cref{cor:core-classical} that \cref{thm:nogo} does not preclude the existence of a commutativity gadget for $C_{2n+1}$ or $O_n$. However, it seems challenging to construct a commutativity gadget for these graphs. In \cref{sec:an-attempt}, we show that a natural extension of Ji's triangular prism commutativity gadget~\cite{Ji13} is not a commutativity gadget for the cycle $C_{2n+1}$ with $n\geq 2$.

\begin{question}
    Does there exist a commutativity gadget for $C_{2n+1}$ or $O_n$? Are their corresponding entangled CSPs undecidable?
\end{question}

\subsection{Oracularisability for graphs}\label{sec:graph-oracularisability}

In this section, we study oracularisability, in the sense of \cref{def:oracularisable}, for graph CSPs; this has consequences for the structure of possible commutativity gadgets for $C_{2n+1}$ and $O_n$, discussed in the previous section.

\begin{lemma}
    Let $G$ be a graph. $G$ is oracularisable if and only if $\Mor^+(K_2,G)$ is commutative.
\end{lemma}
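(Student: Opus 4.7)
The plan is to exploit the $C^\ast$-cocategory structure from \cref{sec:cocat} to transport commutation relations between $\Mor^+(K_2,G)$ and $\Mor^+(B,G)$ for arbitrary graphs $B$. Both directions rest on the observation that the unique edge of $K_2$ acts as a ``universal edge'' from which all other edges can be pulled back.

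For the forward direction, suppose $G$ is oracularisable. Taking $B=K_2$ in \cref{def:oracularisable}, we get $\Mor^+(K_2,G)=\Mor^{o+}(K_2,G)$. The latter is commutative: the oracular relations imposed for the edge $(0,1)\in E^{K_2}$ give $[p_{0v},p_{1v'}]=0$ for all $v,v'\in G$, while the same-vertex commutations $[p_{iv},p_{iv'}]=0$ follow from $\{p_{iv}\}_{v\in G}$ being a PVM. Thus $\Mor^+(K_2,G)$ is commutative.

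For the reverse direction, fix a graph $B$ and an edge $(a,a')\in E^B$. We need to show $[p_{ag},p_{a'g'}]=0$ in $\Mor^+(B,G)$ for all $g,g'\in G$, since these are exactly the extra relations defining $\Mor^{o+}(B,G)$ (together with same-vertex commutations, which hold automatically). The map $\iota:K_2\to B$ defined by $\iota(0)=a$, $\iota(1)=a'$ is a classical graph homomorphism, inducing a character $\pi_\iota:\Mor^+(K_2,B)\to\C$ with $\pi_\iota(p_{xb})=\delta_{b,\iota(x)}$. Using the cocomposition $\Delta^B_{K_2,G}:\Mor^+(K_2,G)\to\Mor^+(K_2,B)\otimes\Mor^+(B,G)$, I would define
\begin{equation*}
    \Phi_\iota=(\pi_\iota\otimes\id)\circ\Delta^B_{K_2,G}:\Mor^+(K_2,G)\to\Mor^+(B,G),
\end{equation*}
which is a composition of $\ast$-homomorphisms and hence a $\ast$-homomorphism. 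A short calculation on generators gives
\begin{equation*}
    \Phi_\iota(p_{xg})=\sum_{b\in B}\pi_\iota(p_{xb})\,p_{bg}=p_{\iota(x)\,g}.
\end{equation*}

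With this in hand, the commutativity of $\Mor^+(K_2,G)$ transfers through $\Phi_\iota$ to give $[p_{ag},p_{a'g'}]=[\Phi_\iota(p_{0g}),\Phi_\iota(p_{1g'})]=\Phi_\iota([p_{0g},p_{1g'}])=0$ in $\Mor^+(B,G)$, as desired. Since $(a,a')$ was an arbitrary edge of an arbitrary graph $B$, this shows $\Mor^+(B,G)=\Mor^{o+}(B,G)$ for every $B$, i.e., $G$ is oracularisable. There is no substantial obstacle here; the only thing to check carefully is the formula for $\Phi_\iota$ on generators, and this is immediate from the definition of the cocomposition.
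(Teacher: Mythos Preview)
Your proof is correct and follows essentially the same approach as the paper. The only difference is cosmetic: for the reverse direction, the paper defines the $\ast$-homomorphism $\Mor^+(K_2,G)\to\Mor^+(B,G)$ directly on generators as $p_{0g}\mapsto p_{ag}$, $p_{1g}\mapsto p_{a'g}$ (implicitly checking the relations), whereas you obtain the same map via the cocomposition $\Delta^B_{K_2,G}$ composed with the character $\pi_\iota$; your route has the advantage that well-definedness comes for free from the cocategory structure.
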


\begin{proof}
    If $G$ is oracularisable, then $\mor^{+}(K_2,G)=\mor^{o+}(K_2,G)$, so at the level of the algebras, $\Mor^{+}(K_2,G)=\Mor^{o+}(K_2,G)$. Since $\Mor^{o+}(K_2,G)$ is generated by two commuting PVMs, it is commutative, and hence so is $\Mor^{+}(K_2,G)$.

    For the converse, suppose that $H$ is a graph. Now, let $h\sim_Hh'$. Then, there is a $\ast$-homomorphism $\Mor^+(K_2,G)\rightarrow\Mor^+(H,G)$ given by $p_{0g}\mapsto p_{hg}$ and $p_{1g}\mapsto p_{h'g}$. Since $\Mor^+(K_2,G)=\Mor^{o+}(K_2,G)$, we have that $[p_{hg},p_{h'g'}]=0$ for all $g,g'\in G$. As such, all the relations of $\Mor^{o+}(H,G)$ are satisfied in $\Mor^+(H,G)$, so $\mor^+(H,G)=\mor^{o+}(H,G)$.
\end{proof}

\begin{proposition}
    Let $G$ be a graph. $G$ is oracularisable if and only if $G$ contains no cyclic path of length $4$.
\end{proposition}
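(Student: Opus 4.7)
The plan is to invoke the previous lemma and reduce the claim to showing that $\Mor^+(K_2,G)$ is commutative if and only if $G$ has no $4$-cycle. Write $P_g = p_{0g}$ and $Q_g = p_{1g}$ for the generators: the algebra is generated by the two PVMs $\{P_g\}_{g \in G}$ and $\{Q_g\}_{g \in G}$ subject to the relations $P_g Q_{g'}=0$ whenever $g\nsim_G g'$ (taking adjoints also forces $Q_{g'}P_g=0$ in that case).

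For the forward direction I argue by contrapositive: if $G$ contains a $4$-cycle on distinct vertices $a\sim b\sim c\sim d\sim a$, I construct an explicit noncommutative representation $\pi:\Mor^+(K_2,G)\to B(\C^2)$. Choose two noncommuting projections $P,Q\in B(\C^2)$, and set $\pi(P_a)=P$, $\pi(P_c)=1-P$, $\pi(Q_b)=Q$, $\pi(Q_d)=1-Q$, sending every other generator to $0$. The PVM and orthogonality relations are immediate, and the only possibly nonzero cross-products $\pi(P_g)\pi(Q_{g'})$ correspond to the four pairs $(g,g')\in\{a,c\}\times\{b,d\}$, each of which is an edge of the $4$-cycle. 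Hence $\pi$ descends to a well-defined $\ast$-homomorphism, and $[\pi(P_a),\pi(Q_b)]=[P,Q]\neq 0$ shows $\Mor^+(K_2,G)$ is noncommutative.

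For the reverse direction, the key combinatorial reformulation is that $G$ has no $4$-cycle if and only if any two distinct vertices share at most one common neighbor. I want to show $[P_g,Q_{g'}]=0$ for all $g,g'$. The case $g\nsim g'$ is trivial. For $g\sim g'$, it suffices to show $P_g Q_{g'}=P_g Q_{g'}P_g$; then $P_g Q_{g'}$ is self-adjoint (as $P_g Q_{g'}P_g$ is) and hence equals its adjoint $Q_{g'}P_g$. Expanding $1=\sum_{g''}P_{g''}$, this reduces to showing $P_g Q_{g'}P_{g''}=0$ for every $g''\neq g$. The case $g''\nsim g'$ is immediate from $Q_{g'}P_{g''}=0$, so assume $g''\sim g'$ with $g''\neq g$. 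Then $g'$ is a common neighbor of $g$ and $g''$, and by the no-$4$-cycle hypothesis it is the \emph{unique} common neighbor. Inserting $1=\sum_h Q_h$ gives
\[
  0 \;=\; P_g Q_{g'} P_{g''} P_g \;=\; \sum_h P_g Q_{g'} P_{g''} Q_h P_g \;=\; P_g Q_{g'} P_{g''} Q_{g'} P_g,
\]
where the first equality uses $P_{g''}P_g=0$, and only the $h=g'$ term survives because for any other $h$ either $g''\nsim h$ (giving $P_{g''}Q_h=0$) or $g\nsim h$ (giving $Q_h P_g=0$). Setting $X=P_g Q_{g'}P_{g''}$, the identity above says $XX^\ast=0$, and since $\Mor^+(K_2,G)$ is a $C^\ast$-algebra this forces $X=0$.

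The main obstacle is the reverse direction, specifically finding the right algebraic move that uses the combinatorial hypothesis. Once one identifies that the no-$4$-cycle condition is equivalent to the uniqueness of common neighbors, the sum-insertion trick that isolates the $h=g'$ term falls into place, and the rest is a routine $C^\ast$-algebraic argument via $XX^\ast=0\Rightarrow X=0$.
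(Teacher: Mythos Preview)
Your proof is correct and follows essentially the same approach as the paper: both reduce via the lemma to analyzing $\Mor^+(K_2,G)$, both use the equivalence between ``no $4$-cycle'' and ``unique common neighbor'', and both establish $P_gQ_{g'}=P_gQ_{g'}P_g$ in the reverse direction. Your forward-direction representation in $B(\C^2)$ is simpler than the paper's $B(\C^2\otimes\C^2)$ construction, and your reverse-direction step (concluding $X=0$ from $XX^\ast=0$) is a minor variant of the paper's direct expansion $Q_{g'}=1-\sum_{v'\neq g'}Q_{v'}$, but the substance is the same.
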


\begin{proof}
    First suppose that $G$ has a cyclic path of length $4$; denote this $(a,b,c,d,a)$. Consider the $\ast$-homomorphism $\pi:\Mor^+(K_2,G)\rightarrow B(\C^2\otimes\C^2)$ defined as $\pi(p_{0a})=\ketbra{00}$, $\pi(p_{0b})=\ketbra{10}$, $\pi(p_{0c})=\ketbra{01}$, $\pi(p_{0d})=\ketbra{11}$, $\pi(p_{1a})=\ketbra{1+}$, $\pi(p_{1b})=\ketbra{0+}$, $\pi(p_{1c})=\ketbra{1-}$, $\pi(p_{1d})=\ketbra{0-}$, and $\pi(p_{gh})=0$ for all other generators. It is clear to see that these satisfy the relations of $\Mor^+(K_2,G)$, so it is well-defined. $\pi(p_{0a})$ and $\pi(p_{1b})$ do not commute, so by the lemma, $G$ is not oracularisable.

    For the converse, suppose that $G$ contains no cyclic path of length $4$. Let $u,v\in G$; we want to show that $[p_{0u},p_{1v}]=0$ in $\Mor^+(K_2,G)$. First, if $u\nsim_G v$, then $p_{0u}p_{1v}=0$, so $[p_{0u},p_{1v}]=0$. Now, suppose that $u\sim_G v$. Then,
    \begin{align*}
        p_{0u}p_{1v}=\sum_{w\in G}p_{0u}p_{1v}p_{0w}=\sum_{w\sim_G v}p_{0u}p_{1v}p_{0w}.
    \end{align*}
    We claim that $p_{0u}p_{1v}p_{0w}=0$ when $w\neq u$. In fact,
    \begin{align*}
        p_{0u}p_{1v}p_{0w}=p_{0u}\parens*{1-\sum_{v'\neq v}p_{1v'}}p_{0w}=p_{0u}p_{0w}-\sum_{v'\neq v}p_{0u}p_{1v'}p_{0w}.
    \end{align*}
    As $u\neq w$, then $p_{0u}p_{0w}=0$; and as there is no cyclic path of length $4$, we must either have $v'\nsim_G u$ or $v'\nsim_G w$, and hence $p_{0u}p_{1v'}p_{0w}=0$. Therefore, $p_{0u}p_{1v}p_{0w}=0$ when $w\neq u$, so $p_{0u}p_{1v}=p_{0u}p_{1v}p_{0u}$. As this is hermitian, $[p_{0u},p_{1v}]=0$ in this case. Hence, $\Mor^{+}(K_2,G)$ is commutative, so by the lemma, $G$ is oracularisable.
\end{proof}

\begin{corollary}
    For all $n$, $C_{2n+1}$ and $O_n$ are oracularisable.
\end{corollary}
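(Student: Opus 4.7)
The plan is to apply the immediately preceding proposition, which reduces oracularisability of a graph $G$ to the condition that $G$ contains no cyclic path of length $4$. So I would only need to verify that neither $C_{2n+1}$ nor $O_n$ contains a $4$-cycle.

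For $C_{2n+1}$, this is immediate: the cycle graph $C_{2n+1}$ has girth $2n+1$ (its unique cycle as a subgraph is the whole graph), and since $2n+1$ is odd it cannot equal $4$. More explicitly, any cyclic path of length $4$ in $C_{2n+1}$ would have to repeat a vertex, contradicting the definition of cyclic path.

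For $O_n = K(2n-1,n-1)$, I would argue combinatorially using the Kneser presentation. Suppose for contradiction that $u_1,u_2,u_3,u_4$ are distinct vertices forming a $4$-cycle, so every cyclically adjacent pair is disjoint. Then $u_1$ and $u_3$ are both disjoint from $u_2\cup u_4$, so both $u_1,u_3 \subseteq [2n-1]\setminus(u_2\cup u_4)$. A short inclusion-exclusion gives
\begin{equation*}
    \bigl|[2n-1]\setminus(u_2\cup u_4)\bigr| = (2n-1) - 2(n-1) + |u_2\cap u_4| = 1 + |u_2\cap u_4|.
\end{equation*}
Since $u_1$ is an $(n-1)$-subset of this set, we get $|u_2 \cap u_4| \geq n-2$. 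If $|u_2\cap u_4| = n-1$ then $u_2=u_4$, contradicting distinctness; if $|u_2\cap u_4| = n-2$ then the complement has exactly $n-1$ elements, forcing $u_1 = u_3 = [2n-1]\setminus(u_2\cup u_4)$, again a contradiction. Hence $O_n$ has no $4$-cycle.

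The proof is entirely routine given the preceding proposition; the only mildly interesting step is the Kneser-graph computation for $O_n$, and even that reduces to a one-line counting argument after observing that both ``opposite'' vertices of the would-be cycle must lie in the same small complement. No essential obstacles are expected.
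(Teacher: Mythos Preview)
Your proposal is correct and follows exactly the approach implicit in the paper: the corollary is stated without proof, relying on the preceding proposition together with the well-known fact that $C_{2n+1}$ and $O_n$ contain no $4$-cycle. You have simply supplied the (routine) verification of the latter fact, and your Kneser-graph counting argument for $O_n$ is clean and correct.
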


Hence, $C_{2n+1}$ and $O_n$ have commutativity gadgets if and only if they have oracular commutativity gadgets.

\small
\bibliographystyle{bibtex/bst/alphaarxiv.bst}
\bibliography{bibtex/bib/full.bib,bibtex/bib/quantum.bib,bibtex/quantum_new.bib}

\normalsize
\appendix

\section{Disproving an extension of Ji's triangular prism gadget}\label{sec:an-attempt}

One way to express Ji's triangular prism commutativity gadget is as $C_3\Boxprod P_2$. We show in this appendix that for any $n>1$ and any $k$, $C_{2n+1}\Boxprod P_k$ cannot be a commutativity gadget for $C_{2n+1}$. First, we show that sufficient distance is needed between the distinguished vertices of a commutativity gadget.

\begin{lemma}\label{lem:distinguished-distance}
    Suppose $(G,x,y)$ is a commutativity gadget for a graph $A$. Let $a,b\in A$. If there is a walk of length $\ell$ from $x$ to $y$ in $G$, then there is a walk of length $\ell$ from $a$ to $b$ in $A$.
\end{lemma}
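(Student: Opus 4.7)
The plan is to use the defining property (i) of a commutativity gadget, together with the fact that products of quantum endomorphism generators vanish whenever they try to encode an invalid walk, to transport a walk in $G$ from $x$ to $y$ to a walk in $A$ from $a$ to $b$.

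First I would pick the $\ast$-homomorphism $\pi_{a,b} \in \mor^{qa}(G,A)$ guaranteed by the commutativity gadget property, which satisfies $\pi_{a,b}(p_{xa}) = \pi_{a,b}(p_{yb}) = 1$. Since $\{p_{xc}\}_{c \in A}$ is a PVM and its image under $\pi_{a,b}$ still sums to $1$, the fact that one of the projections maps to the identity forces $\pi_{a,b}(p_{xc}) = 0$ for all $c \neq a$, and analogously $\pi_{a,b}(p_{yc}) = 0$ for all $c \neq b$. Letting $(u_1, u_2, \ldots, u_{\ell+1})$ denote the walk of length $\ell$ in $G$ with $u_1 = x$ and $u_{\ell+1} = y$, I would insert the resolutions of identity $\pi_{a,b}\bigl(\sum_{c_i \in A} p_{u_i c_i}\bigr) = 1$ between the two generators $\pi_{a,b}(p_{xa})$ and $\pi_{a,b}(p_{yb})$ to obtain
\[
1 = \pi_{a,b}(p_{xa})\pi_{a,b}(p_{yb}) = \sum_{c_2, \ldots, c_\ell \in A} \pi_{a,b}\bigl(p_{xa} p_{u_2 c_2} \cdots p_{u_\ell c_\ell} p_{yb}\bigr).
\]

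The key step is then to observe that, by the defining relations of $\mor^+(G,A)$ applied to the arity-$2$ edge relation, $p_{u_i c_i} p_{u_{i+1} c_{i+1}} = 0$ in $\Mor^+(G,A)$ whenever $u_i \sim_G u_{i+1}$ but $c_i \nsim_A c_{i+1}$. Consequently the term indexed by $(c_2, \ldots, c_\ell)$ vanishes unless $(a, c_2, \ldots, c_\ell, b)$ is a walk of length $\ell$ in $A$. If no such walk existed, the whole right-hand side would be $0$, contradicting that it equals $1$; so at least one tuple $(a, c_2, \ldots, c_\ell, b)$ must be a walk in $A$, yielding the desired conclusion.

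There is no serious obstacle here: everything reduces to inserting PVM resolutions of identity and using the arity-$2$ non-edge relations of the quantum homomorphism algebra, both of which are immediate from the setup. The only mild care needed is to note that $\pi_{a,b}(p_{xa}) = 1$ commutes with everything (as the identity operator), so the product and sum manipulations on the right-hand side are valid even though $p_{xa}$ and $p_{yb}$ may not commute inside $\Mor^+(G,A)$.
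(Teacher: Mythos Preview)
Your proof is correct and follows essentially the same approach as the paper. The only difference is packaging: the paper invokes its earlier \cref{lem:we-use-this-one-all-the-time} (which performs exactly your resolution-of-identity expansion and edge-relation argument to conclude $p_{xa}p_{yb}=0$ in $\Mor^+(G,A)$) and then applies $\pi_{a,b}$ to reach the contradiction $0=1$, whereas you unwind that lemma inline.
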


We use this lemma in the contrapositive: if there is a no walk of length $\ell$ between a pair of vertices in $A$, then can be no walk of length $\ell$ between $x$ and $y$.

\begin{proof}
    Suppose, for a contradiction, that there exists a walk of length $\ell$ from $x$ to $y$ but no walk of length $\ell$ from $a$ to $b$. Then, by \cref{lem:we-use-this-one-all-the-time}, we must have $p_{xa}p_{yb}=0$ in $\Mor^+(G,A)$. On the other hand, as $(G,x,y)$ is a commutativity gadget, there exists $\pi_{a,b}\in\mor^{qa}(G,A)$ such that $\pi_{a,b}(p_{xa})=\pi_{a,b}(p_{yb})=1$. As such, we find that $0=\pi_{a,b}(p_{xa}p_{yb})=1$, which is the wanted contradiction.
\end{proof}

\begin{corollary}\label{cor:c2n+1-distance}
    Let $(G,x,y)$ be a commutativity gadget for $C_{2n+1}$. Then, $d(x,y)\geq 2n$.
\end{corollary}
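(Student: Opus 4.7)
The plan is to use \cref{lem:distinguished-distance} in its contrapositive form and then exhibit, for each $\ell < 2n$, a pair of vertices in $C_{2n+1}$ that cannot be joined by a walk of length exactly $\ell$. Once this is done, if we had $d(x,y) = \ell < 2n$, then the existence of a walk of length $\ell$ from $x$ to $y$ in $G$ would force, via the lemma, a walk of length $\ell$ between every pair of vertices in $C_{2n+1}$, contradicting the claim just made.

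First I would fix the parametrisation $V(C_{2n+1}) = \Z_{2n+1}$ with the standard edge set $i \sim i \pm 1$, and analyse which vertices are reachable from $0$ in exactly $\ell$ steps. Each step changes the position by $\pm 1$, so after $\ell$ steps the position equals $\sum_{i=1}^\ell \varepsilon_i$ with $\varepsilon_i \in \{\pm 1\}$; taking $k$ plus signs and $\ell - k$ minus signs, the reachable residues modulo $2n+1$ form the set
\[
  S_\ell \;=\; \{\,2k - \ell \pmod{2n+1} : 0 \le k \le \ell\,\}.
\]

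The key step is to check that $|S_\ell| = \ell + 1$ whenever $\ell < 2n+1$. Here I would use that $2n+1$ is odd, so $2$ is invertible modulo $2n+1$; if $2k - \ell \equiv 2k' - \ell \pmod{2n+1}$ then $k \equiv k' \pmod{2n+1}$, and with $0 \le k, k' \le \ell < 2n+1$ this forces $k = k'$. Therefore $|S_\ell| = \ell + 1$, and in particular $|S_\ell| < 2n+1$ whenever $\ell < 2n$. Hence there exists at least one vertex $b \in C_{2n+1}$ that is not reachable from $0$ by a walk of length exactly $\ell$.

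To finish, I would assume for contradiction that $d(x,y) = \ell < 2n$, so there is a walk of length $\ell$ from $x$ to $y$ in $G$. Applying \cref{lem:distinguished-distance} with $a = 0$ and the chosen $b$ above would yield a walk of length $\ell$ from $0$ to $b$ in $C_{2n+1}$, contradicting the construction of $b$. Therefore $d(x,y) \ge 2n$. The only technical point is the cyclic-group arithmetic just sketched; there is no genuine obstacle, but it is worth writing out carefully so that the odd parity of $2n+1$ is used explicitly.
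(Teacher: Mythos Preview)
Your proof is correct. Both your argument and the paper's use \cref{lem:distinguished-distance} in the contrapositive by exhibiting, for each $\ell<2n$, a pair $(a,b)$ in $C_{2n+1}$ with no connecting walk of length $\ell$; the difference lies in how that pair is produced.

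The paper splits on parity: for odd $\ell\leq 2n-1$ it takes $a=b$ and invokes the odd walk girth of $C_{2n+1}$ (via \cref{lem:cyclic-walks}), and for even $\ell\leq 2n-2$ it takes $a=0$, $b=1$, noting that an even walk from $0$ to $1$ plus the edge $\{0,1\}$ would give an odd closed walk of length $<2n+1$. Your argument instead computes the reachable set $S_\ell=\{2k-\ell\bmod 2n+1:0\leq k\leq\ell\}$ directly and uses invertibility of $2$ modulo $2n+1$ to get $|S_\ell|=\ell+1<2n+1$. Your route is self-contained and avoids appealing to the walk-girth lemma, giving a uniform treatment of all $\ell$ at once; the paper's route is shorter in context because it reuses \cref{lem:cyclic-walks}, and it makes the role of the odd girth explicit, which ties in with the surrounding discussion. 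Either is fine.
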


\begin{proof}
    In $C_{2n+1}$, there is no cyclic walk of odd length $<2n+1$. Hence, by \cref{lem:distinguished-distance}, there is no walk of odd length $\leq 2n-1$ from $x$ and $y$. Also, for the same reason, there is no walk of even length $<2n$ from $0$ to $1$, and therefore there is no walk of even length $\leq 2n-2$ from $x$ to $y$. As such, we have that there is no walk of length $<2n$ from $x$ to $y$ in $G$, giving the result.
\end{proof}

In $G\Boxprod H$, there is a walk of length $\ell$ from $(g,h)$ to $(g',h')$ iff there is $s\leq\ell$ such that there is a walk of length $s$ from $g$ to $g'$ in $G$ and a walk of length $\ell-s$ from $h$ to $h'$ in $H$. In particular, this implies that the diameter of $C_{2n+1}\Boxprod P_k$ is $n+k$, so if $k<n$, $C_{2n+1}\Boxprod P_k$ cannot be a commutativity gadget for $C_{2n+1}$. We now focus on the case that $k\geq n\geq 2$. We begin with some technical lemmata.

\begin{lemma}\label{lem:pk-lifting}
    Let $\pi$ be a $\ast$-representation of $\Mor^+(P_k,C_{2n+1})$. Then, $\pi$ induces a $\ast$-representation $\pi'$ of $\Mor^+(C_{2n+1}\Boxprod P_k,C_{2n+1})$ as $\pi'(p_{(a,s)b})=\pi(p_{s(a+b)})$. 
\end{lemma}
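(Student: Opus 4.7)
The plan is to verify that the definition $\pi'(p_{(a,s)b}) = \pi(p_{s(a+b)})$ (with $a+b$ computed in $\Z_{2n+1}$) respects all the defining relations of $\Mor^+(C_{2n+1}\Boxprod P_k, C_{2n+1})$; once this is established, $\pi'$ extends uniquely from the generators to a $\ast$-representation of the whole algebra. The defining relations of $\Mor^+(C_{2n+1}\Boxprod P_k, C_{2n+1})$ are: (i) each $p_{(a,s)b}$ is a self-adjoint idempotent; (ii) for each vertex $(a,s)$, $\sum_{b \in \Z_{2n+1}} p_{(a,s)b} = 1$; and (iii) $p_{(a,s)b}p_{(a',s')b'} = 0$ whenever $(a,s) \sim_{C_{2n+1}\Boxprod P_k} (a',s')$ and $b \nsim_{C_{2n+1}} b'$.

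Relation (i) is immediate because each $\pi(p_{s(a+b)})$ is a self-adjoint idempotent in the image of $\pi$. For (ii), I would use the change of summation variable $c = a+b$ (a bijection on $\Z_{2n+1}$) together with the PVM property in $\Mor^+(P_k, C_{2n+1})$ to write
\begin{equation*}
\sum_{b \in \Z_{2n+1}} \pi'(p_{(a,s)b}) \;=\; \sum_{c \in \Z_{2n+1}} \pi(p_{sc}) \;=\; \pi(1) \;=\; 1.
\end{equation*}

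The main content is relation (iii). Edges in the box product come in two flavours, so I would split into cases. In the first case, $(a,s) \sim (a,s')$ with $s' = s \pm 1$ and $a=a'$; then the required identity $\pi(p_{s(a+b)})\pi(p_{s'(a+b')}) = 0$ is the image under $\pi$ of the defining relation of $\Mor^+(P_k, C_{2n+1})$, since $s \sim_{P_k} s'$ and $(a+b) \nsim_{C_{2n+1}} (a+b')$ (translation by $a$ preserves adjacency in $C_{2n+1}$). In the second case, $(a,s) \sim (a',s)$ with $a' = a \pm 1$ and $s=s'$; here both factors live in the PVM $\{\pi(p_{sc})\}_{c \in \Z_{2n+1}}$, so $\pi(p_{s(a+b)})\pi(p_{s(a'+b')}) = \delta_{a+b,\,a'+b'}\pi(p_{s(a+b)})$, and the Kronecker delta vanishes because $a+b = a'+b'$ would force $b' = b \mp 1$, i.e.\ $b \sim_{C_{2n+1}} b'$, contradicting the hypothesis $b \nsim_{C_{2n+1}} b'$.

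No step here looks like a genuine obstacle; the construction essentially exploits the fact that the second factor $P_k$ only sees the $s$-coordinate and the first factor $C_{2n+1}$ acts on the answer label by a translation. The only subtlety is being careful that the two edge types are handled by genuinely different mechanisms (one by the edge relations of $\Mor^+(P_k,C_{2n+1})$, the other by PVM orthogonality at a fixed $s$), which is what the preceding calculation verifies.
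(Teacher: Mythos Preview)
Your proposal is correct and follows essentially the same approach as the paper's proof: verify the projection and PVM relations directly, then split the edge relation into the two box-product cases, handling the $s=s'\pm 1$ case via the edge relations of $\Mor^+(P_k,C_{2n+1})$ and the $a'=a\pm 1$ case via PVM orthogonality at fixed $s$.
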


\begin{proof}
    We show that the relations of $\Mor^+(C_{2n+1}\Boxprod P_k,C_{2n+1})$ are preserved under $\pi'$. First, by construction, the $\pi'(p_{(a,s)b})$ are projections, and $\sum_{b\in\Z_{2n+1}}\pi'(p_{(a,s)b})=\sum_{b\in\Z_{2n+1}}\pi(p_{s(a+b)})=1$. Now, suppose that $(a,s)\sim_{C_{2n+1}\Boxprod P_k}(a',s')$ and $b\nsim_{C_{2n+1}} b'$. We have that either $a=a'$ and $s=s'\pm 1$, or $a=a'\pm 1$ and $s=s'$; and that $b\neq b'\pm 1$. In the former case,
    $$\pi'(p_{(a,s)b})\pi'(p_{(a',s')b'})=\pi(p_{s(b+a)}p_{s'(b'+a)})=0,$$
    as $b+a\neq b'+a\pm 1$. In the latter case,
    $$\pi'(p_{(a,s)b})\pi'(p_{(a',s')b'})=\pi(p_{s(b+a)}p_{s(b'+a')})=0,$$
    as $b+a\neq b'+a'$, since $a=a'\pm 1$ but $b\neq b'\pm 1$.
\end{proof}

\begin{lemma}\label{lem:pk-to-c2n+1}
    Let $s_0,t_0\in[k]$ such that $t_0\geq s_0+2$. Then, there exists a finite-dimensional representation of $\Mor^+(P_k,C_{2n+1})$ under which $p_{s_0s_0}$ and $p_{t_0t_0}$ do not commute.
\end{lemma}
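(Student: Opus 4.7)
The plan is to construct an explicit finite-dimensional $\ast$-representation $\pi : \Mor^+(P_k, C_{2n+1}) \to B(\C^2)$ under which $\pi(p_{s_0 s_0})$ and $\pi(p_{t_0 t_0})$ fail to commute. The guiding principle is to place a rank-$1$ PVM in the computational basis at path-vertex $s_0$, and a rank-$1$ PVM in the Hadamard basis at path-vertex $t_0$, while every other path-vertex carries a trivial ``classical'' PVM with a single non-zero projection equal to $1$. Between $s_0$ and $t_0$ the classical part follows the walk in $C_{2n+1}$ through the consecutive residues $s_0+1, s_0+2, \ldots, t_0-1$. The key structural fact exploited is that $s_0 + 1$ (respectively $t_0 - 1$) is the unique common neighbour in $C_{2n+1}$ of both $s_0$ and $s_0 + 2$ (respectively both $t_0 - 2$ and $t_0$), which is what allows the two-colour quantum PVMs at the distinguished path-vertices to be compatible with the single-colour classical PVMs at the adjacent intermediate vertices.

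Concretely, reading all colours modulo $2n+1$, the images of the generators are to be
\[
\pi(p_{s_0, s_0}) = \ketbra{0}, \quad \pi(p_{s_0, s_0+2}) = \ketbra{1}, \quad \pi(p_{t_0, t_0}) = \ketbra{-}, \quad \pi(p_{t_0, t_0-2}) = \ketbra{+},
\]
with all other $\pi(p_{s_0, b})$ and $\pi(p_{t_0, b})$ set to $0$; for each path-vertex $a$ with $s_0 < a < t_0$ set $\pi(p_{a, a}) = 1$ and $\pi(p_{a, b}) = 0$ otherwise; and for path-vertices outside the window $[s_0, t_0]$ extend by an arbitrary classical walk in $C_{2n+1}$ beginning with the common neighbours $s_0 + 1$ and $t_0 - 1$ of the active colours at the two ends, and propagating outward. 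Verification of the defining relations of $\Mor^+(P_k, C_{2n+1})$ then reduces to a few routine checks: the projection and PVM-sum relations are immediate from the definitions, and the orthogonality relation $\pi(p_{a, b}) \pi(p_{a+1, b'}) = 0$ for $b \nsim_{C_{2n+1}} b'$ needs to be checked only for adjacent path-vertices where at least one PVM is non-trivial, in which case it follows from the common-neighbour properties stated above.

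Non-commutativity of the distinguished projections is then immediate from $[\ketbra{0}, \ketbra{-}] \neq 0$, which holds because $\ket{0}$ and $\ket{-}$ are linearly independent but non-orthogonal unit vectors. The main bookkeeping issue, rather than a genuine difficulty, is the modular arithmetic of $\Z_{2n+1}$ in edge cases such as $s_0 + 2 \equiv t_0 - 2$ (when $t_0 - s_0 = 4$) or when $t_0 - s_0$ is close to $2n+1$ so that the intermediate walk wraps around; in each such case one can either verify that the relations still hold directly (since $s_0$ and $t_0$ are at distance $\geq 2$ in $P_k$ and hence carry no direct compatibility constraint even if they share colours), or swap the $+2$ offset with $-2$ at one of the endpoints to reroute the intermediate walk. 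Overall the construction generalises the familiar $n = 1$, $t_0 - s_0 = 2$ situation underlying Ji's triangular prism gadget.
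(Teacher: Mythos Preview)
Your construction is correct and follows essentially the same idea as the paper: place non-commuting rank-one PVMs at the distinguished path-vertices $s_0$ and $t_0$, with colours supported on $\{s_0,s_0+2\}$ and $\{t_0,t_0-2\}$ respectively, and classical (single-colour) walks elsewhere. The paper arrives at this by first invoking the Schmidt criterion for endomorphisms (\cref{thm:endo-schmidt}) with the endomorphisms $f,g\in\enmo(P_k)$ that shift by $+2$ on $[1,s_0]$ and by $-2$ on $[t_0,k+1]$, obtaining a noncommutative representation of $\End^+(P_k)$, and then composing in the cocategory with the classical homomorphism $h:P_k\to C_{2n+1}$, $s\mapsto s\bmod(2n+1)$; you instead write down the resulting $\ast$-representation of $\Mor^+(P_k,C_{2n+1})$ directly. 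The only substantive difference is that the paper's composite keeps the two-colour PVMs on \emph{all} of $[1,s_0]$ and $[t_0,k+1]$, whereas you simplify to trivial single-colour PVMs outside $\{s_0,t_0\}$; both variants satisfy the relations for the same reason (the common neighbour $s_0+1$, resp.\ $t_0-1$, of the two active colours). Your approach is more self-contained; the paper's buys reuse of the Schmidt machinery already developed. The modular edge cases you flag are in fact harmless and need no rerouting, since the only adjacency checks involving a non-trivial PVM are at the edges $s_0-1\sim s_0\sim s_0+1$ and $t_0-1\sim t_0\sim t_0+1$, and these are verified exactly as you describe regardless of how the colours wrap.
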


\begin{proof}
    Consider the maps $f,g:P_k\rightarrow P_k$ defined as $f(s)=s+2$ for $s\leq s_0$ and $f(s)=s$ for $s>s_0$; and $g(s)=s$ for all $s< t_0$ and $g(s)=s-2$ for $s\geq t_0$. Then, $f$ and $g$ are endomorphisms of $P_k$ with disconnected supports, so by \cref{thm:endo-schmidt}, there exists a $\ast$-homomorphism $\pi:\Mor^+(P_k,P_k)\rightarrow B(\C^2)$. By the construction of \cref{thm:endo-schmidt}, this $\ast$-homomorphism is given by $\pi(p_{st})=\delta_{s,t}(\ketbra{0}+\ketbra{+}-1)+\delta_{f(s),t}\ketbra{1}+\delta_{g(s),t}\ketbra{-}$. In particular, $\pi(p_{s_0s_0})=\ketbra{0}$ and $\pi(p_{t_0t_0})=\ketbra{+}$. Now, let $h:P_k\rightarrow C_{2n+1}$ be the graph homomorphism $h(s)=s\;\mathrm{mod}\;2n+1$. This induces a representation $\pi_h:\Mor^+(P_k,C_{2n+1})\rightarrow \C$ as $\pi_h(p_{sa})=\delta_{a,h(s)}$. Composing $\pi$ and $\pi_h$ as $qa$-morphisms gives
    $$(\pi_h\circ\pi)(p_{sa})=\sum_{t\in[k]}\pi(p_{st})\otimes\pi_h(p_{ta})=\sum_{t:\,a=t\;\mathrm{mod}\;2n+1}\pi(p_{st}).$$
    Therefore, $(\pi_h\circ\pi)(p_{s_0s_0})=\pi(p_{s_0s_0})+\pi(p_{s_0(s_0+2n+1)})+\ldots=\ketbra{0}$ and $(\pi_h\circ\pi)(p_{t_0t_0})=\pi(p_{t_0t_0})+\pi(p_{t_0(t_0+2n+1)})+\ldots=\ketbra{+}$, which do not commute.
\end{proof}

\begin{theorem}
    Let $n\geq 2$. There do not exist $k\in\N$, and $(a_0,s_0),(b_0,t_0)\in C_{2n+1}\Boxprod P_k$ such that $(C_{2n+1}\Boxprod P_k,(a_0,s_0),(b_0,t_0))$ is a commutativity gadget for $C_{2n+1}$.
\end{theorem}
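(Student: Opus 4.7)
The plan is to derive a contradiction directly: assuming $(C_{2n+1} \Boxprod P_k, (a_0, s_0), (b_0, t_0))$ is a commutativity gadget, we will construct a finite-dimensional quantum endomorphism of $C_{2n+1}$ under which the PVMs at the two distinguished vertices fail to commute, violating property (ii) of a commutativity gadget.

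By \cref{cor:c2n+1-distance}, the distance between the distinguished vertices in $C_{2n+1} \Boxprod P_k$ is at least $2n$. Since distance in the box product splits as $d_{G \Boxprod H}((g,h),(g',h')) = d_G(g,g') + d_H(h,h')$, and the diameter of $C_{2n+1}$ is $n$, this forces $|t_0 - s_0| \geq n \geq 2$. By symmetry of the gadget definition in the two distinguished vertices, we may assume $t_0 > s_0$, so $t_0 \geq s_0 + 2$, and since $t_0 \leq k+1$ we also have $s_0 \leq k - 1$. These are precisely the hypotheses needed for \cref{lem:pk-to-c2n+1}.

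Applying that lemma gives a finite-dimensional $\ast$-representation $\pi$ of $\Mor^+(P_k, C_{2n+1})$ under which $\pi(p_{s_0\,s_0})$ and $\pi(p_{t_0\,t_0})$ do not commute (second subscripts read modulo $2n+1$). Lifting via \cref{lem:pk-lifting} produces a finite-dimensional $\ast$-representation $\pi' \in \mor^{qa}(C_{2n+1} \Boxprod P_k, C_{2n+1})$ satisfying $\pi'(p_{(a,s)\,b}) = \pi(p_{s\,(a+b)})$. Choosing $c = s_0 - a_0$ and $c' = t_0 - b_0$ in $\Z_{2n+1}$, we obtain
\[
    \pi'(p_{(a_0, s_0)\,c}) = \pi(p_{s_0\,s_0}), \qquad \pi'(p_{(b_0, t_0)\,c'}) = \pi(p_{t_0\,t_0}),
\]
which fail to commute, contradicting property (ii).

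The main obstacle is not any single computation: the heavy lifting was already done in establishing \cref{lem:pk-lifting} and \cref{lem:pk-to-c2n+1}. What remains is essentially bookkeeping, namely converting the distance lower bound from \cref{cor:c2n+1-distance} into the structural condition $t_0 \geq s_0 + 2$ on the $P_k$-coordinates, and then choosing the $C_{2n+1}$-valued shifts $c$ and $c'$ so that the lifted PVMs reduce to the non-commuting pair produced by \cref{lem:pk-to-c2n+1}.
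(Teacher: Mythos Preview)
Your argument is correct and essentially identical to the paper's: both derive $t_0 - s_0 \geq n \geq 2$ from \cref{cor:c2n+1-distance} and the box-product distance formula, then invoke \cref{lem:pk-to-c2n+1} and lift via \cref{lem:pk-lifting} to witness non-commuting projections at the distinguished vertices. One cosmetic slip: your opening sentence calls the constructed $\pi'$ a ``quantum endomorphism of $C_{2n+1}$'', but it is a quantum morphism $C_{2n+1}\Boxprod P_k \to C_{2n+1}$; the body of the proof is unaffected.
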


Note that $(C_{2n+1}\Boxprod P_{n+1},(0,1),(n,n+1))$ satisfies property (i) of a commutativity gadget, and for $n=1$ gives Ji's triangular prism gadget.

\begin{proof}
    Suppose for a contradiction that such a commutativity gadget exists. Without loss of generality, we may assume that $s_0\leq t_0$. First, note that $d((a_0,s_0),(b_0,t_0))=d(a_0,b_0)+t_0-s_0$, where $d(a,b)=\min_{i\in\Z}\abs*{b-a+i(2n+1)}$ is the distance in $C_{2n+1}$. Hence, by \cref{cor:c2n+1-distance}, we must have that $d(a_0,b_0)+t_0-s_0\geq 2n$. In particular, since $d(a_0,b_0)\leq n$, $t_0\geq s_0+n\geq s_0+2$. Then, using \cref{lem:pk-to-c2n+1}, there exists $\pi\in\mor^{qa}(P_k,C_{2n+1})$ such that $\pi(p_{s_0s_0})$ and $\pi(p_{t_0t_0})$ do not commute. Now, using \cref{lem:pk-lifting}, we can construct $\pi'\in\mor^{qa}(C_{2n+1}\Boxprod P_k,C_{2n+1})$ such that $\pi'(p_{(a,s)b})=\pi(p_{s(a+b)})$. As such, $\pi'(p_{(a_0,s_0)(s_0-a_0)})=\pi(p_{s_0s_0})$ and $\pi'(p_{(b_0,t_0)(t_0-b_0)})=\pi(p_{t_0t_0})$, which do not commute. Hence, $(C_{2n+1}\Boxprod P_k,(a_0,s_0),(b_0,t_0))$ does not satisfy property (ii) of a commutativity gadget.
\end{proof}

\section{Alternate proof that bipartite graphs remain easy}\label{sec:bipartite}

In this appendix, we show that, for graph CSPs, classically easy problems remain easy when upgraded to the entangled setting. This gives an alternate proof of~\cite[Corollary 2]{BZ25}.

\begin{theorem}[Hell-Ne\v{s}et\v{r}il \cite{HN90}]
    Let $G$ be a graph. If $G$ is bipartite, then $\CSP(G)\in\tsf{P}$. Otherwise $\CSP(G)$ is $\tsf{NP}$-complete.
\end{theorem}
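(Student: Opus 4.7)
The plan is to address the two directions separately. The hard direction---NP-hardness for non-bipartite $G$---is the original Hell--Ne\v{s}et\v{r}il theorem, proved via indicator constructions that reduce $\CSP(K_3)$ to $\CSP(G)$ by exploiting the odd cycles of $G$. An alternative modern route is to invoke the CSP dichotomy theorem \cite{Bul17,Zhu17}: one first reduces to the case that $G$ is a core (i.e., $\enmo(G)=\aut(G)$); a non-bipartite core can be shown not to admit a weak near unanimity polymorphism, so NP-completeness follows. I would simply cite~\cite{HN90} for this direction, since it is orthogonal to the new content of the appendix.

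For the easy direction, the plan is to reduce $\CSP(G)$ for bipartite $G$ to a standard bipartiteness test. First, if $E(G)=\varnothing$, then $H\to G$ iff $E(H)=\varnothing$, which is checkable in linear time. Otherwise, pick any edge $\{u,v\}\in E(G)$: this gives an embedding $K_2\hookrightarrow G$ by $0\mapsto u$, $1\mapsto v$. Conversely, the bipartition of $G$ is a homomorphism $G\to K_2$. Composing these, we obtain that for any graph $H$,
\begin{align*}
    H\to G \iff H\to K_2 \iff H\text{ is bipartite},
\end{align*}
the forward direction using $H\to G\to K_2$ and the reverse using $H\to K_2\hookrightarrow G$. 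Bipartiteness of a finite simple graph can be decided in linear time via breadth-first search (checking for odd cycles). Hence $\CSP(G)\in\tsf{P}$.

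The main obstacle, and the genuine content of this appendix, is not in proving the above classical statement but in extending the easy direction to the entangled setting, which gives an alternate proof of~\cite[Corollary 2]{BZ25}. The strategy there will be to show that any perfect tracial (in particular, Connes-embeddable) strategy for the game corresponding to $H\to G$ with $G$ bipartite forces $H$ to be bipartite, so that the entangled and classical CSPs coincide for bipartite targets. Once that is established, the polynomial-time algorithm for the classical case transfers immediately to the entangled case.
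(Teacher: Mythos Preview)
The paper does not give its own proof of this statement: the Hell--Ne\v{s}et\v{r}il theorem is simply quoted from \cite{HN90} and used as background. Your proposal is therefore more than what the paper does, and your plan to cite \cite{HN90} for the hard direction is exactly what the paper does for the entire theorem.

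Your sketch of the easy direction is correct and, as you anticipate, its structure (the case split on $E(G)=\varnothing$ versus $E(G)\neq\varnothing$, and in the latter case factoring through $K_2$) is precisely the skeleton of the paper's Lemmas~\ref{lem:bipartite-empty} and~\ref{lem:bipartite-nonempty}. The paper uses those lemmas not to reprove Hell--Ne\v{s}et\v{r}il but to establish the entangled extension, Theorem~\ref{thm:bipartite-easy}, via Proposition~\ref{prop:bipartite-classical}; you correctly identify this as the genuine content of the appendix. One small refinement: the paper's argument for the entangled case does not go through perfect tracial strategies directly but rather shows $\Mor^+(H,G)\neq 0 \iff \mor(H,G)\neq\varnothing$ at the level of the $C^\ast$-algebra, using \cref{lem:we-use-this-one-all-the-time} to kill generators when $H$ has an odd cycle.
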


We show the following, extending the first part of the Hell-Ne\v{s}et\v{r}il theorem to entangled CSPs.

\begin{theorem}\label{thm:bipartite-easy}
    Let $G$ be a bipartite graph. Then, $\CSP_w(G)_{1,s}^\ast\in\tsf{P}$ for $w\in\{c-c,c-v,a\}$ and all $s\leq 1$.
\end{theorem}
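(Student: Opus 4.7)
The plan is to show that for bipartite $G$, both the set of yes instances and the complement of the set of no instances coincide with the (polynomial-time decidable) set of bipartite graphs $H$. The algorithm is simply: on input $H$, check bipartiteness of $H$ via BFS, and output ``yes'' iff $H$ is bipartite (when $G$ has at least one edge; if $G$ is edgeless, output ``yes'' iff $H$ is edgeless).

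For correctness, I need two implications. First, if $H$ is bipartite, then $H$ is not a no instance: since $G$ is bipartite it admits a classical homomorphism $\phi : G \to K_2$, and any bipartite $H$ admits $\psi : H \to K_2$; choosing any edge $(g_0,g_1)$ of $G$ yields a classical homomorphism $H \to G$, giving a deterministic perfect strategy, hence a Connes-embeddable trace on $\Mor^w_{\mbb{u}}(H,G)$ with defect $0 \leq 1-s$ for every $s \leq 1$. Second, if $H$ is non-bipartite, then $H$ is not a yes instance. By the correspondence between perfect tracial strategies and quantum (approximate) homomorphisms, this amounts to showing $\mor^{qa}(H,G) = \emptyset$ (for $w=a$) and $\mor^{oqa}(H,G) = \emptyset$ (for $w=c{-}v, c{-}c$); the oracular case follows from the non-oracular one because $\mor^{oqa}(H,G)$ embeds into $\mor^{qa}(H,G)$ via precomposition with the quotient $\Mor^+(H,G) \twoheadrightarrow \Mor^{o+}(H,G)$.

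The main technical step is the claim that $\Mor^+(H, K_2) = 0$ whenever $H$ is non-bipartite. In $\Mor^+(H, K_2)$, the edge relations give $p_{v0}p_{u0} = p_{v1}p_{u1} = 0$ for every edge $u \sim_H v$. Writing $P_v = p_{v0}$, so $p_{v1} = 1 - P_v$, the relation $(1-P_u)(1-P_v) = 0$ expands to $1 - P_u - P_v + P_uP_v = 0$, and using $P_uP_v = 0$ we get $P_u + P_v = 1$, so $P_v = 1 - P_u$. Traversing an odd cycle $v_0 \sim v_1 \sim \cdots \sim v_{2k} \sim v_0$, the parity flips each step and we obtain $P_{v_0} = 1 - P_{v_0}$, i.e.\ $2 P_{v_0} = 1$. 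Combined with $P_{v_0}^2 = P_{v_0}$ this forces $1 = 0$ in the algebra, so $\Mor^+(H, K_2) = 0$. Hence $\mor^{qa}(H, K_2) = \emptyset$, and composing any hypothetical $\pi \in \mor^{qa}(H, G)$ with the classical $\phi \in \mor^d(G,K_2) \subseteq \mor^{qa}(G,K_2)$ in the category $\mc{R}^{qa}(\sigma)$ would yield a member of $\mor^{qa}(H,K_2)$, a contradiction.

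The edge case in which $G$ has no edges is handled by noting that if $H$ has an edge $v \sim_H v'$, then $\sum_{g,g' \in G} p_{vg}p_{v'g'} = 0$ forces $1 = 0$ in $\Mor^+(H,G)$, so again $\mor^{qa}(H,G) = \emptyset$, while edgeless $H$ admits the trivial constant homomorphism to $G$. The main obstacle is the key algebraic claim about $\Mor^+(H, K_2)$; once that is in hand, everything else is routine use of the composition structure in $\mc{R}^{qa}(\sigma)$ and the defect-zero trace correspondence already set up in the preliminaries (in particular $C_u^\ast(\Mor^a_\pi / \langle \supp \mu \rangle) = \Mor^+$ and its oracular analogue).
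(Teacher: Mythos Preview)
Your proof is correct and reaches the same conclusion as the paper, but via a genuinely different route. The paper works directly in $\Mor^+(H,G)$ for arbitrary bipartite $G$: taking an odd cycle in $H$ gives a vertex pair $x,y$ joined by walks of both parities, and then the walk-length lemma (\cref{lem:we-use-this-one-all-the-time}) together with the bipartition of $G$ forces $p_{xu}p_{yv}=0$ for every $u,v\in G$, whence $1=\sum_{u,v}p_{xu}p_{yv}=0$. You instead first reduce to $K_2$ via the cocategory composition with a classical $\phi:G\to K_2$, and then dispatch $\Mor^+(H,K_2)=0$ by an explicit projection computation on the odd cycle ($P_u+P_v=1$ along edges, so odd length forces $2P=1$). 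Your reduction step is slick and makes good use of the categorical machinery developed in \cref{sec:cocat}; the paper's argument is more self-contained and avoids invoking composition, at the cost of appealing to the somewhat heavier walk lemma. Both handle the edgeless-$G$ corner case the same way, and both pass to the oracular variants via the quotient $\Mor^+\twoheadrightarrow\Mor^{o+}$.
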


It follows from this that $\SuccinctCSP_w(G)_{1,s}^\ast\in\tsf{EXP}$, and is therefore decidable.

\begin{lemma}\label{lem:bipartite-empty}
    Let $G$ be such that $E(G)=\varnothing$. Then, $\Mor^+(H,G)\neq 0$ if and only if $E(H)=\varnothing$.
\end{lemma}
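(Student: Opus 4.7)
The plan is to prove both directions by directly exploiting the relations in $\Mor^+(H,G)$. Recall that $\Mor^+(H,G)$ is generated by PVMs $\{p_{hg}\}_{g\in G}$ for each $h\in H$, subject to the homomorphism relations $p_{h_1 g_1}p_{h_2 g_2}=0$ whenever $(h_1,h_2)\in E^H$ and $(g_1,g_2)\notin E^G$. Since $E^G=\varnothing$ by hypothesis, the latter condition $(g_1,g_2)\notin E^G$ holds for all pairs, so the homomorphism relations collapse to $p_{h_1 g_1}p_{h_2 g_2}=0$ for every $(h_1,h_2)\in E^H$ and every $g_1,g_2\in G$.

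For the ``if'' direction, I assume $E(H)=\varnothing$. Then the homomorphism relations become vacuous, and $\Mor^+(H,G)$ is the free product $\ast_{h\in H}\,C(G)$ of commutative $C^\ast$-algebras, which is nonzero. A cleaner way to see this is to note that any constant map $f\colon H\to G$ is a graph homomorphism (since $H$ has no edges), and the induced character $p_{hg}\mapsto \delta_{g,f(h)}$ yields a nonzero $\ast$-homomorphism $\Mor^+(H,G)\to\C$, forcing $\Mor^+(H,G)\neq 0$.

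For the ``only if'' direction, I argue by contrapositive: assume $E(H)\neq\varnothing$ and pick an edge $(h_1,h_2)\in E^H$. By the observation above, $p_{h_1 g_1}p_{h_2 g_2}=0$ for all $g_1,g_2\in G$. Multiplying the PVM relations, I compute
\begin{align*}
1 \;=\; 1\cdot 1 \;=\; \Bigl(\sum_{g_1\in G}p_{h_1 g_1}\Bigr)\Bigl(\sum_{g_2\in G}p_{h_2 g_2}\Bigr) \;=\; \sum_{g_1,g_2\in G} p_{h_1 g_1}p_{h_2 g_2} \;=\; 0,
\end{align*}
so $1=0$ in $\Mor^+(H,G)$ and thus $\Mor^+(H,G)=0$.

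There is no real obstacle: the argument is a direct manipulation of the defining relations. The only subtlety worth flagging is ensuring the universal $C^\ast$-algebra is genuinely nonzero in the edgeless case, which is handled by exhibiting an explicit character coming from a classical homomorphism.
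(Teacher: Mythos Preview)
Your proof is correct and follows essentially the same approach as the paper: exhibit a classical (constant) homomorphism to show $\Mor^+(H,G)\neq 0$ when $E(H)=\varnothing$, and sum the vanishing products $p_{h_1g_1}p_{h_2g_2}$ over all $g_1,g_2$ to derive $1=0$ when $H$ has an edge. The free-product remark is a nice extra observation but not needed; the character argument alone suffices, exactly as in the paper.
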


\begin{proof}
    First, suppose $E(H)=\varnothing$. Fix a vertex $v\in V(G)$. Define the map $f:H\rightarrow G$ by $f(u)=v$ for all $u\in H$. This is a graph homomorphism, so $\mor(H,G)\neq\varnothing$. Hence, $\Mor^+(H,G)\neq 0$.

    Conversely, suppose that $H$ has an edge $\{a,b\}\in E(H)$. Since $u\nsim_G v$ for all $u,v\in G$, $p_{au}p_{bv}=0$ in $\Mor^+(H,G)$. Therefore, $1=\sum_{u,v\in G}p_{au}p_{bv}=0$, so $\Mor^+(H,G)=0$.
\end{proof}

\begin{lemma}\label{lem:bipartite-nonempty}
    Let $G$ be a bipartite graph such that $E(G)\neq\varnothing$. Then, $\Mor^+(H,G)\neq 0$ if and only if $H$ is bipartite.
\end{lemma}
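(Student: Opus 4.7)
The plan is to establish the two directions separately, using the tools for walks and projections in $\Mor^+(H,G)$ developed earlier in the paper.

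For the \emph{if} direction, suppose $H$ is bipartite with parts $V(H) = X \sqcup Y$, and fix any edge $\{v_0, v_1\} \in E(G)$. The map $f:V(H) \to V(G)$ sending all of $X$ to $v_0$ and all of $Y$ to $v_1$ is a classical graph homomorphism $H \to G$: the image of any edge of $H$ is $\{v_0, v_1\} \in E(G)$. Via the isomorphism $\mc{R}^d(\sigma) \cong \mc{R}(\sigma)$, this induces a character $\Mor^+(H,G) \to \C$, certifying that $\Mor^+(H,G) \neq 0$.

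For the \emph{only if} direction, I argue the contrapositive: assume $H$ is not bipartite, so $H$ contains an odd cycle, which in particular yields a cyclic walk of odd length $\ell = 2k+1$ based at some vertex $h_0 \in V(H)$. Since $G$ is bipartite, say with parts $V(G) = A \sqcup B$, any walk in $G$ of odd length starting at a vertex $v$ ends in the opposite part, so there is no walk of odd length from any $v \in V(G)$ to itself. Applying \cref{lem:we-use-this-one-all-the-time} with $u = u' = h_0$ and $v' = v$ arbitrary gives $p_{h_0 v} p_{h_0 v} = 0$ in $\Mor^+(H,G)$; since $p_{h_0 v}$ is a projection, this means $p_{h_0 v} = 0$ for every $v \in V(G)$. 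But $\sum_{v \in V(G)} p_{h_0 v} = 1$, so $1 = 0$ in $\Mor^+(H,G)$, i.e.\ $\Mor^+(H,G) = 0$.

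I do not foresee a serious obstacle: both directions are direct consequences of tools already in place --- a classical homomorphism collapsing $H$ onto a single edge of $G$ for one direction, and the walk/projection argument of \cref{lem:we-use-this-one-all-the-time} together with the absence of odd closed walks in a bipartite graph for the other. The only mild subtlety is that the walk argument must be applied with both endpoints equal to $h_0$, so that $p_{h_0 v}^2 = p_{h_0 v}$ forces the generator itself to vanish; this then clashes with the PVM relation to collapse the whole algebra.
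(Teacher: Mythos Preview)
Your proof is correct and follows essentially the same approach as the paper: a classical homomorphism onto a single edge for the forward direction, and \cref{lem:we-use-this-one-all-the-time} combined with the absence of odd closed walks in a bipartite $G$ for the converse. Your version of the converse is in fact slightly more streamlined --- by taking $u=u'=h_0$ you get $p_{h_0v}^2=0$ directly and collapse the algebra via a single PVM, whereas the paper works with two vertices $x,y$ connected by both an odd and an even walk and splits the sum $\sum_{u,v}p_{xu}p_{yv}$ into same-part and different-part cases.
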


\begin{proof}
    First, suppose $H$ is bipartite. Then, let $A,B$ be the partition of $H$; and let $\{u,v\}$ be an edge of $G$. Define $f:H\rightarrow G$ by $f(x)=u$ if $x\in A$ and $f(x)=v$ if $x\in B$. As such, if $a\sim_Hb$, then $f(a)\sim_Gf(b)$ as $u\sim_Gv$. Hence, $f$ is a graph homomorphism. So, $\mor(H,G)$ is nonempty, and therefore $\Mor^+(H,G)\neq 0$.

    Conversely, suppose that $H$ is not bipartite. Therefore, there is a cyclic walk of odd length in $H$. In particular, there is $x,y\in H$ such that there is a walk of odd length and a walk of even length from $x$ to $y$. Let $A,B$ be the partition of $G$. Consider the terms of $1=\sum_{u,v\in V(G)}p_{xu}p_{yv}$. If $u,v\in A$ or $u,v\in B$, every walk from $u$ to $v$ has even length. But there is a walk of odd length from $x$ to $y$, so by \cref{lem:we-use-this-one-all-the-time}, $p_{xu}p_{yv}=0$. Similarly, if $u\in A\land v\in B$ or $u\in B\land v\in A$, every walk from $u$ to $v$ has odd length. But there is a walk of even length from $x$ to $y$, so $p_{xu}p_{yv}=0$. Hence $1=0$ in $\Mor^+(H,G)$, so $\Mor^+(H,G)=0$.
\end{proof}

\begin{proposition}\label{prop:bipartite-classical}
    Suppose $G$ is bipartite. $\Mor^+(H,G)\neq 0$ if and only if $\mor(H,G)\neq\varnothing$.
\end{proposition}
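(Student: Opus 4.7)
The plan is to obtain this proposition essentially for free by combining the two immediately preceding lemmas, which together give a combinatorial characterisation of when $\Mor^+(H,G)$ is nonzero for bipartite $G$. The proposition can be viewed as a repackaging of those lemmas that emphasises the equivalence with classical homomorphism existence.

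First I would dispatch the easy direction. If there exists a classical graph homomorphism $f : H \to G$, then the assignment $p_{xu} \mapsto \delta_{u,f(x)}$ preserves all defining relations of $\Mor^+(H,G)$: the $\{p_{xu}\}_{u \in G}$ become the standard basis of $1$-dimensional projections summing to $1$, and the forbidden products $p_{au}p_{bv}$ for $a \sim_H b$ and $u \nsim_G v$ vanish because $f$ is a homomorphism. This gives a $\ast$-character $\Mor^+(H,G) \to \mathbb{C}$, which in particular shows $\Mor^+(H,G) \neq 0$ (this is also the content of the $d$-morphism case of \cref{lem:perfect-strats}).

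For the forward direction I would split into two cases on whether $G$ has edges. If $E(G) = \varnothing$, then by \cref{lem:bipartite-empty} we must have $E(H) = \varnothing$, so any constant map $H \to G$ is a homomorphism, provided $V(G)$ is nonempty; if $V(G) = \varnothing$ and $V(H)$ is nonempty, then $\Mor^+(H,G)$ would have no generators for each $x \in V(H)$ summing to $1$, forcing $1 = 0$, contradicting our hypothesis, while if both $V(G)$ and $V(H)$ are empty the empty function works. If instead $E(G) \neq \varnothing$, then by \cref{lem:bipartite-nonempty} the graph $H$ must be bipartite, and fixing any edge $\{u,v\} \in E(G)$ and any bipartition $V(H) = A \cup B$ of $H$, the map sending $A \mapsto u$ and $B \mapsto v$ is a graph homomorphism $H \to G$. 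In either case $\mor(H,G) \neq \varnothing$.

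There is no substantive obstacle here: all the real work has already been done in \cref{lem:bipartite-empty} and \cref{lem:bipartite-nonempty}, where the key ingredient was \cref{lem:we-use-this-one-all-the-time} applied to parity arguments about walks in bipartite graphs. The only mild care needed is in the trivial edge cases around empty vertex or edge sets.
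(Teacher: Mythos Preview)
Your proposal is correct and follows essentially the same approach as the paper: split on whether $E(G)$ is empty, invoke \cref{lem:bipartite-empty} or \cref{lem:bipartite-nonempty} accordingly, and handle the converse via the character induced by a classical homomorphism. You are in fact slightly more explicit than the paper in spelling out the homomorphisms in each case and in treating the empty vertex/edge edge cases.
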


\begin{proof}
    Suppose $\Mor^+(H,G)\neq 0$. In the case $E(G)=\varnothing$, then $E(G)=\varnothing$ by \cref{lem:bipartite-empty}, so $\mor(H,G)\neq\varnothing$. In the other case $E(G)\neq\varnothing$, then $H$ is bipartite by \cref{lem:bipartite-nonempty}, so $\mor(H,G)\neq\varnothing$ again. Conversely, if $\mor(H,G)\neq\varnothing$, then there is a nontrivial representation of $\Mor^+(H,G)$.
\end{proof}

\begin{proof}[Proof of \cref{thm:bipartite-easy}]
    Let $H$ be a graph. By \cref{prop:bipartite-classical}, $H\in Y(\CSP_a(G)_{1,s}^\ast)$ iff $H\in Y(\CSP(G))$. Therefore, the polynomial-time algorithm used to decide $\CSP(G)$ also decides $\CSP_a(G)_{1,s}^\ast$. The same holds for $c-c$ and $c-v$. 
\end{proof}

\end{document}